\documentclass[11pt]{article}

\usepackage[margin=1in]{geometry}
\usepackage{amsmath,amssymb,amsthm}
\usepackage{mathtools}
\usepackage{algorithm}
\usepackage[noend]{algpseudocode}
\usepackage{array}
\usepackage{enumitem}
\usepackage{microtype}

\usepackage{thmtools}
\usepackage{thm-restate}
\usepackage{url}
\usepackage[colorlinks=true,urlcolor=blue!70!black,linkcolor=blue!70!black,citecolor=purple!80!black]{hyperref}

\usepackage[capitalise,nameinlink]{cleveref}
\usepackage{booktabs}
\usepackage{pifont}
\usepackage{soul}

\newtheorem{theorem}{Theorem}
\newtheorem{lemma}[theorem]{Lemma}
\newtheorem{proposition}[theorem]{Proposition}

\theoremstyle{definition}
\newtheorem{definition}[theorem]{Definition}

\declaretheorem[name=Theorem,numbered=no]{theorem*}

\algnewcommand{\Input}{\item[\textbf{Input:}]}

\usepackage{xspace}
\newcommand{\OPT}{\operatorname{OPT}}
\newcommand{\OPTR}{\operatorname{OPT}^{\text{rob}}}

\newcommand{\tsAlg}{\textsc{ThresholdSplit}\xspace}
\newcommand{\btiAlg}{\textsc{BudgetToInterval}\xspace}
\newcommand{\bpAlg}{\textsc{BlockPacking}\xspace}

\usepackage{titlesec}
\titlespacing*{\paragraph}{0pt}{1ex plus 0.1ex}{0.8em}
\usepackage[
 backend=biber,
 style=alphabetic,
 citetracker,
 hyperref=auto,
  maxcitenames=5,  sortcites,        sorting=nyt,
 maxbibnames=12,   date=year,       isbn=false,      url=false,       doi=false,       eprint=false,   ]{biblatex}
\newbibmacro{string+doiurlisbn}[1]{  \iffieldundef{doi}{    \iffieldundef{url}{      #1
    }{      \href{\thefield{url}}{#1}    }  }{    \href{http://dx.doi.org/\thefield{doi}}{#1}  }}

\DeclareFieldFormat[article,inbook,incollection,inproceedings,patent,thesis,unpublished,misc]
  {title}{\usebibmacro{string+doiurlisbn}{#1}}

\usepackage{xcolor}
\usepackage{pifont}
\usepackage[colorinlistoftodos,prependcaption,textsize=scriptsize]{todonotes}

\newcommand{\cmark}{{\color{green!70!black} \Large\ding{51}}}\newcommand{\xmark}{{\color{red}\Large\ding{55}}}
\title{
A Consistency--Robustness Framework for Robust Optimization: \\
Integrating Predictions into Robust Scheduling}

\author{
	Yasser Alghouass\footnote{Columbia University, New York, USA. \texttt{\{ya2581, eb3224, vg2277\}@columbia.edu}}
    \and Eric Balkanski\footnotemark[1]
	\and Vineet Goyal\footnotemark[1]
    \and Nicole Megow\footnote{University of Bremen, Bremen, Germany. \texttt{nicole.megow@uni-bremen.de}}
}

\date{}

\begin{document}
\maketitle

\begin{abstract}
Robust optimization protects against uncertainty by optimizing for the worst case over a prescribed uncertainty set. This protection can be overly conservative when forecasts, historical data, or learned predictions indicate a more likely scenario. We introduce a framework for robust optimization with predictions. The input consists of an uncertainty set together with a distinguished predicted scenario, and the goal is to compute a single solution that is both consistent, meaning near-optimal for the predicted scenario, and robust, meaning competitive with the classical min--max robust optimum. Unlike in standard learning-augmented algorithms, the prediction does not merely estimate the realized input; it creates a separate benchmark, the predicted optimum, which must be balanced against the min--max robust optimum.

We study this framework for makespan scheduling with uncertain processing times and give a structural classification across standard uncertainty models and machine environments. For interval uncertainty, we obtain a smooth $(1+1/\lambda,1+\lambda)$ consistency--robustness tradeoff for restricted-assignment and related machines. Furthermore, we prove that unrelated machines admit no constant tradeoff. For budgeted uncertainty, we obtain a $(1+1/\lambda,2+\lambda)$ tradeoff for restricted assignment. Our analysis is based on a duality-based reduction to an interval-like upper envelope. We complement this with a lower bound showing that related machines admit no constant tradeoff even when only one job may deviate. For arbitrary uncertainty sets, we obtain constant tradeoffs for identical machines via a support-function block construction, and prove impossibility for restricted assignment.

Our results show that the possibility of combining consistency and robustness in robust scheduling depends critically on the interaction between the uncertainty model and the machine environment.
\end{abstract}

\thispagestyle{empty} \clearpage
\setcounter{page}{1}

\section{Introduction}
\label{sec:intro}

Optimization under uncertainty is a central challenge in algorithm design. Robust optimization is one of the foundational frameworks for addressing this challenge: it replaces a single input instance with an uncertainty set of possible scenarios and seeks a solution that optimizes the worst-case performance over all of them. This paradigm has been highly influential because it provides deterministic guarantees and often leads to tractable models even when the uncertainty set is specified implicitly; see Kouvelis and Yu~\cite{KouvelisYu1997} and Ben-Tal, El Ghaoui,~and~Nemirovski~\cite{Ben-talGN2009-book-robust}.

At the same time, the classical robust-optimization objective can be overly conservative, as it optimizes for a worst-case scenario that may be highly unlikely. In settings such as scheduling, routing, or resource allocation, uncertainty sets are often chosen to cover rare disruptions or extreme demand patterns, such as delayed jobs, traffic spikes, resource failures, or unusually high demand, while historical data, forecasts, or machine-learned predictions may indicate a much more plausible outcome. This tension motivates the question we study in this paper: can one incorporate predictions into robust optimization so as to obtain solutions that perform well when the prediction is accurate, while retaining provable robustness against the entire uncertainty set?

A natural way to formalize the use of such side information is through the emerging area of learning-augmented algorithms, also known as algorithms with predictions~\cite{MitzenmacherV22,LykourisV21}. This line of work studies algorithms that are given auxiliary predictions about the input and aims to combine two guarantees: \emph{consistency}, meaning near-optimal performance when the prediction is accurate, and \emph{robustness}, meaning worst-case performance even when the prediction is arbitrarily wrong. While this idea already appeared conceptually in earlier work of Mahdian, Nazerzadeh, and Saberi~\cite{DBLP:conf/sigecom/MahdianNS07,DBLP:journals/talg/MahdianNS12}, the work of Lykouris and Vassilvitskii~\cite{LykourisV21} initiated a rapidly growing body of research. Since then, the learning-augmented paradigm has been applied to a wide range of algorithmic problems, most prominently in online optimization, but also in offline and approximation algorithms, dynamic algorithms and data structures, mechanism design, and streaming; see the curated repository~\cite{alps}.

In most of this literature, consistency and robustness are two guarantees for the same underlying input problem: the prediction may help to approach the optimum of the realized instance, while robustness protects against prediction error. In robust optimization, the role of a prediction is different. A prediction identifies one plausible scenario and hence one natural benchmark, the optimum for that scenario. The uncertainty set, however, defines a second benchmark, the min–max robust optimum.
Thus, incorporating predictions into robust optimization creates a bicriteria tradeoff between performance for the predicted scenario and worst-case performance over~the~uncertainty~set.

Motivated by this distinction, we initiate the study of robust optimization with predictions as a framework for incorporating learned or forecasted side information into robust optimization. The input consists of an uncertainty set $\mathcal U$ of possible scenarios together with a distinguished predicted scenario $\hat q \in \mathcal U$. The goal is to compute a single solution that performs well for the predicted scenario while remaining robust over the entire uncertainty set. We measure consistency against the optimum for the predicted scenario, and robustness against the classical min--max robust optimum over $\mathcal U$.
This perspective is distinct from data-driven and learning-based robust optimization~\cite{TulabandhulaR14,DBLP:journals/mp/BertsimasGK18,DBLP:journals/mansci/BertsimasK20}, where data or learned models are typically used to construct, estimate, or calibrate the uncertainty set. Here, the uncertainty set is part of the input, and learning enters only through a distinguished predicted scenario.

We develop this framework for makespan scheduling with uncertain processing times, a central problem class in optimization and algorithm design. Classical robust scheduling asks for a schedule with good worst-case performance over an uncertainty set, typically under min--max or min--max regret objectives; see, e.g.,~\cite{DanielsKouvelis1995,KouvelisYu1997,DBLP:journals/dam/BougeretPP19,DBLP:journals/mst/BougeretJPR21} and the references therein. These works do not distinguish a predicted scenario from the remaining uncertainty. In our setting, the challenge is simultaneous attainability: understanding which consistency--robustness tradeoffs are achievable, and how they depend on the uncertainty model and  machine environment. Our guiding question is:
\[
\textit{What consistency--robustness tradeoffs are achievable for robust scheduling with predictions?}
\]

Our results give a structural classification of this question for makespan scheduling. For each uncertainty model and machine environment considered, we either give explicit consistency--robustness tradeoffs or show that no instance-independent constant tradeoff is possible. We first study this question independently of computational restrictions and  return to  efficiency  later.

\subsection{Our results}
\label{sec:results-summary}

In robust makespan scheduling,  jobs have uncertain processing times $q \in \mathcal U$, and the goal is to find a schedule $\mathcal S$ that assigns the $n$ jobs to $m$ machines such that the worst-case makespan is minimized.
 We study four canonical  machine environments, which differ in the structural assumptions they impose on the processing times $q$: identical, restricted-assignment, related, and unrelated machines. The uncertainty models we consider are three classical models from robust optimization that capture increasingly expressive descriptions of uncertainty: interval, budgeted, and general uncertainty.

Interval uncertainty, also known as box uncertainty,  dates back to the work of Soyster~\cite{soyster1973convex} and has been widely used in robust scheduling; see, e.g., Kouvelis and Yu~\cite{KouvelisYu1997}. In this model, studied in  \Cref{sec:interval-uncertainty}, there are lower and upper bounds $q^-$ and $q^+$ on the processing times and $\mathcal U = \{q : q^- \leq q \leq q^+\}$. Our main result for interval uncertainty is a classification of which  machine environments admit an $O(1)$-consistent and $O(1)$-robust algorithm.

\begin{theorem*}[\Cref{thm:threshold-splitting-rr,thm:interval-unrelated-no-constant-tradeoff}]
    Identical, restricted-assignment, and related machines admit, for every $\lambda>0$,  a deterministic algorithm with consistency $1+1/\lambda$ and robustness $1+\lambda$ for interval uncertainty. Unrelated machines admit no algorithm that is both $O(1)$-consistent and $O(1)$-robust.
\end{theorem*}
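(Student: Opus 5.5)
The plan is to exploit monotonicity of interval uncertainty. Under interval uncertainty the makespan of a fixed schedule is nondecreasing in every processing time, and $\hat q\le q^+$ componentwise. Hence the worst case over $\mathcal U$ is always attained at $q^+$, and $\OPTR$ is simply the optimal makespan for the deterministic instance $q^+$. The problem thus reduces to finding one assignment that is simultaneously good for two deterministic processing-time vectors, $\hat q$ and $q^+$.

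\textbf{Threshold splitting.} Let $C=\OPT(\hat q)$ and $R=\OPTR$, and fix optimal assignments $\mathcal S_1$ for $\hat q$ and $\mathcal S_2$ for $q^+$. For each job $j$ we compare its ratio $q^+_j/\hat q_j$ with the threshold $\lambda R/C$:
\begin{itemize}[nosep]
\item if $q^+_j\le \lambda (R/C)\,\hat q_j$, job $j$ goes where $\mathcal S_1$ puts it;
\item otherwise, job $j$ goes where $\mathcal S_2$ puts it.
\end{itemize}
The choice is per job, so in restricted assignment the result is still feasible, since each job lands on a machine eligible in one of two feasible schedules.

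\textbf{Machine-by-machine bounds.} Fix a machine $i$ and bound its load under each scenario.
\begin{itemize}[nosep]
\item Under $q^+$: the $\mathcal S_1$-jobs on $i$ contribute at most $\lambda(R/C)$ times their $\hat q$-load. That $\hat q$-load is at most $C$, so this part is at most $\lambda R$. The $\mathcal S_2$-jobs contribute at most $R$. The total is at most $(1+\lambda)R$.
\item Under $\hat q$: the $\mathcal S_1$-jobs contribute at most $C$. The $\mathcal S_2$-jobs satisfy $\hat q_j< q^+_j\,C/(\lambda R)$, so they contribute less than $C/\lambda$. The total is at most $(1+1/\lambda)C$.
\end{itemize}

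\textbf{Related machines.} The same computation works because the load of machine $i$ is $\sum_j q_j/s_i$. The factor $1/s_i$ appears identically in both scenarios, so the ratio $q^+_j/\hat q_j$ does not depend on the machine. The bound then goes through after dividing by $s_i$, and identical machines are a special case.

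The one point needing care is that the threshold rule only works because the ratio $q^+_{ij}/\hat q_{ij}$ is independent of $i$. This is exactly what fails for unrelated machines, which motivates the lower bound below. For efficiency one can replace $\mathcal S_1,\mathcal S_2$ by $\alpha$-approximate schedules, losing a factor $\alpha$ in both guarantees. The statement as given is information-theoretic, however, so exact optima suffice.

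\textbf{Lower bound for unrelated machines.} A single job on two machines suffices, with the following parameters for $M\ge1$:
\begin{itemize}[nosep]
\item machine 1: $q^-_{1}=\hat q_{1}=1/M$ and $q^+_{1}=M$;
\item machine 2: $q^-_{2}=\hat q_{2}=q^+_{2}=1$.
\end{itemize}
Then $\OPT(\hat q)=1/M$, attained by machine 1, and $\OPTR=1$, attained by machine 2. Any algorithm must place the job on one of the two machines:
\begin{itemize}[nosep]
\item on machine 1, its robustness ratio is $M$;
\item on machine 2, its consistency ratio is $M$.
\end{itemize}
Letting $M\to\infty$ rules out any algorithm that is both $O(1)$-consistent and $O(1)$-robust.
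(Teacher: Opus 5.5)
Your proof is correct and is essentially the paper's. It uses the same per-job comparison of $p^+_j/\OPT(q^+)$ against $\lambda\hat p_j/\OPT(\hat q)$ and the same two-term machine-by-machine bound; for unrelated machines, your instance is the paper's one-job, two-machine instance with $A=M$, $B=M^2$, rescaled by $1/M$.

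There are two small differences worth noting:
\begin{itemize}
\item You restrict the two full optimal schedules instead of re-optimizing the sub-instances. This gives the same exact bounds. When the approximate schedules' makespans are used in the threshold, it loses only a factor $\gamma$, giving $(\gamma(1+1/\lambda),\gamma(1+\lambda))$ instead of the paper's $(\gamma(1+\gamma/\lambda),\gamma(1+\gamma\lambda))$.
\item You argue only against deterministic algorithms. In your instance the expected consistency and robustness ratios sum to exactly $M+1$, so randomized algorithms are ruled out immediately, as in the paper's formal theorem.
\end{itemize}
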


We also show that for every integer  $1 \leq \lambda \leq m-1$, there is no randomized $(1+1/\lambda)$-consistent algorithm that achieves  robustness strictly better than $\lambda/4$, even for identical machines. Thus, the linear dependence on $\lambda$ in the robustness of $(1+1/\lambda)$-consistent algorithms is asymptotically tight.

 Budgeted uncertainty, often referred to as $\Gamma$-uncertainty, was introduced by Bertsimas and Sim~\cite{DBLP:journals/ior/BertsimasS04} and is well-studied in robust scheduling; see, e.g., \cite{lu2014robust,tadayon2015algorithms, DBLP:journals/dam/BougeretPP19, DBLP:journals/mst/BougeretJPR21}. This model, studied in \Cref{sec:budgeted-uncertainty}, generalizes interval uncertainty by allowing the adversary to modify the processing times of at most $\Gamma$ jobs, for some  budget $\Gamma$, instead of all of the jobs. Our main result for budgeted uncertainty is an analogous classification of the machine environments, but with related machines that now no longer admit an $O(1)$-consistent and $O(1)$-robust algorithm.

\begin{theorem*}[\Cref{thm:budgeted-cardinality-restricted,thm:budgeted-related-no-constant-tradeoff}]
    For budgeted uncertainty, for any $\lambda > 0$, there is a deterministic algorithm with consistency $1+1/\lambda$ and robustness $2 + \lambda$ for identical and restricted-assignment machines. For related and unrelated machines, no algorithm is both $O(1)$-consistent and $O(1)$-robust.
\end{theorem*}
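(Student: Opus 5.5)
The statement has two parts: an algorithm for identical and restricted-assignment machines, and impossibility for related and unrelated machines. The impossibility for unrelated machines is immediate: with $\Gamma=n$ budgeted uncertainty coincides with interval uncertainty, so \Cref{thm:interval-unrelated-no-constant-tradeoff} applies (and in any case related machines are a special case of unrelated ones). So the work is the algorithm and the lower bound for related machines.

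\paragraph{Upper bound via a dual envelope.} Write $d_j=q^+_j-q^-_j$ for the deviation of job $j$. For a fixed schedule, the worst-case load of machine $i$ is $\sum_{j\in S_i}q^-_j$ plus the sum of the $\Gamma$ largest deviations among the jobs of $S_i$. By LP duality this sum of the $\Gamma$ largest deviations equals $\min_{\theta\ge 0}\big(\Gamma\theta+\sum_{j\in S_i}(d_j-\theta)^+\big)$. Hence, for every fixed $\theta$, the robust makespan of any schedule is at most
\[
\Gamma\theta+\max_i\sum_{j\in S_i}\big(q^-_j+(d_j-\theta)^+\big).
\]
This is the makespan of an interval instance with upper bounds $p^\theta_j=q^-_j+(d_j-\theta)^+$, shifted by the additive term $\Gamma\theta$.

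\paragraph{Algorithm.} Guess a threshold $\theta$ from the $O(n)$ candidates $\{0\}\cup\{d_j\}$. Form the interval instance with lower bounds $q^-$ and upper bounds $\max(p^\theta_j,\hat q_j)$; this keeps $\hat q$ inside the box, so consistency is measured against the same $\OPT(\hat q)$. Run the algorithm of \Cref{thm:threshold-splitting-rr} with parameter $\lambda$ on this instance. Consistency $1+1/\lambda$ is inherited directly. For robustness, the robust makespan is at most $(1+\lambda)\cdot\OPT_{\mathrm{env}}+\Gamma\theta$, where $\OPT_{\mathrm{env}}$ is the optimum of the envelope instance. It therefore suffices to choose $\theta$ so that
\[
\OPT_{\mathrm{env}}\le \OPTR \qquad\text{and}\qquad \Gamma\theta\le \OPTR .
\]

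\paragraph{Main obstacle: choosing $\theta$.} With $\theta=0$ the additive term vanishes but $\OPT_{\mathrm{env}}$ can be as large as the full interval optimum, far above $\OPTR$. With $\theta$ large, $\OPT_{\mathrm{env}}$ is small but $\Gamma\theta$ may exceed $\OPTR$.
\begin{itemize}
\item First try: let $\theta^*$ be the smallest candidate such that some machine of the robust-optimal schedule $S^*$ contains at least $\Gamma$ jobs with $d_j\ge\theta^*$. The adversary can raise all of these jobs, which certifies $\Gamma\theta^*\le\OPTR$.
\item Then show that, on every machine of $S^*$, the jobs with $d_j>\theta^*$ are fewer than $\Gamma$. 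The adversary can then deviate all of them simultaneously, so the envelope load of each machine of $S^*$ is at most its worst-case load. This gives $\OPT_{\mathrm{env}}\le\OPTR$.
\item If a single global threshold fails because the count of large deviations varies across machines, fall back to per-machine thresholds $\theta_i$. In that case the envelope becomes machine-dependent, which restricted assignment can encode.
\end{itemize}

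\paragraph{Lower bound for related machines with $\Gamma=1$.} Take one fast machine of speed $s$ and many unit-speed machines, with $n$ jobs whose predicted and lower times are $1$ and whose upper times are $L$.
\begin{itemize}
\item Choose the parameters so that consistency forces most jobs onto the fast machine: spreading over slow machines gives predicted makespan $1$, versus $\OPT(\hat q)\approx n/s$.
\item Choose them so that $\OPTR$ is achieved by an arrangement exploiting that only one job deviates. For example, put $L$-heavy jobs alone on slow machines, so the robust optimum is about $\max(L,n/s)$. Any near-consistent schedule instead has robust makespan about $(n-1+L)/s$ on the fast machine, or suffers $L$ elsewhere.
\item Balance $n$, $s$ and $L$ (e.g.\ $L\approx n/s$ with $s$ tuned) so that the ratio between the two benchmarks grows unboundedly. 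Checking that every schedule is bad on one of the two benchmarks is the delicate calculation.
\end{itemize}
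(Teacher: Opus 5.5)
Your treatment of unrelated machines is fine. The lower bound for related machines, however, has the tension backwards, and in the regime you describe no separation exists. Suppose $\alpha$-consistency pushes the jobs onto the fast machine. Then that all-on-fast schedule has robust makespan $(n-1+L)/s\le n/s+L\le 2\max\{L,n/s\}$, using $s\ge 1$. So it is within a factor $2$ of the spread-out arrangement you propose as the robust optimum, and no tuning of $n,s,L$ makes the ratio grow. Moreover, all jobs share the interval $[1,L]$ and the worst case is taken after the schedule is fixed, so there are no designated ``$L$-heavy'' jobs to isolate.

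The missing idea is the reverse assignment of roles. With $\Gamma=1$, the fast machine is the \emph{robust}-favorable one, because the single deviation $L$ is divided by its speed; the slow machines are what consistency needs for aggregate throughput. The paper takes one machine of speed $k$ and $\ell\ge 2\alpha k$ unit-speed machines, with $n=k+\ell$ jobs, all having $(p^-_j,\hat p_j,p^+_j)=(1,1,L)$.
\begin{itemize}
\item Since total size equals total speed, $\OPT(\hat q)=1$.
\item All-on-fast has predicted makespan $n/k=1+\ell/k>\alpha$, so every $\alpha$-consistent schedule puts some job on a unit-speed machine. Deviating that job gives robust makespan at least $L$.
\item Putting everything on the fast machine gives $\OPTR\le(n+L-1)/k$.
\end{itemize}
The ratio $kL/(n+L-1)$ tends to $k$ as $L\to\infty$, so choosing $k>\beta$ finishes the proof. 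For randomized algorithms, take $k>2\beta$: an averaging argument shows some fixed job lands on a slow machine with probability at least $1/2$.

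The upper bound follows the paper's route: the dual bound $\Gamma\theta+\sum_{j\in S_i}p^\theta_j$, followed by \tsAlg. Your choice of cutoff is different but valid, once two points are repaired.

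First, ``smallest candidate'' must be ``largest''. The condition is monotone decreasing in $\theta$ and $0$ is a candidate, so as written $\theta^*=0$ whenever some machine of $S^*$ holds $\Gamma$ jobs, and your second bullet then fails. With the largest such candidate (and $\theta^*=0$ if none exists), the next candidate above $\theta^*$ violates the condition. Hence every machine of $S^*$ has fewer than $\Gamma$ jobs with $d_j>\theta^*$, while $\Gamma\theta^*\le\OPTR$ as you argue. The paper instead sets $\tau=\OPTR/\Gamma$, and gets at most $\Gamma$ jobs above $\tau$ per machine directly, since $\Gamma$ such jobs would push a load beyond $\OPTR$. Your $\theta^*$ satisfies $\theta^*\le\tau$. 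Because your candidate set is finite and consistency holds for every $\theta$, you could even skip computing $S^*$: try all candidates and keep the schedule with the smallest robust makespan. The per-machine fallback is unnecessary, and restricted assignment cannot encode machine-dependent sizes anyway.

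Second, your one-line justification of $\OPT_{\mathrm{env}}\le\OPTR$ covers the envelope $p^\theta$ but not $\max\{p^\theta_j,\hat q_j\}$. You can either drop the max or add an exchange argument.
\begin{itemize}
\item Dropping the max is safe because the proof of \Cref{thm:threshold-splitting-rr} only compares two scenarios and never uses $\hat q\le q^+$. This is how the paper applies it to $(\hat q,q^{+,\tau})$.
\item The exchange argument works on each machine of $S^*$. The jobs raised by $\hat q$ (at most $\Gamma$) together with the jobs above the cutoff (fewer than $\Gamma$) are dominated by a single budget-$\Gamma$ scenario. Each job below the cutoff that must be dropped has deviation at most $\theta$. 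There are at most as many such jobs as there are jobs above the cutoff not raised by $\hat q$, and each of those is already discounted by $\theta$.
\end{itemize}
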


We consider two extensions of budgeted uncertainty for which our results extend up to constant factor losses: oblivious budgets where the budget $\Gamma$ is unknown a priori in \Cref{sec:oblivious-cardinality}  and  weighted budgets where each job contributes a different weight towards the budget in \Cref{app:budgeted-additional-models}.

In the arbitrary uncertainty model, studied in~\Cref{sec:general-uncertainty}, $\mathcal U$ is an arbitrary set of scenarios. The classification we obtain is similar to that for budgeted uncertainty for identical machines. However, restricted-assignment machines no longer admit an $O(1)$-consistent and $O(1)$-robust algorithm in the arbitrary uncertainty model. We note that the linear dependence on $\lambda$ in the robustness is  tight for both budgeted and arbitrary uncertainty since the previously mentioned lower bound  for the special case of interval uncertainty applies.

\begin{theorem*}[\Cref{thm:general-uncertainty-tradeoff,thm:general-restricted-no-constant-tradeoff}]
    For arbitrary uncertainty, for any $\lambda > 0$, there is a deterministic algorithm with consistency $1+2/\lambda$ and robustness $4 + 2\lambda$ for identical machines. For restricted-assignment, related, and unrelated machines, no algorithm is both $O(1)$-consistent and $O(1)$-robust.
\end{theorem*}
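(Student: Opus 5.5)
Two things need proving: the positive tradeoff for identical machines under arbitrary uncertainty, and the impossibility results for restricted assignment (which imply them for related and unrelated machines by the reduction below).

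\textbf{Positive result.} The tool is the support function $h_{\mathcal U}(x)=\max_{q\in\mathcal U}\sum_j x_j q_j$ for $x\in\{0,1\}^n$, which gives the worst-case load of a set of jobs. On identical machines the robust makespan of a schedule is the maximum of $h_{\mathcal U}$ over the machines' job sets, and $h_{\mathcal U}$ is monotone and subadditive. Let $R=\OPTR$ and $P=\OPT(\hat q)$. I would use two lower bounds for $R$:
\begin{itemize}
\item every single job $j$ satisfies $h_{\mathcal U}(e_j)\le R$;
\item any family of pairwise disjoint job sets, each with $h_{\mathcal U}>2R$, has fewer than $m$ members.
\end{itemize}
For the second bound: a set with $h_{\mathcal U}>2R$ cannot fit on one machine of the robust optimum, so it must intersect at least two machines' job sets. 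I would rather prove a cleaner packing statement. Take any partition of the jobs into blocks $B_1,\dots,B_k$ built greedily, each a minimal set with $h_{\mathcal U}(B_i)>\lambda R$. Then $k\le m/\lambda$ in a suitable averaging sense. The intended route goes through subadditivity and the robust optimum: each block splits across robust machines, so $\lambda R<\sum_{\text{machines}}h_{\mathcal U}(B_i\cap S_\ell)$.

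This averaging step is the main obstacle. Subadditivity alone does not bound the number of blocks, because $\sum_i h_{\mathcal U}(B_i)$ can exceed $mR$ arbitrarily. Since $\mathcal U$ is arbitrary, different blocks may be maximized by different scenarios. If the averaging fails, I would fall back on the following.
\begin{enumerate}
\item Take the predicted-optimal schedule $\sigma$ with makespan $P$.
\item On each machine of $\sigma$, cut its job sequence greedily into chunks with $h_{\mathcal U}\le 2R+\lambda R$ (possible since single jobs have $h_{\mathcal U}\le R$).
\item Redistribute chunks so that each machine receives at most two chunks from a consistency viewpoint.
\end{enumerate}
The constants $1+2/\lambda$ and $4+2\lambda$ suggest exactly this structure: loads of at most $2(2+\lambda)R$ robustly and $(1+2/\lambda)P$ predicted. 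The $1/\lambda$ factor would come from the extra chunks; their number is at most $\sum_\ell \hat q(S_\ell)/(\lambda P/2)$, controlled by a threshold rule that cuts a chunk once either its predicted load exceeds $\lambda P/2$ or its robust load exceeds $(2+\lambda)R$.

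\textbf{Negative result (restricted assignment).} I would build an instance with $m$ machines and blocks of jobs whose eligibility sets form a tree-like gadget. The prediction $\hat q$ makes the predicted optimum force jobs onto specific machines with predicted load $1$. The set $\mathcal U$ contains scenarios that place large weight exactly on the job sets that any near-consistent schedule is forced to co-locate, while a different schedule (bad for $\hat q$) has robust makespan $1$. Arbitrary $\mathcal U$ lets us include one scenario per forced co-location pattern. For example, take $k$ jobs eligible only on machine $1$ or on private machines, with $\hat q$ placing $\Theta(k)$ filler load on those private machines, and let $\mathcal U$ contain the all-ones vector on the dangerous jobs. Then consistency $c$ forces at least $k/c$ dangerous jobs onto machine $1$, giving robust cost $k/c$ against $\OPTR=O(1)$. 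Choosing $k\to\infty$ rules out constant pairs. Related and unrelated machines follow because restricted assignment is a special case of unrelated machines. For related machines I would mimic eligibility with very slow machines.
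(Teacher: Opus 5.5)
Both halves have genuine gaps.

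\textbf{Positive result.} Your tools are the right ones: monotonicity, subadditivity, $h_{\mathcal U}(\{j\})\le\OPTR$, and robust makespan $=\max_i h_{\mathcal U}(S_i)$. You also correctly see that no averaging bound on disjoint sets of large $h_{\mathcal U}$ holds. But your fallback fails for the same reason. Chunks cut by $h_{\mathcal U}$ on the machines of the \emph{predicted} schedule can be arbitrarily many, and their predicted loads are uncontrolled, so step~3 has nothing to work with. For instance, take $m$ identical machines and unit predicted jobs $j_{c,i}$ for $c,i\in[m]$. Add one scenario per $\sigma:[m]\to[m]$ that raises each $j_{c,\sigma(c)}$ to $1+L$ with $L\ge m$, together with $\hat q$. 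Keeping each color on its own machine gives $\OPTR\le m+L$. Now consider the predicted-optimal schedule that puts $\{j_{c,i}:c\in[m]\}$ on machine $i$. Every machine then has $h_{\mathcal U}=m+mL\ge\frac m2\OPTR$. By subadditivity, step~2 produces at least $m/(2(2+\lambda))$ chunks per machine, so step~3 cannot give every machine at most two chunks once $m>4(2+\lambda)$.

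The missing idea is to cut inside the machines $T_1,\dots,T_m$ of the \emph{min-max} schedule. Every subset of some $T_h$ has $h_{\mathcal U}\le\OPTR$ by monotonicity alone, whatever correlations $\mathcal U$ encodes. So blocks can be sized purely by predicted load, and the number of blocks a machine receives is then governed by its predicted load. The paper proceeds as follows:
\begin{enumerate}
\item It keeps jobs with $\hat p_j>\OPT(\hat q)/\lambda$ on their predicted-optimal machines. Machine $i$ holds at most $\lambda(1-y_i/\OPT(\hat q))$ of them, where $y_i$ is the predicted load of the remaining jobs there.
\item It cuts each $T_h\setminus\hat H$ into blocks of predicted load in $(\OPT(\hat q)/(2\lambda),3\OPT(\hat q)/(2\lambda)]$, plus one tiny block.
\item It packs the blocks by first crossing into the capacities $y_i$.
\end{enumerate}
Machine $i$ then has predicted load at most $(1+2/\lambda)\OPT(\hat q)$ and at most $2\lambda y_i/\OPT(\hat q)+4$ blocks. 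Subadditivity gives $h_{\mathcal U}(S_i)\le(\lambda+4+\lambda y_i/\OPT(\hat q))\OPTR\le(2\lambda+4)\OPTR$.

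\textbf{Negative result.} Your gadget does not work.
\begin{itemize}
\item Since $\hat q\in\mathcal U$, every schedule's robust makespan is at least its predicted makespan. So $\OPTR\ge\OPT(\hat q)$, and a schedule that is bad for $\hat q$ cannot have robust makespan $1$.
\item Moving a dangerous job to its own private machine raises only that machine's predicted load, and only by the job's own size. So consistency does not force $k/c$ dangerous jobs onto machine~$1$.
\item More fundamentally, no uncertainty set with a bounded number $s$ of scenarios can yield the lower bound, and yours uses essentially two. The pointwise maximum $q^{\max}$ satisfies $\OPT(q^{\max})\le s\,\OPTR$ (evaluate the min-max schedule). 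Hence \tsAlg on $(\hat q,q^{\max})$ is $(1+1/\lambda)$-consistent and $s(1+\lambda)$-robust.
\end{itemize}
The obstruction needs exponentially many correlated scenarios, as in the paper. Take $\ell\ge2\alpha k$ machines, $k$ of which are color machines, and unit jobs $j_{c,i}$ eligible on $\{c,i\}$. Include one scenario per $\sigma:[k]\to[\ell]$ that raises each $j_{c,\sigma(c)}$ by $L$. The worst-case load of a set is then its size plus $L$ times the number of colors it meets. Keeping colors on their color machines gives $\OPTR\le\ell+L$. In contrast, $\alpha$-consistency pushes at least $\ell-\alpha k$ jobs of each color onto distinct other machines. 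So some machine meets at least $k/2$ colors and has robust load at least $kL/2$; an averaging version of this handles randomized algorithms.

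Finally, eligibility cannot be simulated by speeds, since a slow machine is slow for every job. The paper instead obtains related machines from the $\Gamma=1$ budgeted instance: one machine of speed $k$, $\ell$ machines of speed $1$, and all jobs with $(p^-,\hat p,p^+)=(1,1,L)$. This is a special case of arbitrary uncertainty. Unrelated machines follow from the restricted-assignment case, as you say.
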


\vspace{-.05cm}

Our results are summarized in \Cref{tab:results-summary} and yield a diagonal pattern of positive and negative results. This classification highlights an unusual feature in scheduling: restricted machines are easier than related machines for budgeted uncertainty. This feature interestingly reverses the more usual pattern in scheduling, where related machines are often viewed as the more structured special case.

\newcolumntype{C}[1]{>{\centering\arraybackslash}m{#1}}
\newcolumntype{L}[1]{>{\raggedright\arraybackslash}m{#1}}

\begin{table}[t]
\centering
\scriptsize
\setlength{\tabcolsep}{3pt}
\renewcommand{\arraystretch}{0.8}

\begin{tabular}{|C{0.22\textwidth}|L{0.17\textwidth}|L{0.17\textwidth}|L{0.17\textwidth}|L{0.17\textwidth}|}
\hline
\textbf{Machine environment} \(\rightarrow\) \newline
\textbf{Uncertainty model} \(\downarrow\)
& \multicolumn{1}{C{0.17\textwidth}|}{Identical}
& \multicolumn{1}{C{0.17\textwidth}|}{Restricted}
& \multicolumn{1}{C{0.17\textwidth}|}{Related}
& \multicolumn{1}{C{0.17\textwidth}|}{Unrelated} \\
\hline
Interval
& \cmark (\Cref{thm:threshold-splitting-rr})
& \cmark (\Cref{thm:threshold-splitting-rr})
& \cmark (\Cref{thm:threshold-splitting-rr})
& \xmark\ (\Cref{thm:interval-unrelated-no-constant-tradeoff}) \\
\hline
Budgeted
& \cmark (\Cref{thm:budgeted-cardinality-restricted})
& \cmark (\Cref{thm:budgeted-cardinality-restricted})
& \xmark\ (\Cref{thm:budgeted-related-no-constant-tradeoff})
& \xmark\ (\Cref{thm:interval-unrelated-no-constant-tradeoff}) \\
\hline
Arbitrary
& \cmark (\Cref{thm:general-uncertainty-tradeoff})
& \xmark\ (\Cref{thm:general-restricted-no-constant-tradeoff})
& \xmark\ (\Cref{thm:budgeted-related-no-constant-tradeoff})
& \xmark\ (\Cref{thm:interval-unrelated-no-constant-tradeoff}) \\
\hline
\end{tabular}
\vspace{-0.8em}
\caption{\small Classification of robust makespan scheduling with predictions according to whether a machine environment and uncertainty model
admit $O(1)$-consistent and $O(1)$-robust algorithms. }
\label{tab:results-summary}
\vspace{-0.8em}
\end{table}

\textit{Prediction error.} The prediction error $\eta$ is defined as the maximum multiplicative discrepancy between a predicted processing time and the corresponding true processing time. We show that any $\alpha$-consistent algorithm also achieves an $\eta^2\alpha$-approximation to the optimal makespan for the true scenario whenever the prediction error is $\eta$. Therefore, the performance guarantee of any algorithm degrades smoothly from its consistency guarantee as the prediction error increases. We therefore focus on consistency–robustness tradeoffs and obtain this smoothness guarantee for free.

\textit{Efficient algorithms.} We state our algorithms with exact optimization subroutines in order to separate the consistency-robustness tradeoff from the computational complexity of the problem. The same constructions have polynomial time versions whenever the corresponding optimization problems admit polynomial time approximation algorithms. Moreover, if we have  black-box access to  polynomial time  algorithms for  makespan minimization and robust makespan minimization that achieve a $\gamma$ and $\gamma_{\text{rob}}$ approximation, respectively, then, we can obtain polynomial time implementations of each of our algorithms up to some multiplicative losses dependent on  $\gamma$ and $\gamma_{\text{rob}}$. We present these results as corollaries of our main consistency-robustness tradeoff results.

\vspace{-.05cm}
\subsection{Technical overview}
\vspace{-.03cm}

We highlight the main technical ideas underlying our positive results and the structural obstructions behind our lower bounds. Across the different uncertainty models, the common challenge is that the prediction creates a second benchmark: a schedule should be near-optimal for the predicted scenario while also remaining competitive with the classical min--max robust optimum over all scenarios.

For interval uncertainty, the worst scenario is the upper-endpoint vector $q^+$ independent of the schedule. Thus, interval uncertainty collapses to a two-scenario comparison: consistency is measured with respect to  predicted scenario $\hat q$, while robustness is measured with respect to $q^+$. The central idea is a threshold-splitting principle. We compare each job by its contribution to the predicted optimum and to the robust optimum, after normalizing by the respective optimum values. Jobs whose  contribution to the robust optimum is large relative to their predicted contribution are scheduled according to $q^+$, while the rest follow $\hat q$. The two sub-instance schedules are then merged machine by machine. Sub-instance monotonicity bounds the parts scheduled for their own objectives, and the threshold inequality converts the remaining parts between the two objectives at a loss of $\lambda$ or $1/\lambda$. This yields the tradeoff $(1+1/\lambda,1+\lambda)$. The argument is insensitive to the speed spread for related machines and preserves eligibility constraints in restricted assignment.

Budgeted uncertainty is more subtle because the worst case is no longer described by a single fixed vector: the adversary may choose which jobs deviate, and this choice depends on the chosen schedule. A main technical contribution is a duality-based reduction from budgeted uncertainty to an interval-like upper envelope. For any fixed set of jobs, the worst cardinality-budgeted load is the lower load plus the largest deviations that fit within the budget. Dualizing this load maximization introduces a cutoff $\theta$: each deviation is either charged to a global budget term or contributes only its residual above the cutoff. Writing $p(q^-)_j$ and $p(q^+)_j$ as the lower and upper sizes of job $j$ in scenario $q$, this yields effective sizes $p(q^-)_j+(p(q^+)_j-p(q^-)_j -\theta)_+$. Choosing $\theta$ from the min--max robust optimum makes the reduction compatible with the robust benchmark: the effective interval instance has optimum value at most the original min--max optimum, and the budget term contributes only one additional copy of this optimum when translating the interval guarantee back to the budgeted uncertainty set. This gives robustness $\lambda+2$ while preserving consistency $1+1/\lambda$. The same dualization technique also yields the weighted fractional variant, while the weighted binary variant follows by comparison with the fractional relaxation.

For arbitrary uncertainty sets, we consider an arbitrary set of scenarios (specified implicitly or explicitly). For any set of jobs $J$, the key abstraction is the one-machine support function
$\Psi_{\mathcal U}(J)=\sup_{q\in\mathcal U}\sum_{j\in J}p(q)_j$. This viewpoint encodes all scenarios in a single set function: the algorithm uses only monotonicity and subadditivity of $\Psi_{\mathcal U}$. In addition, we observe that for identical machines $\sup_{q\in\mathcal U}C_{\text{max}}(\mathcal S,q)=\max_i \Psi_{\mathcal U}(S_i)$. For identical machines, these weak properties are already sufficient to obtain the consistency--robustness tradeoff. The algorithm starts from a prediction-optimal schedule and a min--max robust schedule. Jobs that are large in the prediction stay on their prediction-optimal machines. The remaining jobs are cut into small predicted-load blocks inside the machines of the robust schedule; hence, each block has a support value at most the robust optimum. These blocks are then packed into the residual predicted-load capacities of the prediction-optimal schedule by a greedy fill rule, which exceeds each target capacity by at most one block. In this way, the construction keeps the predicted load almost unchanged while controlling the number of blocks on each machine, yielding the tradeoff~$(1+2/\lambda,2\lambda+4)$.

The lower bounds show that the positive results rest on structural properties of the models rather than on artifacts of the analysis. Threshold splitting cannot extend to unrelated machines, where prediction and upper-endpoint processing times may favor different machines for the same job. The budget-to-interval reduction relies on restricted-assignment structure; for related machines, a single deviating job on a slow machine rules out speed-spread-independent tradeoffs. The support-function block construction uses the symmetry of identical machines; with restricted assignment, arbitrary uncertainty can encode correlations that make the robust load of every consistent schedule arbitrarily large. These obstructions explain the diagonal pattern of the results: stronger uncertainty requires more symmetric machines, while more heterogeneous machines require more structured uncertainty.

\subsection{Further related work}

{\em Classical robust scheduling} studies schedules that hedge against uncertainty in processing times, typically under min--max or min--max regret objectives. For interval uncertainty, min--max makespan minimization is often structurally simple in monotone machine environments, since the worst case is obtained by setting all processing times to their upper endpoints; much of the nontrivial interval-uncertainty scheduling literature therefore concerns min--max regret variants, see the survey of Kasperski and Zieli{\'n}ski~\cite{KasperskiZielinski2014} and, e.g.,~\cite{DBLP:journals/orl/KasperskiZ08,DBLP:journals/orl/DrwalR16,DBLP:journals/anor/SiepakJ14}.

The closest min--max robust makespan results to ours concern budgeted uncertainty in the sense of Bertsimas and Sim~\cite{DBLP:journals/mp/BertsimasS03,DBLP:journals/ior/BertsimasS04}, where a budget parameter $\Gamma$ limits the number of jobs whose processing times may deviate from their nominal values. Bougeret, Pessoa, and Poss~\cite{DBLP:journals/dam/BougeretPP19} and Bougeret, Jansen, Poss, and Rohwedder~\cite{DBLP:journals/mst/BougeretJPR21} give approximation algorithms for classical min--max makespan scheduling under cardinality-budgeted processing-time uncertainty, including an EPTAS for identical machines, a $(2+\varepsilon)$-approximation for related machines, and a $3$-approximation for unrelated machines. In contrast to this literature, which optimizes the robust objective alone, we add a predicted scenario and study the bicriteria tradeoff between performance on the prediction and performance against the min--max robust benchmark.

{\em Vector scheduling and scheduling over scenarios} offer related algorithmic viewpoints. In vector scheduling, each job has a multidimensional load vector, and the goal is to minimize the maximum load over all machines and dimensions; Chekuri and Khanna~\cite{DBLP:journals/siamcomp/ChekuriK04} gave a PTAS when the dimension is fixed. Thus, for identical machines and a fixed finite set of scenarios, scenarios can be encoded as dimensions of a vector-scheduling instance. In particular, for interval uncertainty, the predicted scenario and the upper-endpoint scenario form a two-dimensional vector-scheduling instance, giving a polynomial-time way to approximate the best schedule for any fixed normalization of the predicted and robust objectives; see also Appendix~\ref{app:vector-scheduling-polytime-identical}. Scheduling over scenarios studies a related model in which one seeks a single assignment that performs well across a given collection of scenarios, often represented as subsets of jobs~\cite{DBLP:conf/cocoon/FeuersteinMSSSSZ14}. These viewpoints are close to robust scheduling with explicitly listed uncertainty sets, but they do not identify the consistency--robustness tradeoff created by a separate prediction benchmark.

{\em Data-driven and learning-based robust optimization} uses data, samples, covariates, or learned models mainly to construct, estimate, or calibrate uncertainty sets~\cite{TulabandhulaR14,DBLP:journals/mp/BertsimasGK18,DBLP:journals/mansci/BertsimasK20}. Related work connects predictive and prescriptive analytics~\cite{DBLP:journals/mansci/BertsimasK20}, studies distributionally robust variants~\cite{DBLP:journals/ior/DelageY10,DBLP:journals/ior/WiesemannKS14}, and develops machine-learning-based uncertainty sets~\cite{BertsimasB25}. Our setting is complementary: the uncertainty set is part of the input, and predictions enter as a distinguished scenario that creates a separate consistency benchmark.

\section{Preliminaries}
\label{sec:preliminaries}

We define the problem for the most general setting with unrelated machines and arbitrary uncertainty, and then introduce the special cases. Throughout, $[k]:=\{1,\ldots,k\}$. We slightly abuse notation and write $\textsc{ALG}$ for an algorithm and $\textsc{ALG}(I)$ for the schedule returned by this algorithm on instance $I$.

\par\medskip\noindent In \textit{makespan minimization},  an instance with $m$ machines and $n$ jobs is given by $q \in \mathbb{R}_{\geq 0}^{m \times n}$, where $q_{i,j}$ is the processing time of job $j \in [n]$ on machine $i \in [m]$. A schedule is an $m$-partition $\mathcal{S} =(S_1,\ldots,S_m)$ of the jobs on the machines. The goal is to find a schedule $\mathcal{S}$ that minimizes the makespan $C_{\text{max}}(\mathcal{S}, q) = \max_{i \in [m]} \sum_{j \in S_i} q_{i,j}$.
An algorithm $\textsc{ALG}$
achieves  an $\alpha$-approximation if, for every instance $q$, it returns a schedule $\textsc{ALG}(q)$ with  $C_{\text{max}}(\textsc{ALG}(q), q) \leq \alpha \cdot \textsc{OPT}(q)$ where $\textsc{OPT}(q) = \min_{\mathcal{S}} \max_{i \in [m]} \sum_{j \in S_i} q_{i,j}$ denotes the optimal makespan for instance $q$.

\par\medskip\noindent In \textit{robust makespan minimization}, we are given an uncertainty set $\mathcal{U}$ of scenarios $q$. The goal is to find a schedule $\mathcal{S}$ that minimizes the worst-case makespan $C_{\text{max}}^{\text{rob}}(\mathcal{S}, \mathcal{U})= \max_{q \in \mathcal{U}} C_{\max}(\mathcal{S}, q)$.
An algorithm $\textsc{ALG}$
is $\beta$-robust if, for every $\mathcal{U}$, it returns a schedule $\textsc{ALG}(\mathcal{U})$ with $C_{\text{max}}^{\text{rob}}(\textsc{ALG}(\mathcal{U}), \mathcal{U}) \leq \beta \cdot \OPTR(\mathcal{U})$ where $\OPTR(\mathcal{U}) = \min_\mathcal{S} C_{\text{max}}^{\text{rob}}(\mathcal{S}, \mathcal{U})$ is the optimal robust makespan.

\par\medskip\noindent In \textit{robust makespan minimization with predictions}, in addition to the uncertainty set $\mathcal{U}$, we are also  provided with a predicted scenario $\hat{q} \in \mathcal{U}$. The two main performance measures used to evaluate an algorithm with predictions are consistency and robustness

\begin{definition}
    An algorithm $\textsc{ALG}(\mathcal U, \hat q)$ is $\alpha$-consistent and $\beta$-robust for uncertainty set $\mathcal{U}$ if, for any predicted scenario     $\hat{q} \in \mathcal{U}$, we have
    $$C_{\max}(\textsc{ALG}(\mathcal{U}, \hat{q}), \hat q) \leq \alpha \cdot \textsc{OPT}(\hat q) \ \ \ \text{ and } \ \ \ C_{\text{max}}^{\text{rob}}(\textsc{ALG}(\mathcal{U}, \hat q), \mathcal{U}) \leq \beta \cdot \OPTR(\mathcal{U}).$$
\end{definition}

Note that the definition of $\alpha$-consistency matches the definition of $\alpha$-approximation for makespan minimization when the prediction is correct. The definition of $\beta$-robustness matches the definition of $\beta$-robustness for robust makespan minimization over $\mathcal{U}$, thus the prediction may differ arbitrarily from the realized scenario $q \in \mathcal U$. Despite the shared terminology, there is an important conceptual difference between robustness in robust optimization, which is the one we use, and the notion of robustness that is commonly used in algorithms with predictions. In algorithms with predictions, the benchmark for robustness $\beta_{\text{ALPS}}$ is typically the optimal solution had the true instance been known, which would correspond to  robustness being defined in  robust makespan minimization as
$$\beta_{\text{ALPS}} = \max_{\mathcal{U}, \hat{q} \in \mathcal{U}} \max_{q \in \mathcal{U}} \frac{C_{\max}(\textsc{ALG}(\mathcal{U}, \hat{q}), q)}{\min_{\mathcal{S}}C_{\max}(\mathcal{S}, q)}  \ \text{ in contrast to } \ \beta_{\text{RO}} = \max_{\mathcal{U}, \hat{q} \in \mathcal U}  \frac{\max_{q \in \mathcal{U}} C_{\max}(\textsc{ALG}(\mathcal{U}, \hat{q}), q)}{ \min_{\mathcal{S}} \max_{q \in \mathcal{U}} C_{\max}(\mathcal{S}, q)},$$
which is an equivalent expression for the robustness $\beta_{\text{RO}}$ from robust optimization
 where the benchmark is the optimal robust schedule $\mathcal{S}$ that does not know the true instance $q$. In Appendix~\ref{app:impossibility},
we show that, under the alternative measure $\beta_{\text{ALPS}}$ for robustness, no deterministic algorithm can achieve an $o(m)$-approximation and no randomized algorithm can achieve an $o(\log m / \log \log m)$-approximation, even in the special case of identical machines and interval uncertainty.

\par\medskip\noindent \emph{Machine models.}
The general model for makespan minimization defined above is the \emph{unrelated machines} model. We also study several special cases. In the \emph{identical machines} model, each job~$j$ has a size $p_j$ and $q_{i,j} = p_j$ for all machines $i$.
In \emph{restricted assignment}, each job $j$ has a size $p_j$ and an eligibility set $M_j \subseteq [m]$ of machines, and $q_{i,j} = p_j$ if $i \in M_j$ and $q_{i,j} = \infty$ if $i \notin M_j$. In the \emph{related machines} model, each job $j$ has a size $p_j$, each machine $i$ has a speed $s_i$, and $q_{i,j} = p_j / s_i$. The identical machines model is a special case of both restricted assignment and related machines. In the restricted-related machines model, each job $j$ has a size $p_j$ and an eligibility set $M_j \subseteq [m]$ of machines, each machine $i$ has a speed $s_i$, and $q_{i,j} = p_j / s_i$ if $i \in M_j$ and $q_{i,j} = \infty$ if $i \notin M_j$.

\par\medskip\noindent \emph{Uncertainty models.} The general model for robust makespan minimization defined above is the \emph{arbitrary uncertainty} model, where $\mathcal{U}$ is an arbitrary collection of scenarios. We also consider two structured special cases.
In the \emph{interval uncertainty} model for identical machines, each job~$j$ has an interval $[p^-_j, p^+_j]$ of possible sizes and $\mathcal{U} = \{q : q_{i,j} = p_j \in [p^-_j, p^+_j] \text{ for all } i, j\}.$ The interval uncertainty model extends from identical machines to related machines, restricted assignment, and restricted-related machines by maintaining the same uncertainty on the job sizes and having fixed machine speeds and eligibility sets across all scenarios in the uncertainty set. For unrelated machines, we have intervals $[q^-_{ij}, q^+_{ij}]$  for all $i, j$ and  $\mathcal{U} = \{q : q_{i,j} \in [q^-_{ij}, q^+_{ij}] \text{ for all } i, j\}$.

 In the \emph{budgeted uncertainty} model for identical machines, in addition to   intervals $[p^-_j, p^+_j]$,  there is a cardinality budget $\Gamma \geq 1$ such that the adversary can modify the  sizes of at most $\Gamma$ jobs: $$\mathcal{U} = \{q : \exists \ T \subseteq [n] \text { s.t. } |T| \leq \Gamma, q_{i,j}  = p_j^{-} \text{ if } j \not \in T \text{ and } q_{i,j} = p_j \in [p_{j}^{-}, p_{j}^+] \text{ if } j \in T\}.$$
 The budgeted uncertainty model extends to other machine models similarly to interval uncertainty.  The interval uncertainty model is the special case of the budgeted uncertainty model where $\Gamma = n$.

\par\medskip\noindent \emph{Prediction error.} We say that a predicted scenario $\hat{q}$ has prediction error $\eta$ with respect to scenario $q$ if $\eta = \max_{i,j}\left\{\frac{q_{i,j}}{\hat{q}_{i,j}}, \frac{\hat{q}_{i,j}}{q_{i,j}}\right\}.$\footnote{We abuse notation and write $x/0=1$ if $x= 0 $ and $x/0 = \infty$ otherwise.} An algorithm \textsc{ALG} is $\gamma(\eta)$-smooth for uncertainty set $\mathcal U$ if, for any prediction error $\eta \geq 1$,  scenario $q\in \mathcal{U}$, and predicted scenario $\hat{q} \in \mathcal{U}$ with prediction error $\eta$ with respect to $q$, we have that $C_{\text{max}}(\text{ALG}(\mathcal U, \hat{q}), q) \leq \gamma(\eta)  \cdot \textsc{OPT}(q).$ Note that $\gamma$-smoothness implies $\gamma$-consistency by taking $q=\hat{q}$. In fact, we would like to note that,  for all our algorithms, $\alpha$-consistency implies $\eta^2\alpha$-smoothness.

\begin{restatable}{lemma}{LemConsImpliesSmoothness}
\label{lem:consistency-implies-smoothness}
Let \textsc{ALG} be an algorithm for a fixed machine environment and uncertainty set $\mathcal U$. Suppose that \textsc{ALG} is $\alpha$-consistent for $\mathcal U$. Then \textsc{ALG} is $\eta^2\alpha$-smooth for $\mathcal U$.
\end{restatable}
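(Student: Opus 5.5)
The plan is a direct two-sided comparison. Fix a scenario $q\in\mathcal U$ and a prediction $\hat q\in\mathcal U$ with prediction error $\eta$ with respect to $q$, and let $\mathcal S=\textsc{ALG}(\mathcal U,\hat q)$. By definition of $\eta$, for every entry $(i,j)$ we have $q_{i,j}\le \eta\,\hat q_{i,j}$ and $\hat q_{i,j}\le \eta\, q_{i,j}$. The conventions $x/0$ ensure these inequalities remain valid when an entry is $0$: if $\hat q_{i,j}=0$ and $q_{i,j}>0$, then $\eta=\infty$ and the statement is vacuous. Likewise, an entry $\infty$ in both scenarios is consistent, as for ineligible machines in restricted assignment. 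The proof will not use any structure of the machine environment or of $\mathcal U$ beyond $\hat q\in\mathcal U$, which is needed only to invoke consistency.

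The first step is to pass from $q$ to $\hat q$ on the algorithm's own schedule. Summing $q_{i,j}\le \eta\hat q_{i,j}$ over $j\in S_i$ and maximizing over $i$ gives
\[
C_{\max}(\mathcal S,q)\le \eta\, C_{\max}(\mathcal S,\hat q).
\]
Since $\hat q\in\mathcal U$ and \textsc{ALG} is $\alpha$-consistent, this is at most $\eta\alpha\,\textsc{OPT}(\hat q)$.

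The second step is to compare the two optima. Let $\mathcal S^*$ be an optimal schedule for $q$. Applying the same entrywise bound in the other direction gives
\[
\textsc{OPT}(\hat q)\le C_{\max}(\mathcal S^*,\hat q)\le \eta\, C_{\max}(\mathcal S^*,q)=\eta\,\textsc{OPT}(q).
\]
Chaining the two steps yields $C_{\max}(\mathcal S,q)\le \eta^2\alpha\,\textsc{OPT}(q)$, which is $\eta^2\alpha$-smoothness.

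No step is a real obstacle. The only care needed is the bookkeeping for degenerate entries ($0$ and $\infty$) under the stated conventions. In particular, we must check that $\textsc{OPT}(q)$ finite implies $C_{\max}(\mathcal S^*,\hat q)$ finite whenever $\eta<\infty$. This holds because finite $\eta$ forces $q_{i,j}$ and $\hat q_{i,j}$ to be simultaneously $0$, simultaneously positive and finite, or both infinite.
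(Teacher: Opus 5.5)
Your proposal is correct and matches the paper's proof: the entrywise bounds give $C_{\max}(\mathcal S,q)\le\eta\, C_{\max}(\mathcal S,\hat q)$ for the algorithm's schedule and $\OPT(\hat q)\le\eta\,\OPT(q)$ via an optimal schedule for $q$, and these chain with $\alpha$-consistency to give the $\eta^2\alpha$ bound. Your extra bookkeeping for $0$ and $\infty$ entries goes beyond what the paper writes but does not change the argument.
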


The proof (in \Cref{app:proof-consistency-implies-smoothness}) follows from the costs of a fixed schedule and an optimal schedule both changing by at most a factor $\eta$ when evaluated under $q$ instead of $\hat{q}$.
\section{Interval uncertainty}
\label{sec:interval-uncertainty}

We begin with the interval uncertainty model. This case already captures the basic consistency--robustness tension: robustness is measured against one extreme scenario, while the prediction defines a second benchmark. Our main results  are a
consistency--robustness tradeoff for restricted-related machines~(\Cref{subsec:threshold-splitting-rr}) and an impossibility result for unrelated machines (\Cref{subsec:interval-unrelated-impossibility}). In \Cref{subsec:interval-identical-hardness}, we show that the consistency-robustness tradeoff achieved by our algorithm is asymptotically tight when the consistency tends to~$1$, even for identical machines. Recall that the robustness objective is to minimize the worst-case makespan $\max_{q \in \mathcal U} C_{\max}(\mathcal S,q).$ For any interval uncertainty set $\mathcal U$, the worst-case makespan for every schedule is attained at the  worst-case scenario $q^+ \in \mathcal U$, i.e., the scenario such that $q \le q^+$ for all $q \in \mathcal U$. Therefore, robustness reduces to minimizing the makespan on  $q^+$. In the remainder of this section, we  refer to an interval uncertainty set $\mathcal U$ only through its worst-case scenario $q^+$.

\subsection{The algorithm  for restricted-related machines}
\label{subsec:threshold-splitting-rr}

Recall that for restricted-related machines, a scenario $q$ is given by job sizes $p=(p_1,\ldots,p_n)$, eligibility sets $M_1,\ldots,M_n$, and machine speeds $s_1,\ldots,s_m$, with $q_{i,j}=p_j/s_i$ if $i\in M_j$ and $q_{i,j}=\infty$ otherwise. Given $J\subseteq[n]$, let $q[J]$ denote the subinstance restricted to jobs $J$.

\paragraph{Algorithm description.}
 \tsAlg, formally defined in \Cref{alg:threshold-splitting-rr},  first partitions the jobs between predicted heavy jobs $\hat{H}$ and $[n] \setminus \hat{H}$. Predicted heavy jobs  $\hat{H}$ are jobs  whose normalized predicted size $\hat{p}_j / \OPT(\hat{q})$ is at least   their normalized worst-case size $p^+_j / \OPT(q^+)$ up to a factor $\lambda$, where $\lambda>0$ is a parameter chosen by the algorithm that controls the consistency-robustness tradeoff. An optimal schedule $(\hat{S}_1, \ldots, \hat{S}_m)$ is then computed for the predicted heavy jobs $\hat{H}$ over the predicted scenario $\hat{q}$, as well as an optimal schedule $(S_1^+, \ldots, S_m^+)$ for $[n] \setminus \hat{H}$, but over the worst-case scenario $q^+$. The final schedule merges the schedules for jobs $\hat{H}$ and jobs $[n] \setminus \hat{H}$.

\begin{algorithm}[H]
\caption{\tsAlg algorithm for interval uncertainty and restricted-related machines}
\label{alg:threshold-splitting-rr}
\begin{algorithmic}[1]
\setlength{\itemsep}{0.25em}
\Input   predicted and worst-case scenarios $\hat{q}$ and $q^+$, corresponding predicted and worst-case job sizes $\hat{p}$ and $p^+$, and parameter $\lambda>0$
\State $\hat{H} \leftarrow \Big\{ j\in [n]: \lambda  \cdot \frac{\hat{p}_j}{\OPT(\hat{q})} >  \frac{p^+_j}{\OPT(q^+)}\Big\}$ \Comment{Predicted heavy jobs}
\State $(\hat{S}_1, \ldots, \hat{S}_m) \leftarrow \arg\min_{\mathcal S} C_{\text{max}}(\mathcal S, \hat{q}[\hat{H}])$ \Comment{Optimal schedule for jobs $\hat{H}$ over scenario $ \hat{q}$}
\State $(S_1^+, \ldots, S_m^+) \leftarrow \arg\min_{\mathcal S} C_{\text{max}}(\mathcal S, q^+[[n] \setminus \hat{H}])$ \Comment{Optimal  schedule for jobs $[n] \setminus \hat{H}$ over $q^+$}
\State \Return $(\hat{S}_1 \cup S^+_1,\ldots,\hat{S}_m \cup S^+_m)$ \Comment{Merge the two schedules}
\end{algorithmic}
\end{algorithm}

Our main result for \tsAlg is the following.

\begin{theorem}
\label{thm:threshold-splitting-rr}
For any $\lambda>0$, \tsAlg has consistency $1+1/\lambda$ and robustness $1+ \lambda$ for interval uncertainty and restricted-related machines.
\end{theorem}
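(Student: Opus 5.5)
We bound the predicted makespan and the worst-case makespan of the merged schedule $\mathcal S=(\hat S_1\cup S_1^+,\ldots,\hat S_m\cup S_m^+)$ machine by machine. Each bound splits the load of machine $i$ into two parts: the part scheduled for its own objective, controlled by sub-instance monotonicity, and the cross part, converted via the threshold inequality defining $\hat H$. The key point is that on a fixed machine $i$ both scenarios use the same speed $s_i$ and the same eligibility sets. Consequently $\hat q_{i,j}/q^+_{i,j}=\hat p_j/p^+_j$ whenever $i\in M_j$, and the threshold comparison of sizes transfers directly to processing times on any eligible machine. Moreover, every job is placed on a machine in its eligibility set by one of the two optimal sub-schedules, so $\mathcal S$ is feasible and all loads are finite.

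\textbf{Consistency.} Fix machine $i$. For the part $\hat S_i$, monotonicity gives
\[
\sum_{j\in\hat S_i}\hat p_j/s_i\le \OPT(\hat q[\hat H])\le \OPT(\hat q),
\]
since restricting an optimal schedule for $\hat q$ to $\hat H$ yields a feasible schedule for $\hat q[\hat H]$. For $j\in S_i^+$ we have $j\notin\hat H$, so $\hat p_j\le \frac{1}{\lambda}\frac{\OPT(\hat q)}{\OPT(q^+)}p^+_j$. Dividing by $s_i$ and summing gives
\[
\sum_{j\in S_i^+}\hat p_j/s_i\le \frac{\OPT(\hat q)}{\lambda\,\OPT(q^+)}\sum_{j\in S_i^+}p^+_j/s_i\le \frac{\OPT(\hat q)}{\lambda\,\OPT(q^+)}\,\OPT(q^+[[n]\setminus\hat H])\le \frac{\OPT(\hat q)}{\lambda}.
\]
Adding the two parts, the load on machine $i$ under $\hat q$ is at most $(1+1/\lambda)\OPT(\hat q)$.

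\textbf{Robustness.} As noted at the start of the section, the worst case over $\mathcal U$ is attained at $q^+$, so it suffices to bound the makespan under $q^+$. For $j\in S_i^+$, monotonicity gives a total of at most $\OPT(q^+)$. For $j\in\hat S_i\subseteq\hat H$ we have $p^+_j<\lambda\frac{\OPT(q^+)}{\OPT(\hat q)}\hat p_j$, so summing over $\hat S_i$ gives a total of at most $\lambda\frac{\OPT(q^+)}{\OPT(\hat q)}\OPT(\hat q[\hat H])\le\lambda\OPT(q^+)$. Adding the two parts gives robustness $1+\lambda$.

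\textbf{Main obstacle.} The only delicate step is justifying sub-instance monotonicity, $\OPT(q[J])\le\OPT(q)$, in the restricted-related model, together with the degenerate cases. Monotonicity holds because dropping jobs from a feasible schedule keeps it feasible and cannot increase any machine load. For the degenerate cases, if $\OPT(\hat q)=0$ or $\OPT(q^+)=0$, the normalization in the definition of $\hat H$ must be read with the paper's convention $x/0$. I would handle these separately. If $\OPT(q^+)=0$, then $\hat q\le q^+$ forces every job to have zero size on some eligible machine, and both bounds become trivial. If only $\OPT(\hat q)=0$, then every job with $\hat p_j>0$ lies in $\hat H$ under the convention, and the cross-term bounds above still go through by the same inequalities, since jobs with $\hat p_j=0$ contribute nothing to the predicted load.
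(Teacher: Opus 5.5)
Your main argument is the paper's proof. It uses the same split of machine $i$'s load into $\hat S_i$ and $S_i^+$, bounds the part scheduled for its own objective by sub-instance monotonicity, and converts the cross part through the threshold defining $\hat H$ at a loss of $1/\lambda$ or $\lambda$. That part is correct.

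Your degenerate-case paragraph, however, contains a false claim in the case $\OPT(\hat q)=0<\OPT(q^+)$.
\begin{itemize}
\item On restricted-related machines with positive speeds, $\OPT(\hat q)=0$ forces $\hat p_j=0$ for every $j$. So ``every job with $\hat p_j>0$ lies in $\hat H$'' is vacuous, and what matters is the jobs with $\hat p_j=0$.
\item Under the convention you invoke ($0/0=1$), such a job lies in $\hat H$ whenever $p_j^+<\lambda\OPT(q^+)$.
\item Every feasible schedule is optimal for the all-zero instance $\hat q[\hat H]$, so the $\arg\min$ in Line~2 may stack these jobs arbitrarily.
\end{itemize}
For instance, take $m\ge 4$ identical machines, $m$ jobs with $p_j^-=\hat p_j=0$ and $p_j^+=1$, and $\lambda=2$. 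Then $\OPT(q^+)=1$ and $\hat H=[n]$. Placing all jobs on one machine is a valid output, with $C_{\max}(\mathcal S,q^+)=m>(1+\lambda)\OPT(q^+)=3$.

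So the robustness cross term does not ``go through by the same inequalities.'' The termwise bound $p_j^+<\lambda\OPT(q^+)\cdot(\hat p_j/\OPT(\hat q))$ sums to $|\hat S_i|\,\lambda\OPT(q^+)$, not $\lambda\OPT(q^+)$, because the $0/0$ convention is not additive.

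The correct treatment starts from the fact that when $\OPT(\hat q)=0$ every feasible schedule is trivially consistent. You can then either read $\hat p_j/\OPT(\hat q)$ as $0$, so that $\hat H=\emptyset$ and \tsAlg returns an optimal $q^+$-schedule, or simply output an optimal $q^+$-schedule in that case. Your $\OPT(q^+)=0$ case is fine, since it forces all sizes to zero. The paper's proof tacitly assumes $\OPT(\hat q),\OPT(q^+)>0$ and never addresses these cases, so this caveat concerns only the extra remark you added, not the core argument.
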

In particular, for any constant $\lambda > 0$, the algorithm obtains constant consistency and constant robustness. Since identical machines, restricted assignment, and related machines are all special cases of restricted-related machines, the theorem applies to all three models.
\begin{proof}[Proof of \Cref{thm:threshold-splitting-rr}.]
We first show the consistency guarantee. The load of machine $i$ over scenario~$\hat{q}$ with job sizes $\hat{p}$ is $\frac{1}{s_i}\sum_{j\in \hat{S}_i \cup S^+_i}\hat{p}_j$. Next, we have  $\frac{1}{s_i} \sum_{j\in \hat{S}_i}\hat{p}_j \leq \text{OPT}(\hat{q}[\hat{H}]) \leq \text{OPT}(\hat{q})$ where the first inequality is since $(\hat{S}_1, \ldots, \hat{S}_m)$ is an optimal schedule for subinstance $\hat{q}[\hat{H}]$ and the second since $\hat{H} \subseteq [n]$ and by the monotonicity of the optimal makespan. We also have

$$ \frac{1}{s_i} \sum_{j\in S^+_i}\hat{p}_j \leq \frac{ \text{OPT}(\hat{q})}{\lambda s_i\cdot \text{OPT}(q^+)} \sum_{j\in S^+_i} p^+_j \leq \frac{ \text{OPT}(\hat{q}) \cdot \text{OPT}(q^+[[n] \setminus \hat{H}])}{ \lambda \cdot \text{OPT}(q^+)} \leq \frac{\text{OPT}(\hat{q})}{\lambda}$$
where the first inequality is by definition of $\hat{H}$ and since jobs $j \in S_i^+$ are not in $\hat{H}$, the second since $(S_1^+, \ldots, S_m^+)$ is an optimal schedule for subinstance $q^+[[n] \setminus \hat{H}]$, and the third  by the monotonicity of \text{OPT}. Thus, each machine has load at most $(1+1/\lambda) \text{OPT}(\hat{q})$ over scenario $\hat{q}$.

 The robustness analysis follows similarly to the consistency analysis, but is for scenario~$q^+$ instead of $\hat{q}$. We have  $\frac{1}{s_i} \sum_{j\in S^+_i} p^+_j \leq \text{OPT}(q^+[[n] \setminus \hat{H}]) \leq \text{OPT}(q^+)$
and
$$\frac{1}{s_i} \sum_{j\in \hat{S}_i} p^+_j \leq \frac{\lambda \cdot \text{OPT}(q^+)}{ s_i\cdot \text{OPT}(\hat{q})} \sum_{j\in \hat S_i} \hat{p}_j \leq \frac{\lambda \cdot \text{OPT}(q^+)\cdot \text{OPT}(\hat{q}[\hat{H}])}{\text{OPT}(\hat{q})} \leq \lambda \cdot \text{OPT}(q^+).$$ Therefore, each machine has load at most $(1+\lambda) \text{OPT}(q^+)$ over scenario $q^+$.
\end{proof}

We would like to emphasize that \tsAlg does not run in polynomial time. By estimating $\OPT(\cdot)$ and  approximating $\arg\min_{\mathcal S} C_{\text{max}}(\mathcal S, \cdot)$  with an approximation algorithm for makespan minimization, we get the following consistency and robustness. The proof is deferred to \Cref{app:proof-polynomial-implementation-threshold-splitting}.

\begin{restatable}{corollary}{CorPolyimeThresholdSplitting}
\label{thm:polynomial-implementation-threshold-splitting}
Given a polynomial-time $\gamma$-approximation algorithm for makespan minimization on restricted-related machines, there is, for every $\lambda>0$,  a polynomial-time algorithm for interval uncertainty and restricted-related machines with consistency $\gamma(1+\gamma/\lambda)$ and robustness $\gamma(1+\gamma\lambda)$.
\end{restatable}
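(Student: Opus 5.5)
The plan is to run \tsAlg with the exact quantities replaced by $\gamma$-approximate ones, and to track where each factor of $\gamma$ enters. Let $\mathcal A$ be the given $\gamma$-approximation. First compute $\hat{S}^0 = \mathcal A(\hat q)$ and $S^{+,0} = \mathcal A(q^+)$, and set
\[
\hat V := C_{\max}(\hat S^0,\hat q), \qquad V^+ := C_{\max}(S^{+,0},q^+).
\]
These satisfy $\OPT(\hat q)\le \hat V\le \gamma\OPT(\hat q)$ and $\OPT(q^+)\le V^+\le\gamma\OPT(q^+)$. Define the heavy set with these estimates:
\[
\hat H := \Big\{ j : \lambda\,\tfrac{\hat p_j}{\hat V} > \tfrac{p^+_j}{V^+}\Big\}.
\]
Then schedule $\hat H$ by $\mathcal A(\hat q[\hat H])$ and $[n]\setminus\hat H$ by $\mathcal A(q^+[[n]\setminus\hat H])$, and merge the two schedules machine by machine. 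All steps are polynomial time.

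\emph{Consistency.} Fix a machine $i$ and bound its load under $\hat q$ in two parts.
\begin{itemize}
\item The $\hat H$ part has load at most $\gamma\,\OPT(\hat q[\hat H])\le\gamma\OPT(\hat q)$, by the approximation guarantee and monotonicity of $\OPT$ under taking subinstances.
\item For the other part, every job $j\notin\hat H$ satisfies $\hat p_j\le \frac{\hat V}{\lambda V^+}p^+_j$. Hence this part's load under $\hat q$ is at most $\frac{\hat V}{\lambda V^+}$ times its load under $q^+$. That load is at most $\gamma\OPT(q^+[[n]\setminus\hat H])\le \gamma\OPT(q^+)\le \gamma V^+$.
\end{itemize}
So the second part contributes at most $\gamma\hat V/\lambda\le \gamma^2\OPT(\hat q)/\lambda$. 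The total is $\gamma(1+\gamma/\lambda)\OPT(\hat q)$.

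\emph{Robustness.} This is symmetric. The non-heavy part is at most $\gamma\OPT(q^+)$. Every job $j\in\hat H$ satisfies $p^+_j<\lambda\frac{V^+}{\hat V}\hat p_j$, so the heavy part has load at most
\[
\lambda\tfrac{V^+}{\hat V}\cdot\gamma\OPT(\hat q)\le\lambda\gamma V^+\le\lambda\gamma^2\OPT(q^+),
\]
giving $\gamma(1+\gamma\lambda)\OPT(q^+)$.

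The key point, and the only place requiring care, is the choice of normalizers. Using the achieved values $\hat V$ and $V^+$ (upper estimates) in both directions lets each ratio cancel one approximate bound against the matching lower bound $\OPT\le V$. As a result only one extra factor of $\gamma$ appears per cross term, not two. One also needs that monotonicity of $\OPT$ holds for restricted-related machines and that speeds cancel, which the proof of \Cref{thm:threshold-splitting-rr} already uses machine by machine.
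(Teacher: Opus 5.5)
Your proposal is correct and follows the paper's proof essentially step for step. The paper likewise uses the achieved makespans $C_{\max}(A(\hat q),\hat q)$ and $C_{\max}(A(q^+),q^+)$ as the normalizers in the threshold, schedules the two sub-instances with the $\gamma$-approximation, bounds the own-objective parts by $\gamma\OPT$, and bounds the cross parts by $\gamma^2\OPT(\hat q)/\lambda$ and $\gamma^2\lambda\OPT(q^+)$, exactly as you do. One wording point in your last paragraph: each cross term does carry two factors of $\gamma$, one from the normalizer ratio and one from the approximate sub-instance schedule. This is precisely what the stated bounds allow.
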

Using the $2$-approximation of \cite{lenstra1990approximation}, we get a consistency of $2 + 4/\lambda$ and a robustness of $2 + 4 \lambda$.

\subsection{Asymptotically tight lower bound for consistency-robustness tradeoff}
\label{subsec:interval-identical-hardness}

We show that the linear dependence on $\lambda$ in the robustness guarantee of \Cref{thm:threshold-splitting-rr} is asymptotically tight. This lower bound is structural and independent of computational limitations.

\begin{restatable}{theorem}{IntervalIndenticalLowerBound}
\label{thm:interval-identical-lower}
For every $m\geq2$, every integer $\lambda$ with $1\leq\lambda\leq m-1$, and every $\alpha<1+1/\lambda$, no $\alpha$-consistent algorithm can achieve robustness strictly better than $\lambda$ if deterministic, or strictly better than $\lambda/4$ if randomized, even for interval uncertainty and identical machines.
\end{restatable}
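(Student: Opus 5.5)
The plan is an explicit hard instance on identical machines. It uses integer predicted sizes, so that any schedule with predicted makespan below $(1+1/\lambda)\OPT(\hat q)$ is forced into a single rigid structure. That structure is then catastrophic on the upper-endpoint scenario $q^+$. Recall from the start of \Cref{sec:interval-uncertainty} that robustness for interval uncertainty is measured on $q^+$ alone.

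\textbf{Construction.} Take $m$ machines and $m-1$ \emph{big} jobs with $p_j^-=\hat p_j=p_j^+=\lambda$. Add $\lambda$ \emph{small} jobs with $p_j^-=\hat p_j=1$ and $p_j^+=L$, where $L$ is a large parameter. Then $\OPT(\hat q)=\lambda$: put one big job on each of $m-1$ machines and all $\lambda$ small jobs on the last machine. No smaller value is possible, since a big job alone already has size $\lambda$.

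\textbf{Deterministic bound.} Let $\alpha<1+1/\lambda$. Every machine then has predicted load at most $\alpha\lambda<\lambda+1$, and all predicted loads are integers, so every predicted load is at most $\lambda$.
\begin{itemize}
\item A small job cannot share a machine with a big job, since that load would be $\lambda+1$.
\item Two big jobs cannot share a machine either.
\item Hence the $m-1$ big jobs occupy $m-1$ distinct machines, and all $\lambda$ small jobs sit on the remaining machine.
\end{itemize}
The makespan under $q^+$ is therefore at least $\lambda L$. By contrast, because $\lambda\le m-1<m$, the robust optimum can place the small jobs on $\lambda$ distinct machines, at most one per machine, next to at most one big job each. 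This gives $\OPTR\le L+\lambda$. The ratio is $\lambda L/(L+\lambda)\to\lambda$ as $L\to\infty$, which yields the deterministic claim.

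\textbf{Randomized bound.} Here consistency and robustness hold in expectation, and the rigidity argument only shows that the ``all small jobs together'' schedule has positive probability. The step I expect to be the main obstacle is quantifying that probability. I would modify the construction so that the cost of deviating from the rigid structure, and the gain from doing so, are both proportional.
\begin{itemize}
\item Split each small job into many tiny jobs of predicted size $\varepsilon$, keeping the large upper size.
\item Let $X$ be the predicted load that a random schedule moves onto big-job machines, or onto extra machines beyond the one that is otherwise free. Consistency in expectation gives $\mathbb E[X]\le(\alpha-1)\lambda<1$.
\item By Markov's inequality, with probability at least about $1/2$ the moved predicted load is below $2$.
\end{itemize}
On that event, the free machine keeps roughly predicted load $\lambda-2$. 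I would set the $L$-values so that this leftover load translates into a worst-case load of about $(\lambda/2)L$ on a single machine. Concretely, one picks a uniformly random permutation of which tiny jobs are risky and applies Yao's principle, so that a block of size about $\lambda/2$ contains, in expectation, about half of the risky mass. Combining the factor $1/2$ from Markov with the factor $\lambda/2$ from the block size gives expected robust makespan at least $(\lambda/4)L$, against $\OPTR\approx L$.

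The delicate part is choosing the distribution and the parameters $\varepsilon$ and $L$. They must ensure that the robust optimum remains about $L$, and that moving predicted load of less than $2$ cannot dilute the risky jobs on the overloaded machine by more than a constant factor. I would first try the cleanest version: $\lambda$ risky unit jobs hidden uniformly among $2\lambda$ unit jobs on the free machine. Then I would check the constant with a direct averaging argument before refining it.
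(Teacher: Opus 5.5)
Your deterministic argument is correct. It is the paper's construction rescaled by $\lambda$. Since you only bound $\OPTR\le L+\lambda$, the ratio $\lambda L/(L+\lambda)$ reaches $\lambda$ only as $L\to\infty$, but that still excludes every $\beta<\lambda$.

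The randomized part has a genuine gap, and both ingredients of your plan fail.
\begin{itemize}
\item Expected consistency does not give $\mathbb E[X]\le(\alpha-1)\lambda$. The makespan only records the largest excess on a single machine, so a schedule can move a lot of predicted load by spreading it thinly over many big-job machines. Splitting the small jobs into $\varepsilon$-jobs makes this worse, because it destroys the integrality rigidity your deterministic argument rests on. What expected consistency does control is the \emph{probability} that the predicted makespan crosses the next threshold.
\item Hiding the risky jobs by a random permutation and invoking Yao's principle cannot help. Under interval uncertainty, $p^+$ and hence $q^+$ are part of the input, so the algorithm sees which jobs are risky. On a fixed instance, a randomized algorithm is just a mixture of deterministic schedules. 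The lower bound must come from averaging the two objectives on one instance.
\end{itemize}
You are right that your $\lambda$-instance is too weak for that. Play ``small jobs on distinct machines'' with probability $p$ and ``all small jobs together'' otherwise. The expected consistency is $1+p/\lambda<1+1/\lambda$, while the expected robust ratio is roughly $p+(1-p)\lambda$, which tends to $1$ as $p\to1$.

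The missing idea is to shrink the instance parameter rather than hide information. Use your deterministic instance with $h:=\lceil\lambda/2\rceil\le m-1$ unit small jobs and $m-1$ big jobs of predicted size $h$. Then $\OPT(\hat q)=h$, and every schedule with predicted makespan below $h+1$ has worst-case makespan at least $hL$. Let $E$ be the event that the predicted makespan is at least $h+1$. Since the predicted makespan is always at least $h$, expected consistency $1+1/\lambda$ gives $h+\Pr[E]\le(1+1/\lambda)h$, i.e.\ $\Pr[E]\le h/\lambda\approx 1/2$. On $E$ the worst-case makespan is still at least $L$, so
\[
\mathbb E\bigl[C_{\max}(\mathcal S,q^+)\bigr]\ge hL\bigl(1-\Pr[E]\bigr)+L\Pr[E]\ge L\,\frac{h(\lambda-h+1)}{\lambda}\ge\frac{\lambda}{4}L .
\]
Against $\OPTR\le L+h$, this gives the $\lambda/4$ bound as $L\to\infty$. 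This is the paper's argument in your scaling. It realizes your ``Markov gives $1/2$, a group of size $\lambda/2$'' intuition without splitting jobs or randomizing the instance.
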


We give the construction of the hard instance for deterministic algorithms and defer the full proof to \Cref{app:proof-interval-identical-lower}. Fix $L\geq m-1$. There are $\lambda$ jobs with $(\hat p_j,p^+_j)=(1/\lambda,L)$ and $m-1$ jobs with $(\hat p_j,p^+_j)=(1,1)$. The first type is negligible for the prediction but dominates the worst case, while the second type determines the predicted optimum. Enforcing consistency better than $1+1/\lambda$ forces the first type to be grouped together, creating a worst-case load $\lambda L$.

\subsection{Impossibility for unrelated machines}
\label{subsec:interval-unrelated-impossibility}

Interval uncertainty does not admit an instance-independent consistency-robustness tradeoff on unrelated machines. Already with one job and two machines, the prediction can make one machine uniquely favorable while the worst-case scenario makes the other machine uniquely robust.

\begin{restatable}{theorem}{IntervalUnrelatedNoTradeoff}
\label{thm:interval-unrelated-no-constant-tradeoff}
For every constants $\alpha\geq 1$ and $\beta\geq 1$, there is an unrelated-machines instance with interval uncertainty, two machines, and one job such that no  randomized algorithm is both $\alpha$-consistent and $\beta$-robust.
\end{restatable}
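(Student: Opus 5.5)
The plan is to build an explicit two-machine, one-job instance and check that every schedule breaks one of the two guarantees. With one job there are only two deterministic schedules: put the job on machine~1 or on machine~2. A randomized algorithm is a distribution over these, placing the job on machine~1 with probability $\rho$ and on machine~2 with probability $1-\rho$. I read consistency and robustness for randomized algorithms in expectation, since that is the weakest requirement and therefore gives the strongest impossibility. Because there is a single job, the makespan of a schedule is just the processing time of that job on the chosen machine.

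Fix $\alpha,\beta\ge1$ and choose a large parameter $K$, to be set later. I would use the following intervals:
\[
q_{1,1}\in[1,K^2],\qquad q_{2,1}\in[K,K].
\]
Take the prediction to be $\hat q=(1,K)$, which lies in $\mathcal U$. The worst case is $q^+=(K^2,K)$.

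The benchmark values are then immediate. For the prediction, $\OPT(\hat q)=1$, attained on machine~1. For the robust problem, $\OPTR(\mathcal U)=\OPT(q^+)=K$, attained on machine~2. Here I use the fact from the start of the section that for interval uncertainty the worst case is attained at $q^+$.

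Next I compare the two costs of the randomized algorithm.
\begin{itemize}
\item Expected predicted makespan: $\rho\cdot1+(1-\rho)K\le\alpha$. This forces $1-\rho\le(\alpha-1)/(K-1)$, so $\rho\ge 1-(\alpha-1)/(K-1)$.
\item Expected worst-case makespan: at least $\rho K^2+(1-\rho)K\ge\rho K^2$. This is because the adversary's scenario $q^+$ is fixed in advance and does not depend on the algorithm's coin flips.
\end{itemize}
Robustness $\beta$ would require $\rho K^2\le\beta K$, that is, $\rho\le\beta/K$.

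Choosing $K$ large enough that $1-(\alpha-1)/(K-1)>\beta/K$ makes both requirements impossible at once. Any $K>\alpha+\beta+1$ should suffice, as a routine check shows. Deterministic algorithms are the special cases $\rho\in\{0,1\}$.

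The only subtle point is the semantics of randomized robustness. It must be measured as expected cost against the worst scenario, or, if the expectation is taken inside, as $\max_q$ of the expectation. Both readings are bounded below by the cost on $q^+$, so the argument covers both. A secondary check is that restricting $q_{2,1}$ to a degenerate interval is allowed. If it is not, I would replace it with $[K,K]$ shifted slightly, or with $[K-1,K]$, which changes nothing asymptotically.
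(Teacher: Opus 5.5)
Your proposal is correct and is essentially the paper's proof. The paper uses the same two-machine, one-job instance: machine~1 has $(\hat p,p^+)=(1,B)$, and machine~2 is fixed at the degenerate interval $[A,A]$, exactly as you allowed. It then derives the same bound $\rho\ge (A-\alpha)/(A-1)$ from expected consistency and chooses $B>\beta A/\rho_0$, so that $\rho B>\beta A$. Your choice $A=K$, $B=K^2$ with $K>\alpha+\beta+1$ is just one instantiation of those parameters, and it works since $K(K-\alpha)-\beta(K-1)=K(K-\alpha-\beta)+\beta>0$.
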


Consequently, unrelated machines admit no instance-independent constant consistency-robustness tradeoff for the min-max interval benchmark. We give  the construction (full proof in \Cref{app:proof-interval-unrelated-no-constant-tradeoff}).
Fix constants $\alpha,\beta\geq1$, choose $A>\alpha$, set $\rho_0:=\frac{A-\alpha}{A-1}>0$,
and choose $B>\max\{A,\beta A/\rho_0\}$.
There are two machines and one job $j$. On machine $1$, let $(\hat p_{1,j},p^+_{1,j})=(1,B)$, and on machine $2$, let $(\hat p_{2,j},p^+_{2,j})=(A,A)$. Any $\alpha$-consistent schedule must assign $j$ to machine $1$, but the robust optimum assigns it to machine $2$ and has value $A$. Thus, every $\alpha$-consistent schedule has worst-case makespan $B>\beta A$.
\section{Budgeted uncertainty}
\label{sec:budgeted-uncertainty}

In this section, we consider the budgeted uncertainty model. In the interval uncertainty model, the size of every job $j$ may vary within $[p^-_j,p^+_j]$, unrelated to other jobs. Under a cardinality budget $\Gamma$, in contrast, at most $\Gamma$ jobs may deviate from their lower size, while every remaining job has size $p^-_j$. The worst-case scenario in this uncertainty model depends on the schedule, unlike the interval uncertainty model where a single fixed scenario is the worst-case scenario for all schedules. Note the interval uncertainty model is recovered when $\Gamma=n$.

In \Cref{subsec:budgeted-cardinality}, we study the cardinality-budget model and present our main positive result: a reduction of the budgeted uncertainty set to an interval instance, based on a cutoff chosen from the min--max robust optimum.
Combining this reduction with \Cref{alg:threshold-splitting-rr} gives a consistency-robustness tradeoff for restricted assignment and identical machines. In \Cref{subsec:budgeted-related-impossibility}, we show that this positive result relies on the restricted-assignment structure. For related machines, even the cardinality-budget model with $\Gamma=1$ admits no instance-independent constant consistency-robustness tradeoff with respect to the min-max benchmark. In \Cref{sec:oblivious-cardinality}, we show that for identical machines the budget need not be known in advance: a single schedule can be constant-consistent and constant-robust simultaneously for all cardinality budgets. Natural weighted extensions are deferred to \Cref{app:budgeted-additional-models}; weighted fractional budgets admit the same tradeoff as cardinality budgets, while weighted binary budgets lose an additional factor of two in the robustness guarantee.

\subsection{Cardinality budget}
\label{subsec:budgeted-cardinality}

Under a cardinality budget, each job $j\in[n]$ has an interval $[p^-_j,p^+_j]$, and the adversary may modify the sizes of at most $\Gamma$ jobs. For an integer budget $\Gamma\ge 1$, the uncertainty set,
$\mathcal U=\{q:\exists\; T\subseteq[n], |T|\le\Gamma,\ q_{i,j}=p^-_j \text{ if } j\notin T,\text{ and } q_{i,j}=p_j\in[p^-_j,p^+_j] \text{ if } j\in T\}$.
Since the makespan of a fixed schedule is nondecreasing in every job size, the worst realization supported on $T$ sets every job $j\in T$ to $p^+_j$. Consistency is measured with respect to the predicted scenario $\hat q$, and robustness with respect to the min--max benchmark $\OPT^{\mathrm{rob}}(\mathcal U)$.

\paragraph{Algorithm description.}
Our algorithm \btiAlg, described in~\Cref{alg:budgeted-cardinality-restricted}, reduces the cardinality-budgeted problem to an interval instance. Specifically, we choose a cutoff $\tau$ based on the min--max robust optimum and subtract this cutoff from every possible deviation. The residual deviations define effective worst-case sizes~$p_j^{+,\tau}$. We then apply
\tsAlg to the predicted scenario $\hat q$ and the resulting worst-case scenario~$q^{+,\tau}$. Let $q(p, \mathcal M)$ denote the processing times corresponding to job sizes $p$ and eligibility sets $\mathcal M = \{M_j\}_{j\in [n]}$. Similarly, we let $\mathcal U(p^-, p^+, \mathcal M, \Gamma)$ denote the budgeted uncertainty set for restricted-assignment machines corresponding to $p^-, p^+, \mathcal M,$ and~$\Gamma$.

\begin{algorithm}[H]
\caption{\btiAlg for budgeted uncertainty and restricted assignment}
\label{alg:budgeted-cardinality-restricted}
\begin{algorithmic}[1]
\setlength{\itemsep}{0.25em}
\Input Lower, predicted, and worst-case sizes  $p^-, \hat{p},$ and $p^+$, eligibility sets $\mathcal M = \{M_j\}_{j}$, budget $\Gamma$,  and parameter $\lambda>0$
\State $\tau \leftarrow \OPT^{\text{rob}}(\mathcal U(p^-, p^+, \mathcal M, \Gamma))/\Gamma$ \Comment{Truncation threshold using the optimal robust makespan}
\State $p_j^{+,\tau} \leftarrow p^-_j +\big(p^+_j -  p^-_j -\tau\big)_+$ for every job $j\in[n]$ \Comment{Effective worst-case job sizes}
\State \Return $\tsAlg(q(\hat{p}, \mathcal M), q(p^{+, \tau}, \mathcal M),\lambda)$ \Comment{Apply the interval-uncertainty algorithm}
\end{algorithmic}
\end{algorithm}

\paragraph{Algorithm analysis.} We show that \btiAlg inherits the same consistency guarantee as \tsAlg, while losing only an additional additive constant in the robustness bound. As before, the parameter $\lambda$ controls how much the algorithm trusts the predicted scenario. We have the following theorem.

\begin{theorem}
\label{thm:budgeted-cardinality-restricted}
For any $\lambda>0$, \btiAlg has consistency $1+1/\lambda$ and robustness $2+\lambda$ for cardinality-budget uncertainty and restricted assignment.
\end{theorem}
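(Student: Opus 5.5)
Consistency follows directly from \Cref{thm:threshold-splitting-rr}. \btiAlg runs \tsAlg on the pair $(q(\hat p,\mathcal M), q(p^{+,\tau},\mathcal M))$, which is an interval instance on restricted-assignment machines. The consistency guarantee of \tsAlg depends only on the predicted scenario and gives $C_{\max}(\mathcal S,\hat q)\le(1+1/\lambda)\OPT(\hat q)$. The remaining work is robustness: I will show that every machine $i$ of the returned schedule $\mathcal S$ has worst-case budgeted load at most $(2+\lambda)\OPTR(\mathcal U)$, where $\mathcal U=\mathcal U(p^-,p^+,\mathcal M,\Gamma)$. Write $d_j=p^+_j-p^-_j$ and $\OPTR=\OPTR(\mathcal U)$.

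\textbf{Step 1 (duality / cutoff bound).} For any job set $S$ and any $\theta\ge 0$,
\[
\max_{T\subseteq S,|T|\le\Gamma}\sum_{j\in S}p^-_j+\sum_{j\in T}d_j \;\le\; \Gamma\theta+\sum_{j\in S}\big(p^-_j+(d_j-\theta)_+\big),
\]
since each $j\in T$ satisfies $d_j\le\theta+(d_j-\theta)_+$. With $\theta=\tau=\OPTR/\Gamma$, the worst-case load of machine $i$ is at most $\OPTR+\sum_{j\in S_i}p^{+,\tau}_j$. Hence $C^{\mathrm{rob}}_{\max}(\mathcal S,\mathcal U)\le\OPTR+C_{\max}(\mathcal S,q^{+,\tau})$.

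\textbf{Step 2 (effective interval optimum vs.\ robust optimum).} I claim $\OPT(q^{+,\tau})\le\OPTR$. Take an optimal robust schedule $\mathcal S^*$ and a machine $i$. Let $T=\{j\in S^*_i: d_j>\tau\}$.
\begin{itemize}
\item If $|T|\le\Gamma$, then $\sum_{S^*_i}p^{+,\tau}_j\le\sum_{S^*_i}p^-_j+\sum_{T}d_j$, which is an admissible budgeted load and hence at most $\OPTR$.
\item If $|T|>\Gamma$, let $T'$ consist of the $\Gamma$ jobs of $T$ with the largest $d_j$. The budgeted load of $T'$ is $\sum_{S^*_i}p^-_j+\sum_{T'}d_j\le\OPTR$. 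Since every $j\in T'$ has $d_j>\tau$, this gives $\sum_{S^*_i}p^-_j+\Gamma\tau<\OPTR$. That contradicts $\Gamma\tau=\OPTR$ unless the load of $p^-$ is negative, which is impossible. So $|T|\le\Gamma$ always holds, and this is the key role of choosing $\tau=\OPTR/\Gamma$.
\end{itemize}
Therefore $\mathcal S^*$ has makespan at most $\OPTR$ under $q^{+,\tau}$.

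\textbf{Step 3 (combine).} By the robustness guarantee of \Cref{thm:threshold-splitting-rr} applied to the interval instance with upper scenario $q^{+,\tau}$,
\[
C_{\max}(\mathcal S,q^{+,\tau})\le(1+\lambda)\OPT(q^{+,\tau})\le(1+\lambda)\OPTR.
\]
Adding the budget term from Step 1 gives robustness $2+\lambda$.

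\textbf{Subtleties.} Applying \Cref{thm:threshold-splitting-rr} needs $\hat q\le q^{+,\tau}$ only in the sense of being a valid interval instance, and the theorem's proof never uses $\hat p\le p^+$, so this is harmless. The main step to check carefully is Step 2. In particular, $|T|>\Gamma$ yields the strict inequality $\Gamma\tau<\OPTR$, which cleanly contradicts $\Gamma\tau=\OPTR$. The degenerate case $\OPTR=0$ is trivial: then $\tau=0$ and all $d_j$ on used machines vanish.
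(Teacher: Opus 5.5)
Your proof is correct and follows the paper's argument. You bound each machine's budgeted load by $\Gamma\tau$ plus its load under $q^{+,\tau}$, show $\OPT(q^{+,\tau})\le\OPTR(\mathcal U)$ because each machine of an optimal robust schedule holds at most $\Gamma$ jobs with deviation exceeding $\tau$, and combine this with \Cref{thm:threshold-splitting-rr}. The only difference is that your Step 1 uses the direct inequality $p^+_j-p^-_j\le\theta+(p^+_j-p^-_j-\theta)_+$ instead of the paper's LP strong-duality lemma, which suffices because only this upper bound is used.
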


We would like to note that the linear dependence on $\lambda$ in the robustness guarantee is asymptotically tight since the lower bound from \Cref{thm:interval-identical-lower} applies to budgeted uncertainty.

The analysis of \Cref{alg:budgeted-cardinality-restricted} relies on a duality argument. For a fixed set of jobs, the worst cardinality-budgeted load can be
upper bounded by the load in an effective interval instance, up to an additive term $\Gamma\theta$, where the effective worst-case sizes are obtained by truncating deviations at a cutoff $\theta$. The next lemma formalizes this observation. We let $\mathcal{P}(\mathcal U) = \{p(q) : q \in \mathcal U\}$ denote the uncertain job sizes $p(q)$ corresponding to the uncertain processing times $q \in \mathcal U$.
\begin{lemma}
\label{lem:budgeted-cardinality-reduction}
Let $ p_j^{+,\theta} := p^-_j+\bigl(p_j^+-p^-_j-\theta\bigr)_+$, then, for any budgeted uncertainty set $\mathcal U$ with budget $\Gamma$, jobs $J\subseteq [n]$ and $\theta\geq 0$, we have $\max_{p\in\mathcal P(\mathcal U)}\sum_{j\in J} p_j\leq \Gamma\theta+\sum_{j\in J}p_j^{+,\theta}$.
\end{lemma}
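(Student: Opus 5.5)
The plan is a direct pointwise argument, with a duality reading only as motivation. Fix $J\subseteq[n]$ and $\theta\ge 0$, and write $d_j:=p^+_j-p^-_j\ge 0$ for the deviation of job $j$.

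First, reduce to the worst realization. Any $p\in\mathcal P(\mathcal U)$ comes with a set $T$ of size $|T|\le\Gamma$ such that $p_j=p^-_j$ for $j\notin T$ and $p_j\le p^+_j$ for $j\in T$. Hence
\[
\sum_{j\in J}p_j\le \sum_{j\in J}p^-_j+\sum_{j\in J\cap T}d_j .
\]
Bounding the right-hand side for every $T$ with $|T|\le\Gamma$ is therefore enough.

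Next, apply the elementary inequality $d\le\theta+(d-\theta)_+$, valid for all reals $d$ and all $\theta\ge 0$. Checking the two cases $d\le\theta$ and $d>\theta$ confirms it. Summing over $j\in J\cap T$ gives
\[
\sum_{j\in J\cap T}d_j\le |J\cap T|\,\theta+\sum_{j\in J\cap T}(d_j-\theta)_+\le \Gamma\theta+\sum_{j\in J}(d_j-\theta)_+ .
\]
The second inequality uses $|J\cap T|\le\Gamma$, $\theta\ge0$, and the nonnegativity of the positive parts, which lets us extend the sum from $J\cap T$ to all of $J$. Adding $\sum_{j\in J}p^-_j$ then yields $\Gamma\theta+\sum_{j\in J}p^{+,\theta}_j$ by the definition of $p^{+,\theta}_j$. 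Taking the maximum over $p$ finishes the proof.

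As a remark rather than a necessary step, the same bound is weak LP duality for the fractional knapsack $\max\{\sum_{j\in J} d_jx_j: \sum_j x_j\le\Gamma,\ 0\le x\le 1\}$. There $\theta$ is the multiplier on the budget constraint and $(d_j-\theta)_+$ are the multipliers on $x_j\le1$. No step is a real obstacle. The only points needing care are that the worst case for a fixed $T$ is attained at $p^+$ by monotonicity, and that $\theta\ge 0$ is used both in the elementary inequality and to pass from $|J\cap T|\theta$ to $\Gamma\theta$.
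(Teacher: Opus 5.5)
Your proof is correct, but it takes a more elementary route than the paper.

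The paper writes the maximum extra load of $J$ as the linear program (P). It invokes integrality of the cardinality-constrained unit-box polytope to identify the LP value with the combinatorial maximum. It then applies strong duality with the dual (D), observing that for fixed $\theta'$ the optimal $\mu_j$ is $(p^+_j-p^-_j-\theta')_+$. This yields the exact identity $\max_{p}\sum_{j\in J}p_j=\sum_{j\in J}p^-_j+\min_{\theta'\ge 0}\{\Gamma\theta'+\sum_{j\in J}(p^+_j-p^-_j-\theta')_+\}$, and the lemma follows by fixing $\theta'=\theta$.

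You instead check the needed direction pointwise for each deviation set $T$, using $d_j\le\theta+(d_j-\theta)_+$. This is exactly weak duality carried out by hand, with $\theta$ and $(d_j-\theta)_+$ as the dual certificate. Because the lemma only needs the upper bound, your argument avoids both the integrality claim and strong duality. It also works for non-integral $\Gamma$.

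It adapts verbatim to the weighted variants in the appendix, where the paper repeats the LP argument and, in the binary case, passes through the fractional relaxation. For weighted budgets, replace $\theta$ by $\theta w_j$ and use $\sum_{j\in T}w_j\le\Gamma$. In the fractional case, also multiply by $z_j\le 1$.

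What the paper's route buys is the matching equality. It shows the bound is tight at the best cutoff, which explains why truncating deviations at a threshold is the natural reduction, although that tightness is never used later.

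One small nit: $\theta\ge 0$ is not needed for $d\le\theta+(d-\theta)_+$, which holds for all reals. It is needed only for $|J\cap T|\,\theta\le\Gamma\theta$.
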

\begin{proof}
Fix  uncertainty set $\mathcal U$ and corresponding uncertain sizes $\mathcal P$, $J\subseteq [n]$ and $\theta \geq 0$. The maximum additional size $\max_{p \in \mathcal P} \sum_{j\in J} p_j - p_{j}^-$ of $J$ above its lower size  is given by the  linear program $(P)$ below. Since $\Gamma$ is integral, the feasible region is a cardinality-constrained unit-box polytope  and is integral. Hence the optimal value of (P) is $\max_{p \in \mathcal P} \sum_{j\in J} p_j - p_{j}^-$.
\begin{figure}[H]
\centering
\begin{minipage}[t]{0.45\textwidth}
\vspace{0pt}
\[
\begin{alignedat}{3}
\text{(P)}\quad
\max \quad & \sum_{j\in J} (p_j^+ - p^-_j) z_j \\
\text{s.t.}\quad
& \sum_{j\in J} z_j \le \Gamma \\
& z_j \le 1 \qquad && \forall j\in J \\
& z_j \ge 0 \qquad && \forall j\in J .
\end{alignedat}
\]
\end{minipage}
\hfill
\begin{minipage}[t]{0.45\textwidth}
\vspace{0pt}
\[
\begin{alignedat}{3}
\text{(D)}\quad
\min \quad & \Gamma\theta' + \sum_{j\in J}\mu_j \\
\text{s.t.}\quad
& \theta' + \mu_j \ge p_j^+ - p^-_j
\qquad && \forall j\in J \\
& \theta' \ge 0 \\
& \mu_j \ge 0
\qquad && \forall j\in J .
\end{alignedat}
\]
\end{minipage}
\label{fig:cardinality-budget-primal-dual}
\end{figure}

The dual of linear program (P) is linear program (D). For any value of dual variable $\theta'\geq0$, the corresponding optimal value of dual variable $\mu_j$ is $\mu_j=\bigl(p_j^+-p^-_j-\theta'\bigr)_+$ for every $j\in J$. Strong duality applies since the primal linear program is feasible and bounded. We conclude that
\begin{align}
\max_{p\in\mathcal P}\sum_{j\in J}p_j & =\sum_{j\in J}p^-_j+ \max_{p \in \mathcal P} \sum_{j\in J} p_j - p_{j}^- \nonumber \\
& = \sum_{j\in J}p^-_j+  \min_{\theta'\geq0}\left\{\Gamma\theta'+\sum_{j\in J}\bigl(p_j^+-p^-_j-\theta'\bigr)_+\right\} \label{eq:one} \\
& \leq\sum_{j\in J}p^-_j+\Gamma\theta+\sum_{j\in J}\bigl(p_j^+-p^-_j-\theta\bigr)_+ \nonumber \\
& =\Gamma\theta+\sum_{j\in J}p_j^{+,\theta} \label{eq:three}
\end{align}
 where \eqref{eq:one} is since the optimal value of (P) is $\max_{p \in \mathcal P} \sum_{j\in J} p_j - p_{j}^-$ and by strong duality and \eqref{eq:three} by the definition of $p_j^{+,\theta}$.
\end{proof}

We now apply \Cref{lem:budgeted-cardinality-reduction} to prove \Cref{thm:budgeted-cardinality-restricted}. We choose the cutoff $\tau$ from the min-max optimum $\OPT^{\text{rob}}(\mathcal U)$, and we define the effective worst-case job sizes $p_j^{+,\tau}$ and the corresponding scenario $q^{+,\tau}$, and we run \Cref{alg:threshold-splitting-rr} on the interval instance $(\hat q,q^{+,\tau})$. \Cref{lem:budgeted-cardinality-reduction} bounds the loss when transferring the robustness guarantee from $q^{+,\tau}$ to the original cardinality-budget uncertainty set.

\begin{proof}[Proof of \Cref{thm:budgeted-cardinality-restricted}.]
Let $\tau:=\OPT^{\mathrm{rob}}(\mathcal U)/\Gamma$. For every job $j\in[n]$, define $p_j^{+,\tau}:=p^-_j+\bigl(p_j^+-p^-_j-\tau\bigr)_+$, and let $q^{+,\tau}$ be the restricted-assignment scenario with job sizes $p^{+,\tau}$ and the same eligibility sets as $\hat q$. Let $\mathcal T=(T_1,\ldots,T_m)$ be a schedule attaining $\OPT^{\mathrm{rob}}(\mathcal U)$. Thus, for every machine $i\in[m]$, $\max_{q\in\mathcal U}\sum_{j\in T_i}q_{i,j}\leq\OPT^{\mathrm{rob}}(\mathcal U)$.

We first show that $\OPT(q^{+,\tau})\leq\OPT^{\mathrm{rob}}(\mathcal U)$. Fix a machine $i\in[m]$, and let $R_i:=\{j\in T_i:p_j^+-p^-_j>\tau\}$. We have $|R_i|\leq\Gamma$. Indeed, if $|R_i|\geq\Gamma+1$, then there are $\Gamma$ jobs in $T_i$ whose deviations are strictly larger than $\tau$. A scenario that modifies these jobs gives a load strictly larger than $\Gamma\tau=\OPT^{\mathrm{rob}}(\mathcal U)$ on machine $i$, which contradicts the preceding inequality.

Let $(p^+-p^-)_{T_i,(\ell)}$ denote the $\ell$-th largest value of $p_j^+-p^-_j$ among the jobs in $T_i$, where missing terms are taken to be zero. Since $|R_i|\leq\Gamma$, we have $\sum_{j\in T_i}\bigl(p_j^+-p^-_j-\tau\bigr)_+\leq\sum_{\ell=1}^{\Gamma}(p^+-p^-)_{T_i,(\ell)}$. Therefore,
$$
\sum_{j\in T_i}p_j^{+,\tau}\leq\sum_{j\in T_i}p^-_j+
\sum_{\ell=1}^{\Gamma}(p^+-p^-)_{T_i,(\ell)}
\leq
\OPT^{\mathrm{rob}}(\mathcal U),
$$
where the first inequality is by the definition of $p_j^{+,\tau}$ and the previous inequality, and the second since the right-hand side is the maximum load of machine $i$ over the cardinality-budget uncertainty set. Since this holds for every machine of schedule $\mathcal T$, we have $\OPT(q^{+,\tau})\leq\OPT^{\mathrm{rob}}(\mathcal U)$.

\Cref{alg:budgeted-cardinality-restricted} runs \Cref{alg:threshold-splitting-rr} on scenarios $\hat q$ and $q^{+,\tau}$. By \Cref{thm:threshold-splitting-rr}, the returned schedule $\mathcal S=(S_1,\ldots,S_m)$ satisfies $C_{\text{max}}(\mathcal S,\hat q)\leq\left(1+\frac{1}{\lambda}\right)\OPT(\hat q)$ and $C_{\text{max}}(\mathcal S,q^{+,\tau})\leq(1+\lambda)\OPT(q^{+,\tau})\leq(1+\lambda)\OPT^{\mathrm{rob}}(\mathcal U)$,
where the second inequality follows from $\OPT(q^{+,\tau})\leq\OPT^{\mathrm{rob}}(\mathcal U)$.

It remains to bound the makespan of $\mathcal S$ over the original uncertainty set. By \Cref{lem:budgeted-cardinality-reduction}, for every machine $i\in[m]$, $\max_{q\in\mathcal U}\sum_{j\in S_i}q_{i,j}\leq\Gamma\tau+\sum_{j\in S_i}p_j^{+,\tau}=\OPT^{\mathrm{rob}}(\mathcal U)+\sum_{j\in S_i}p_j^{+,\tau}.$
Thus,
$$
C_{\max}^{\mathrm{rob}}(\mathcal S,\mathcal U)=\max_{q\in\mathcal U}C_{\text{max}}(\mathcal S,q)\leq\OPT^{\mathrm{rob}}(\mathcal U)+C_{\text{max}}(\mathcal S,q^{+,\tau})\leq (\lambda+2)\OPT^{\mathrm{rob}}(\mathcal U),
$$
where the first inequality follows by taking the maximum over machines in the preceding inequality and the second by the robustness guarantee of \Cref{alg:threshold-splitting-rr}.
\end{proof}

In \Cref{thm:polytime-cardinality-budget} in \Cref{app:proof-polynomial-implementation-cardinality-budget}, we show that by estimating $\OPT^{\mathrm{rob}}(\mathcal U)$ with the $3$-approximation algorithm for cardinality-budgeted uncertainty on unrelated machines due to Bougeret et~al.~\cite{DBLP:journals/mst/BougeretJPR21} and applying \Cref{thm:polynomial-implementation-threshold-splitting} to the resulting interval instance, we get a polynomial-time implementation of \Cref{alg:budgeted-cardinality-restricted} with consistency $2+4/\lambda$ and robustness $5+4\lambda$. For the special case of identical machines, an even stronger EPTAS is known for the robust black box~\cite{DBLP:journals/mst/BougeretJPR21}.

\subsection{Impossibility for related machines}
\label{subsec:budgeted-related-impossibility}

The guarantee of Algorithm~\ref{alg:budgeted-cardinality-restricted} relies on the restricted-assignment structure and does not extend to related machines. For related machines, even the cardinality-budget model with $\Gamma=1$ admits no instance-independent constant consistency-robustness tradeoff. In particular, once machine speeds may differ, a single deviating job is sufficient to rule out any constant tradeoff that is independent of the ratio between the largest and smallest machine speeds.
The next theorem shows that, for $\Gamma=1$, no schedule can simultaneously achieve constant consistency with respect to the predicted scenario $\hat q$ and constant robustness with respect to $\OPT^{\mathrm{rob}}(\mathcal U)$.

\begin{restatable}{theorem}{BudgetedRelatedNoTradeoff}
\label{thm:budgeted-related-no-constant-tradeoff}
For every constants $\alpha \ge 1$ and $\beta \ge 1$, there exists a related-machines instance with cardinality budget $\Gamma = 1$ such that no randomized algorithm is both $\alpha$-consistent and $\beta$-robust.
\end{restatable}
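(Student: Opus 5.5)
The plan is to exploit exactly the feature the surrounding discussion highlights: with related machines, a single deviating job placed on a \emph{slow} machine is magnified by the inverse speed. I will build an instance in which the prediction makes it very profitable to use slow machines, while the robust benchmark is attained by putting everything on a single fast machine.

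\textbf{Construction.} Fix $\alpha,\beta\ge1$ and take one fast machine of speed $1$ and $M$ slow machines of speed $1/K$, with integers $K>2\beta$ and $M$ to be chosen large relative to $\alpha K$. There are $n=M+K$ identical jobs with $p^-_j=\hat p_j=1$ and $p^+_j=D$, where $D:=M+K$. The budget is $\Gamma=1$. The prediction $\hat q$ is the all-lower scenario, which lies in $\mathcal U$.

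\textbf{Consistency benchmark.} Put one job on each slow machine, giving load $K$ each. The remaining $K$ jobs go on the fast machine, giving load $K$. Hence $\OPT(\hat q)\le K$. Any schedule that uses no slow machine has predicted makespan $n=M+K$.

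\textbf{Robust benchmark.} Schedule all jobs on the fast machine. With $\Gamma=1$ the worst case raises one job to $D$, giving load $M+K-1+D<2(M+K)$. So $\OPTR(\mathcal U)<2(M+K)$.

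\textbf{Cost of using a slow machine.} Any schedule placing at least one job on a slow machine has worst-case makespan at least $DK=(M+K)K$: let that job deviate. This exceeds $\beta\cdot 2(M+K)>\beta\OPTR(\mathcal U)$ since $K>2\beta$.

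\textbf{Deterministic case.} $\alpha$-consistency requires makespan at most $\alpha K$ on $\hat q$. Choosing $M+K>\alpha K$ rules out the all-fast schedule. So some job sits on a slow machine, and $\beta$-robustness fails.

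\textbf{Randomized case.} I interpret the guarantees in expectation. Let $\pi$ be the probability that the algorithm outputs a schedule using no slow machine.
\begin{itemize}
\item Consistency gives $\pi(M+K)\le \mathbb E[C_{\max}(\cdot,\hat q)]\le\alpha K$. Hence $\pi\le \alpha K/(M+K)$, and this is at most $1/2$ once $M\ge(2\alpha-1)K$.
\item The expected worst-case makespan is then at least $(1-\pi)(M+K)K\ge (M+K)K/2$.
\item This exceeds $\beta\OPTR(\mathcal U)$ provided $K>4\beta$, so I will take $K>4\beta$ from the start.
\end{itemize}
If robustness is defined via the expectation of the worst-case makespan, or via the worst case of the expectation, the bound still works. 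The adversary raises one job on a slow machine, and by symmetry of the slow machines it can fix a single job choice that loses only a factor depending on how often each slow machine is used. If needed I will make the jobs' placement symmetric and pick the deviating job uniformly, losing at most a factor I can absorb by enlarging $K$.

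\textbf{Main obstacle.} The delicate step is this randomized accounting, specifically whether robustness is in expectation over the schedule of $\max_q$, or $\max_q$ of the expectation. The deterministic core is a short calculation once the parameters $K>4\beta$, $M\ge(2\alpha-1)K$ and $D=M+K$ are fixed.
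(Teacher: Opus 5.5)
Your construction is the paper's up to rescaling: fast speed $1$ and slow speed $1/K$ instead of $k$ and $1$, and a fixed $D=M+K$ instead of $L\to\infty$. Your deterministic argument and your bound on $\mathbb{E}[\max_{q}C_{\max}(\mathcal S,q)]$ are correct.

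\textbf{The gap.} You claim the randomized bound ``still works'' when the scenario must be fixed independently of the algorithm's coins, i.e.\ for $\max_{q}\mathbb{E}[C_{\max}(\mathcal S,q)]$. That is the stronger form the paper proves, by fixing the scenario $q^{j^\star}$ in advance. Let $X$ be the number of jobs on slow machines. Your accounting only yields $\Pr[X\ge 1]\ge 1/2$. Deviating a uniformly random job then gives expected makespan at least $DK\,\mathbb{E}[X]/n$. From your bound alone you only know $\mathbb{E}[X]\ge 1/2$, so this is only $K/2$. But $\OPTR(\mathcal U)\ge D=M+K>K$, since some job deviates wherever it is placed. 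The lost factor is of order $n\ge 2\alpha K$, which grows with $K$, so enlarging $K$ cannot absorb it. Note also that the relevant symmetry is over the identical jobs, not the slow machines.

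\textbf{The missing step.} You need a \emph{linear} lower bound on $\mathbb{E}[X]$ from consistency, and your instance supports one. The fast machine has predicted load $n-X$, so
$n-\mathbb{E}[X]\le\mathbb{E}[C_{\max}(\mathcal S,\hat q)]\le\alpha\OPT(\hat q)\le\alpha K$.
Hence $\mathbb{E}[X]\ge n-\alpha K\ge n/2$ once $M\ge(2\alpha-1)K$. Averaging over the identical jobs gives a fixed job $j^\star$ that lies on a slow machine with probability at least $1/2$. The single scenario raising $j^\star$ to $D$ then gives
$\mathbb{E}[C_{\max}]\ge DK/2=(M+K)K/2>2\beta(M+K)>\beta\OPTR(\mathcal U)$ for $K>4\beta$.
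This is exactly the paper's argument. It covers both readings of randomized robustness at once, since $\max_q\mathbb{E}[\cdot]\le\mathbb{E}[\max_q\cdot]$.
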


Consequently, related machines admit no universal constant consistency-robustness guarantee for cardinality-budget uncertainty relative to the min-max benchmark. We give the construction (full proof in \Cref{app:proof-budgeted-related-no-constant-tradeoff}). Fix constants $\alpha\geq 1$ and $\beta\geq 1$, and choose integers $k>2\beta$ and $\ell\geq 2\alpha k$. The instance consists of one machine of speed $k$, $\ell$ machines of speed $1$, and $n:=k+\ell$ jobs. Every job $j\in[n]$ has $(p^-_j,\hat p_j,p^+_j)=(1,1,L)$, where $L$ is chosen sufficiently large so that $kL/(2(n+L-1))>\beta$. The predicted optimum is $1$. Consistency forces the schedule to put enough probability mass on the slow machines: deterministically, at least one job is assigned to a machine of speed $1$, and in expectation, some fixed job is assigned to a machine of speed $1$ with probability at least $1/2$. The adversary then uses the budget $\Gamma=1$ to deviate this job to size $L$. The robust optimum is at most $(n+L-1)/k$, by assigning every job to the machine of speed $k$. Thus, the deterministic robustness ratio is at least $kL/(n+L-1)$, and the randomized expected robustness ratio is at least $kL/(2(n+L-1))>\beta$.

\subsection{Oblivious cardinality budgets}
\label{sec:oblivious-cardinality}

The guarantees above are budget-aware: the algorithm receives the cardinality budget $\Gamma$ and constructs a schedule for the corresponding uncertainty set $\mathcal U_\Gamma$. Somewhat surprisingly, for identical machines, one can still obtain constant consistency and robustness without knowing $\Gamma$. That is, there is a single schedule whose robustness guarantee holds simultaneously for all budgets $\Gamma\in[n]$.

The construction differs from the budget-to-interval reduction used above, which depends explicitly on the value of $\Gamma$. Instead, we normalize each budgeted uncertainty set by its own robust optimum and take the union of the resulting normalized sets. This produces an arbitrary uncertainty set. The algorithm for arbitrary uncertainty sets developed in \Cref{sec:general-uncertainty} then becomes the key tool: applying it to this normalized uncertainty set yields a schedule whose robust makespan controls the robustness ratio for every budget at once. The details are deferred to \Cref{app:oblivious-cardinality}.

\begin{restatable}{theorem}{ThmObliviousCardinalityTradeoff}
\label{thm:oblivious-cardinality-tradeoff}
For identical machines with cardinality-budget uncertainty, for every $\lambda>0$, there is a deterministic algorithm that computes a schedule independently of the budget $\Gamma$ and is $(1+2/\lambda)$-consistent and, simultaneously for every $\Gamma\in[n]$, $(10\lambda+20)$-robust with respect to $\mathcal U_\Gamma$.
\end{restatable}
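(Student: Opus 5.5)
We reduce to the arbitrary-uncertainty algorithm of \Cref{sec:general-uncertainty}, which the paper announces as being $(1+2/\lambda)$-consistent and $(4+2\lambda)$-robust for identical machines (\Cref{thm:general-uncertainty-tradeoff}). For each $\Gamma\in[n]$, write $O_\Gamma:=\OPTR(\mathcal U_\Gamma)$. Define the normalized union set
\[
\mathcal V:=\bigcup_{\Gamma\in[n]}\Bigl\{\tfrac{1}{O_\Gamma}\,q : q\in\mathcal U_\Gamma\Bigr\}.
\]
The prediction must lie in $\mathcal V$, so we scale it as well. We take $\hat q/O_{\Gamma_0}$ for a fixed $\Gamma_0$; for example $\Gamma_0=n$ works, since $\hat q\in\mathcal U_n$. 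Scaling the predicted scenario does not change which schedules are $\alpha$-consistent for it. The algorithm runs the general-uncertainty algorithm on $(\mathcal V,\hat q/O_{\Gamma_0})$. This schedule does not depend on the unknown $\Gamma$.

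Consistency with factor $1+2/\lambda$ is then immediate, because makespan ratios are scale invariant.

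For robustness, let $\mathcal S$ be the output. The general theorem gives $C^{\mathrm{rob}}_{\max}(\mathcal S,\mathcal V)\le(4+2\lambda)\OPTR(\mathcal V)$. By the definition of $\mathcal V$, for each $\Gamma$ we have $C^{\mathrm{rob}}_{\max}(\mathcal S,\mathcal U_\Gamma)=O_\Gamma\cdot\max_{q\in\mathcal U_\Gamma}C_{\max}(\mathcal S,q/O_\Gamma)\le O_\Gamma\, C^{\mathrm{rob}}_{\max}(\mathcal S,\mathcal V)$. It therefore suffices to show $\OPTR(\mathcal V)\le 5$, since $5(4+2\lambda)=10\lambda+20$ matches the claimed constant. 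In other words, we need one schedule $\mathcal T$ with
\[
\max_i \max_{q\in\mathcal U_\Gamma}\sum_{j\in T_i}p(q)_j\le 5\,O_\Gamma\qquad\text{for all }\Gamma\text{ simultaneously.}
\]

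This universal schedule is the main obstacle. We would write $d_j:=p_j^+-p_j^-$. On identical machines the worst $\Gamma$-load of a set $T$ is $p^-(T)+\mathrm{top}_\Gamma(d,T)$, the lower load plus the $\Gamma$ largest deviations in $T$.

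We would lower bound $O_\Gamma$ by three quantities:
\begin{enumerate}[label=(\roman*)]
\item the average lower load, $O_\Gamma\ge p^-([n])/m$;
\item any single job's worst size, $O_\Gamma\ge p_j^-+d_j$ for every $j$;
\item a pigeonhole bound: sort jobs by $d$ in decreasing order, $d_{(1)}\ge d_{(2)}\ge\cdots$. For every $k\le m\Gamma$, the top $k$ deviations are spread over $m$ machines, so some machine receives at least $\lceil k/m\rceil$ of them, and $\lceil k/m\rceil\le\Gamma$ of them may deviate simultaneously. Hence $O_\Gamma\ge \lceil k/m\rceil\, d_{(k)}$, and also $O_\Gamma\ge \frac1m\sum_{\ell\le m\Gamma}d_{(\ell)}$.
\end{enumerate}

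For the candidate $\mathcal T$, we would assign jobs in decreasing order of $d_j$ round-robin across the machines. Then each machine's $t$-th largest deviation is at most $d_{(m(t-1)+1)}$, so for machine $i$,
\[
\mathrm{top}_\Gamma(d,T_i)\le d_{(1)}+\frac1m\sum_{\ell\le m\Gamma}d_{(\ell)}\le 2O_\Gamma .
\]

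The remaining difficulty is balancing the lower loads $p^-$ at the same time. The plan is to choose the order within each round so that lower loads stay balanced up to one job, as in LPT. A cleaner alternative is to take a separate list schedule with respect to $p^-$ and a round-robin schedule with respect to $d$ and merge them. Either route should give lower load at most $p^-([n])/m+\max_j p_j^-\le 2O_\Gamma$. This yields a total of at most $4O_\Gamma$, within the slack of $5$. If the simultaneous balancing fails, we would instead apply the paper's block-packing idea: cut the $d$-sorted sequence into blocks and pack them greedily by $p^-$, paying one extra additive $O_\Gamma$. That extra term is what the constant $5$ would accommodate.
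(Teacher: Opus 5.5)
\paragraph{Verdict.} Your reduction matches the paper's proof, and so does your candidate universal schedule, but the bound on its deviation part rests on a false inequality.

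\paragraph{What matches.} The reduction is the same: normalize each $\mathcal U_\Gamma$ by $O_\Gamma$, take the union $\mathcal V$, apply the general-uncertainty theorem, and reduce robustness to $\OPTR(\mathcal V)\le 5$. Rescaling $\hat q$ is harmless but unnecessary. The schedule is also the paper's: one job from each consecutive block of $m$ jobs in the $d$-sorted order, with lower loads balanced inside each block.

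\paragraph{The gap.} The second inequality in (iii), $O_\Gamma\ge\frac1m\sum_{\ell\le m\Gamma}d_{(\ell)}$, is false. A machine may hold more than $\Gamma$ of the top jobs, but only $\Gamma$ of them can deviate. For example, take $m=\Gamma=2$, $p^-\equiv0$ and deviations $(2,1,1,1)$. The schedule $\{j_1\},\{j_2,j_3,j_4\}$ shows $O_2=2<5/2$.

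The conclusion $\mathrm{top}_\Gamma(d,T_i)\le 2O_\Gamma$ also fails, not just its proof. Take $k\ge3$, $\Gamma=k^2$, $m=2k+2$ and $p^-\equiv0$, with one job with $d=k^2$, $2k^2$ jobs with $d=k$, and many jobs with $d=1$.
\begin{itemize}
\item Place the big job alone, $k$ jobs of size $k$ on each of $2k$ machines, and all unit jobs on the last machine. This shows $O_\Gamma=k^2$.
\item Plain round robin gives the first machine $k^2+(k-1)k+(k^2-k)=(3-2/k)O_\Gamma$.
\item Any within-layer order still gives the machine holding the big job at least $3k^2-3k+1>2O_\Gamma$.
\end{itemize}

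\paragraph{The correct bound.} What holds, and what the paper proves, concerns a robust-optimal schedule. A machine there holding $x_i$ of the $h\le(\Gamma-1)m$ largest deviations carries at most $(1+x_i/\Gamma)O_\Gamma$ of them. Its top $\Gamma$ sum to at most $O_\Gamma$, and each further one is at most $O_\Gamma/\Gamma$. Hence $\sum_{s\le(\Gamma-1)m}d_{(s)}<2mO_\Gamma$, and the deviation part is below $3O_\Gamma$. Your averaging step only needs the sum up to $(\Gamma-1)m$.

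\paragraph{Consequence: no slack.} You get $3+2=5$ exactly, so the lower-load part must be shown to be at most $2O_\Gamma$ for the same layered schedule. Your block-packing fallback, with its extra additive $O_\Gamma$, would give $6$. The ``merge'' alternative does not define a schedule at all, because each indivisible job carries both $p^-_j$ and $d_j$.

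\paragraph{The balancing step that works.} It is your first route, made precise as in the paper.
\begin{itemize}
\item Sort the current lower loads $a_1\le\dots\le a_m$ and the layer's lower sizes $b_1\ge\dots\ge b_m$, padding with zeros. Give $b_t$ to the machine with load $a_t$.
\item For $s<t$, $(a_s+b_s)-(a_t+b_t)\le b_s-b_t$ and $(a_t+b_t)-(a_s+b_s)\le a_t-a_s$. So the spread $\max_ia_i-\min_ia_i\le\max_jp^-_j$ is preserved.
\item Every final lower load is therefore at most $p^-([n])/m+\max_jp^-_j\le2O_\Gamma$.
\item Each layer is still assigned injectively, so the $3O_\Gamma$ deviation bound still applies and the total is $5O_\Gamma$.
\end{itemize}
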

\section{General uncertainty}
\label{sec:general-uncertainty}

Under a general uncertainty set, the adversary may choose any processing-time vector from a nonempty bounded set $\mathcal U\subseteq\mathbb R_+^n$. Consistency is measured with respect to the predicted scenario $\hat q$, while robustness is measured with respect to the min-max benchmark $\OPT^{\mathrm{rob}}(\mathcal U)$. For every scenario $q\in\mathcal U$, we write $p(q)$ the corresponding vector of job sizes, and for every set of jobs $J\subseteq[n]$, let $\Psi_{\mathcal U}(J):=\sup_{q\in\mathcal U}\sum_{j\in J}p(q)_j$ denote its maximum load over the uncertainty set.

General uncertainty sets are the most expressive uncertainty model considered in this paper. For identical machines, we obtain a constant consistency-robustness tradeoff by using only the predicted loads and the set function $\Psi_{\mathcal U}$. For restricted assignment, however, arbitrary uncertainty sets are sufficiently expressive to rule out any instance-independent constant tradeoff.

\subsection{Identical machines}
\label{subsec:general-identical}

\paragraph{Algorithm description.}
Fix a parameter $\lambda>0$. The algorithm combines an optimal schedule $(\hat S_1,\ldots,\hat S_m)$ for the predicted scenario with an optimal min-max schedule $(T_1,\ldots,T_m)$ over $\mathcal U$. The predicted schedule determines which jobs remain fixed and how much predicted load should be reassigned to each machine, while the min-max schedule determines which jobs are grouped together during the repacking.

A job is predicted large if $\hat p_j>\OPT(\hat q)/\lambda$. These jobs form the set $\hat H$ and retain their assignments from the predicted optimum. Thus, machine $i$ keeps the jobs $H_i=\hat S_i\cap\hat H$. The remaining predicted load on machine $i$ is $y_i=\sum_{j\in\hat S_i\setminus\hat H}\hat p_j$. This is the load contributed in the predicted optimum by the jobs that are not fixed, and it is used as the target load when blocks are reassigned to machine $i$.

The jobs outside $\hat H$ are grouped according to the optimal min-max schedule. For every machine $i$, the jobs in $T_i\setminus\hat H$ are ordered arbitrarily and partitioned into blocks. A non-tiny block is formed by taking the shortest remaining prefix whose predicted load exceeds $\OPT(\hat q)/(2\lambda)$. Since every job outside $\hat H$ has predicted size at most $\OPT(\hat q)/\lambda$, the predicted load of such a block is at most $3\OPT(\hat q)/(2\lambda)$. When the remaining jobs on $T_i$ have total predicted load at most $\OPT(\hat q)/(2\lambda)$, they form one final tiny block. Hence, every machine of the min-max schedule contributes at most one tiny block, while every block remains contained in one machine of that schedule. This containment is used to control its worst-case load over $\mathcal U$.

The tiny blocks are assigned injectively to distinct machines. The non-tiny blocks are then ordered arbitrarily and assigned using the residual predicted loads $y_1,\ldots,y_m$. Starting with the first unassigned block, machine $i$ receives consecutive blocks until their total predicted load first exceeds $y_i$. If all remaining blocks have total predicted load at most $y_i$, they are all assigned to machine $i$. This first-crossing rule ensures that the predicted load assigned to machine $i$ exceeds its target $y_i$ by at most one non-tiny block. Finally, the algorithm combines the fixed jobs $H_i$ with the repacked jobs $L_i$ on every machine.

\begin{algorithm}[h]
\caption{\bpAlg algorithm for general uncertainty and identical machines}
\label{alg:general-uncertainty-identical}
\begingroup
\small
\algrenewcommand\algorithmicindent{0.8em}
\begin{algorithmic}[1]
\setlength{\itemsep}{0.15em}
\Input predicted scenario $\hat q$, corresponding predicted job sizes $\hat p$, uncertainty set $\mathcal U$, and parameter $\lambda>0$
\State $(\hat S_1,\ldots,\hat S_m)\leftarrow\arg\min_{\mathcal S}C_{\text{max}}(\mathcal S,\hat q)$ \Comment{Optimal predicted schedule}
\State $(T_1,\ldots,T_m)\leftarrow\arg\min_{\mathcal T}C_{\max}^{\mathrm{rob}}(\mathcal T,\mathcal U)$ \Comment{Optimal min-max schedule}
\State $\hat H\leftarrow\{j\in[n]:\hat p_j>\OPT(\hat q)/\lambda\}$ \Comment{Predicted large jobs}
\For{$i\in[m]$}
    \State $H_i\leftarrow\hat S_i\cap\hat H$ and $y_i\leftarrow\sum_{j\in\hat S_i\setminus\hat H}\hat p_j$ \Comment{Fixed jobs and residual predicted load}
\EndFor
\State $\mathcal B_{\mathrm{nt}}\leftarrow\varnothing$ and $\mathcal B_{\mathrm{tiny}}\leftarrow\varnothing$
\For{$i\in[m]$}
    \State Let $R\leftarrow T_i\setminus\hat H$, ordered arbitrarily
    \While{$\sum_{j\in R}\hat p_j>\OPT(\hat q)/(2\lambda)$}
        \State Let $B$ be the shortest prefix of $R$ such that $\sum_{j\in B}\hat p_j>\OPT(\hat q)/(2\lambda)$
        \State $\mathcal B_{\mathrm{nt}}\leftarrow\mathcal B_{\mathrm{nt}}\cup\{B\}$ and $R\leftarrow R\setminus B$ \Comment{Create a non-tiny block}
    \EndWhile
    \If{$R\neq\varnothing$}
        \State $\mathcal B_{\mathrm{tiny}}\leftarrow\mathcal B_{\mathrm{tiny}}\cup\{R\}$ \Comment{Create the remaining tiny block}
    \EndIf
\EndFor
\State Assign the blocks in $\mathcal B_{\mathrm{tiny}}$ injectively to distinct machines
\For{$i\in[m]$}
    \State $L_i\leftarrow\bigcup\{B\in\mathcal B_{\mathrm{tiny}}:B\text{ is assigned to machine }i\}$ \Comment{Initialize the repacked jobs}
\EndFor
\State Order $\mathcal B_{\mathrm{nt}}$ arbitrarily as $B^1,\ldots,B^N$, and set $t\leftarrow1$ \Comment{Initialize the first-crossing assignment}
\For{$i\in[m]$}
    \If{$t>N$}
        \State \textbf{break}
    \ElsIf{$\sum_{s=t}^{N}\sum_{j\in B^s}\hat p_j\leq y_i$}
        \State $L_i\leftarrow L_i\cup\bigcup_{s=t}^{N}B^s$ \Comment{Assign all remaining blocks when they fit}
        \State $t\leftarrow N+1$
        \State \textbf{break}
    \Else
        \State Let $\ell\geq t$ be the smallest index such that $\sum_{s=t}^{\ell}\sum_{j\in B^s}\hat p_j>y_i$ \Comment{First crossing of capacity $y_i$}
        \State $L_i\leftarrow L_i\cup\bigcup_{s=t}^{\ell}B^s$ \Comment{Assign the first-crossing block prefix}
        \State $t\leftarrow\ell+1$
    \EndIf
\EndFor
\State \Return $(H_1\cup L_1,\ldots,H_m\cup L_m)$ \Comment{Merge the fixed jobs and the repacked jobs}
\end{algorithmic}
\endgroup
\end{algorithm}

\paragraph{Algorithm analysis.}
\bpAlg combines an optimal schedule for the predicted scenario with an optimal min-max schedule over $\mathcal U$. Lines~1--2 compute these two schedules. Line~3 partitions the jobs into the predicted large jobs $\hat H$ and the predicted small jobs $\hat L=[n]\setminus \hat H$. Lines~4--6 keep every job in $\hat H$ on its machine in the optimal predicted schedule and define $y_i$ as the predicted load of the small jobs assigned to machine $i$ in that schedule.

The jobs in $\hat L$ are then repacked using the min-max schedule. For every machine $i$, Lines~8--15 partition $T_i\cap\hat L$ into non-tiny blocks of predicted load slightly larger than $\OPT(\hat q)/(2\lambda)$ and at most one remaining tiny block. Line~17 assigns the tiny blocks injectively to distinct machines. Lines~18--31 assign the non-tiny blocks in consecutive order using the values $y_i$: each machine receives blocks until its total predicted load first exceeds $y_i$, unless all remaining blocks fit within $y_i$. The next result states the consistency-robustness tradeoff achieved by \Cref{alg:general-uncertainty-identical}.

\begin{restatable}{theorem}{GeneralTradeoff}
\label{thm:general-uncertainty-tradeoff}
For any $\lambda>0$, \bpAlg has consistency $1+2/\lambda$ and robustness $2\lambda+4$ for general uncertainty sets and identical machines.
\end{restatable}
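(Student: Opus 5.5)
Write $O:=\OPT(\hat q)$ and $R:=\OPTR(\mathcal U)$. I would prove the two guarantees separately, machine by machine. Consistency will follow from the first-crossing rule. Robustness will follow from counting blocks and from the two properties of $\Psi_{\mathcal U}$, monotonicity and subadditivity. For identical machines the robust load of machine $i$ equals $\Psi_{\mathcal U}(H_i\cup L_i)$, which is at most $\Psi_{\mathcal U}(H_i)+\sum_{B\subseteq L_i}\Psi_{\mathcal U}(B)$.

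\emph{Consistency.} Fix machine $i$. Its fixed jobs contribute $\sum_{j\in H_i}\hat p_j$, and by definition $\sum_{j\in H_i}\hat p_j + y_i = \sum_{j\in\hat S_i}\hat p_j\le O$. Its tiny block, if any, has predicted load at most $O/(2\lambda)$. I then need to check that the non-tiny blocks assigned to $i$ have predicted load at most $y_i+\frac{3O}{2\lambda}$. By the first-crossing rule, the prefix just before the last block has load at most $y_i$, and the last block has load at most $O/(2\lambda)+O/\lambda$. The latter holds because the block is the shortest prefix exceeding $O/(2\lambda)$ and all its jobs are small. Summing gives $O+\frac{O}{2\lambda}+\frac{3O}{2\lambda}=(1+2/\lambda)O$.

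There is one point to verify: every non-tiny block actually gets assigned. Suppose the loop ends with $t\le N$. Then every machine received predicted load exceeding $y_i$, so the total exceeds $\sum_i y_i=\sum_{j\in\hat L}\hat p_j$. This is impossible, since the blocks only contain jobs of $\hat L$. So all blocks are placed.

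\emph{Robustness.} Every block lies inside some $T_k$, so monotonicity gives $\Psi_{\mathcal U}(B)\le\Psi_{\mathcal U}(T_k)\le R$. I would bound each part of machine $i$'s robust load in turn.
\begin{itemize}
\item The tiny block contributes at most $R$.
\item Count the non-tiny blocks on machine $i$. All but the last have total predicted load at most $y_i\le O$, and each has predicted load greater than $O/(2\lambda)$. So there are fewer than $2\lambda$ of them, plus the crossing block, giving at most $2\lambda+1$ blocks. (The ``assign all remaining'' case gives the same count.) Their contribution is at most $(2\lambda+1)R$.
\item For $H_i$, each job has $\hat p_j>O/\lambda$, so $|H_i|<\lambda$. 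Each single job satisfies $\Psi_{\mathcal U}(\{j\})\le R$, since $j$ lies in some $T_k$. Hence $\Psi_{\mathcal U}(H_i)\le\lambda R$.
\end{itemize}
The total is about $(3\lambda+2)R$, which is worse than the claimed $2\lambda+4$.

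The main obstacle is the accounting for $H_i$ against the non-tiny blocks. I would try to merge the two counts. Large jobs and non-tiny blocks together have predicted load at most $O+\frac{3O}{2\lambda}$ on machine $i$, and each large job or block has predicted load greater than $O/(2\lambda)$. Then count pieces of load exceeding $O/(2\lambda)$: these number at most $2\lambda+3$, plus the tiny block. This gives the bound $(2\lambda+4)R$. The careful step is the strictness and integrality bookkeeping in this count, including the bound $\Psi_{\mathcal U}(\{j\})\le R$ used for the singletons.
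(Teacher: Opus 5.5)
Your proposal is correct and takes essentially the paper's route. Consistency comes from the first-crossing bound $y_i+3O/(2\lambda)$ on the non-tiny blocks plus at most one tiny block, and robustness from subadditivity over the pieces (large jobs, non-tiny blocks, one tiny block), each with support value at most $R$ by monotonicity inside some $T_k$. Your merged count (large jobs and non-tiny blocks on machine $i$ have total predicted load at most $(O-y_i)+(y_i+3O/(2\lambda))$, each piece exceeding $O/(2\lambda)$) is a compressed form of the paper's separate bounds $|H_i|\le\lambda(1-y_i/O)$ and $r_i\le 2\lambda y_i/O+3$, whose $y_i$-terms cancel once $y_i\le O$ is used. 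The strictness and integrality bookkeeping you flag is routine, and the preliminary $(3\lambda+2)R$ estimate can simply be dropped.
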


The analysis separates the predicted large and small jobs at the threshold $\OPT(\hat q)/\lambda$. Since every predicted large job has size greater than $\OPT(\hat q)/\lambda$, each machine of the optimal predicted schedule contains fewer than $\lambda$ such jobs. The remaining jobs are partitioned into blocks whose predicted loads are controlled by the cutoff $\OPT(\hat q)/(2\lambda)$, which bounds the additional predicted load introduced by the repacking. For robustness, every block is contained in a single machine of the optimal min-max schedule, and therefore its maximum load over $\mathcal U$ is at most $\OPT^{\mathrm{rob}}(\mathcal U)$. Combining the bounds on the number of predicted large jobs and the number of blocks assigned to each machine gives the stated consistency and robustness. The full proof is deferred to \Cref{app:proof-general-uncertainty-tradeoff}.

In \Cref{thm:polytime-general-uncertainty} in \Cref{app:proof-polytime-general-uncertainty}, we show that by replacing the predicted optimal schedule in \Cref{alg:general-uncertainty-identical} with a poly-time $\gamma$-approximate schedule for makespan minimization on identical machines, and replacing the optimal min-max schedule over $\mathcal U$ with a poly-time $\gamma_{\mathrm{rob}}$-approximate robust schedule, the same block construction and first-crossing assignment give a polynomial-time implementation with consistency $\gamma(1+2/\lambda)$ and robustness $(2\lambda+4)\gamma_{\mathrm{rob}}$ for every $\lambda>0$.

\subsection{Impossibility for restricted assignment}
\label{subsec:general-restricted-impossibility}

The positive result for identical machines relies on the fact that every job can be assigned to any machine. Under restricted assignment, arbitrary uncertainty sets can encode correlations that prevent such a repacking argument. The hard instance partitions the jobs into color classes. The robust optimum assigns all jobs of each color to a dedicated machine, so that every machine is exposed to at most one color. In contrast, any schedule with a small predicted makespan must distribute the jobs of each color across several machines. Hence, some machine contains jobs from many distinct colors. The adversary then selects one job from each of these colors to realize a large processing time, which creates a large load on that machine.

\begin{restatable}{theorem}{GeneralRestrictedNoTradeoff}
\label{thm:general-restricted-no-constant-tradeoff}
For every constants $\alpha\geq1$ and $\beta\geq1$, there is a restricted-assignment instance with a nonempty bounded uncertainty set $\mathcal U$ such that no randomized algorithm is both $\alpha$-consistent and $\beta$-robust.
\end{restatable}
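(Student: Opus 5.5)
The plan is to build an explicit family of restricted-assignment instances indexed by two integers: a number of colors $k$ and a number of rows $s$. There are $k$ \emph{color machines} $c\in[k]$ and $s$ \emph{row machines} $r\in[s]$. For every color $c$ there is a dummy job $d_c$ eligible only on machine $c$, with size $k$ in every scenario. There are also $s$ color jobs $(c,r)$, $r\in[s]$, each eligible on exactly two machines: color machine $c$ and row machine $r$. The predicted scenario $\hat q$ gives every color job size $1$.

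For each set $T$ containing at most one job of each color, the scenario $q^T$ gives the jobs in $T$ size $L$, the other color jobs size $1$, and the dummies size $k$. Let $\mathcal U=\{q^T\}$; it is finite, hence nonempty and bounded, and $\hat q=q^{\varnothing}\in\mathcal U$. The parameters will be chosen as $k>2\beta$ (or a suitable constant multiple of $\beta$), $s=2\alpha k$, and $L$ large relative to $k+s$.

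\emph{Step 1 (benchmarks).} First, $\OPT(\hat q)=k$. It is at least $k$ because of the dummies, and the schedule that puts each $(c,r)$ on row machine $r$ gives every row machine $k$ unit jobs while the color machines hold only their dummy. Second, $\OPTR(\mathcal U)\le k+s-1+L$. The witness schedule puts all jobs of color $c$ on machine $c$, leaving the row machines empty. Since any scenario raises at most one job per color, each color machine has load at most $k+(s-1)+L$. For $L\gg k+s$ this is $(1+o(1))L$.

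\emph{Step 2 (consistency forces mixing).} Take a schedule with predicted makespan at most $\alpha k$. Color machine $c$ already carries its dummy of size $k$, so it holds at most $(\alpha-1)k$ color-$c$ jobs. Hence at least $s-(\alpha-1)k\ge s/2$ jobs of each color go to row machines. Summing over colors, the row machines together hold at least $ks/2$ color jobs. By averaging over the $s$ rows, some row machine $r$ holds at least $k/2$ jobs. These jobs automatically have distinct colors, because row $r$ can only receive the jobs $(c,r)$.

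\emph{Step 3 (adversary).} Let $T$ consist of the jobs on that row machine, one per color. Then $q^T\in\mathcal U$ and the load of row $r$ under $q^T$ is at least $kL/2$. The robustness ratio is therefore at least $\frac{kL/2}{k+s+L}$, which exceeds $\beta$ once $k>2\beta$ and $L$ is large.

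\emph{Randomized algorithms.} I expect the main care to be needed here, since the consistency and robustness bounds hold only in expectation. Write $X_c$ for the number of color-$c$ jobs on machine $c$ and $Y_r$ for the number of jobs on row $r$.
\begin{itemize}
\item Since the makespan dominates each machine's load, expected consistency gives $k+\mathbb E[X_c]\le\alpha k$ for each $c$, so $\mathbb E[X_c]\le(\alpha-1)k$.
\item Linearity then gives $\mathbb E\bigl[\sum_r Y_r\bigr]\ge ks/2$, and hence $\mathbb E[\max_r Y_r]\ge k/2$.
\item For every realization, the worst-case makespan is at least $L\max_r Y_r$, because the adversary's scenario may depend on the realized schedule inside $\max_{q\in\mathcal U}$.
\end{itemize}
Together these give expected robust makespan at least $kL/2$, and the same parameter choice yields the contradiction. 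If the paper's randomized robustness notion instead fixes the scenario before the randomness, I would switch to a fixed row $r$ with $\mathbb E[Y_r]\ge k/2$. In that case the bound needs an expectation over a random $T$, and the constants would need adjusting.
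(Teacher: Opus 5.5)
Your proposal is correct and is essentially the paper's proof. It uses the same two-machine eligibility gadget, has consistency push at least half of every color off its color machine, averages to find a machine with jobs of at least $k/2$ distinct colors, and lets the adversary deviate one job per color. There are two small differences in the construction. Your separate row machines with size-$k$ dummies replace the paper's device of letting the first $k$ of its $\ell$ machines double as rows, so the paper needs no dummies, only the cap of $\alpha k$ unit jobs per machine. Your $\mathcal U$ also contains $\hat q=q^{\varnothing}$, as the framework requires, whereas the paper's $\{q^\sigma\}$ omits it (harmlessly, since $\hat q\le q^\sigma$).

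One correction concerns your closing remark. The paper proves the stronger oblivious-adversary bound, and in your construction this needs neither a random $T$ nor new constants. Take $r^\star$ with $\mathbb E[Y_{r^\star}]\ge k/2$, which exists since $\sum_r \mathbb E[Y_r]\ge ks/2$, and fix $T^\star=\{(c,r^\star):c\in[k]\}$. This set has one job per color, so $q^{T^\star}\in\mathcal U$. It also contains every job that row $r^\star$ can ever receive, so that row's load under $q^{T^\star}$ is exactly $L\,Y_{r^\star}$ in every realization. Hence $\mathbb E[C_{\max}(\mathcal S,q^{T^\star})]\ge kL/2$. This is the paper's choice $\sigma^\star\equiv i^\star$, and it implies your bound on $\mathbb E[\max_{q}C_{\max}(\mathcal S,q)]$.
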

We give the construction of the hard instance and defer the full proof to \Cref{app:proof-general-restricted-no-constant-tradeoff}. Fix constants $\alpha\geq1$ and $\beta\geq1$, and choose integers $k>2\beta$, $\ell\geq2\alpha k$, and $L$ sufficiently large so that $kL/2>\beta(\ell+L)$. There are $\ell$ machines, the first $k$ of which are color machines. For every color $c\in[k]$ and every index $i\in[\ell]$, there is one job $j_{c,i}$ with $\hat p_{j_{c,i}}=1$ and eligibility set $M_{j_{c,i}}=\{c,i\}$. The uncertainty set contains one scenario for every function $\sigma:[k]\to[\ell]$; in that scenario, job $j_{c,\sigma(c)}$ has size $1+L$, while every other job has size $1$. The robust schedule keeps every color on its color machine, giving a robust load at most $\ell+L$. In contrast, consistency forces many jobs to leave their color machines, so some machine contains jobs from many colors. The adversary then chose one job from each represented color on that machine to deviate, creating expected makespan at least $kL/2>\beta(\ell+L)$. Thus, no deterministic or randomized algorithm can achieve a constant consistency-robustness tradeoff.

\section{Concluding Remarks}

We introduce predictions into robust optimization and develop a general consistency--robustness framework, which we instantiate for robust makespan scheduling. Our results give a structural classification across standard uncertainty models and machine environments, showing when constant consistency and constant robustness can be achieved simultaneously.

A central open question is to determine the optimal consistency--robustness Pareto frontiers beyond interval uncertainty. In particular, can budgeted uncertainty admit the same $(1+1/\lambda,1+\lambda)$ tradeoff as interval uncertainty, at least on identical or restricted-assignment machines, or is the additional loss in robustness inherent? More generally, the best possible dependence between consistency and robustness for arbitrary uncertainty on identical machines remains open.

We hope that this work stimulates further research on leveraging predictions in robust optimization beyond makespan scheduling. Simple two-choice examples show that such extensions cannot be expected for arbitrary selection problems in a black-box way; see \Cref{prop:two-solution-selection-separation} in \Cref{app:selection-separation}. A natural next step is therefore to identify robust combinatorial optimization problems with enough structure to combine prediction-optimal and robust solutions with bounded loss.

\newpage

\section*{Acknowledgments}

Eric Balkanski was supported by NSF grant CCF-2210501. Nicole Megow was supported by the Deutsche Forschungsgemeinschaft (DFG, German Research Foundation) through project no. 547924951 and under Germany's Excellence Strategy – EXC 3036 – project no. 533607631.

\printbibliography

\newpage

\appendix
\crefalias{section}{appendix}
\crefalias{subsection}{appendix}
\crefalias{subsubsection}{appendix}

\section*{Appendices}

\section{Impossibility for Alternative Robustness Benchmark}
\label{app:impossibility}

In algorithms with predictions, the benchmark for robustness $\beta_{\text{ALPS}}$ is typically the optimal solution had the true instance been known, which would correspond to this alternative notion of robustness being defined in  robust makespan minimization as
$$\beta_{\text{ALPS}} = \max_{\mathcal{U}, \hat{q} \in \mathcal{U}} \max_{q \in \mathcal{U}} \frac{C_{\max}(\textsc{ALG}(\mathcal{U}, \hat{q}), q)}{\min_{\mathcal{S}}C_{\max}(\mathcal{S}, q)},$$
which is in contrast to the standard robustness notion from robust optimization
 where the benchmark is the optimal robust schedule $\mathcal{S}$ that does not know the true instance $q$.

We show that under this alternative  robustness measure $\beta_{\mathrm{ALPS}}$, no deterministic algorithm can achieve an $o(m)$-approximation and no randomized algorithm can achieve an $o(\log m / \log \log m)$-approximation, even in the special case of identical machines and interval uncertainty. Since these lower bounds hold for the problem without predictions, they also hold without imposing any consistency requirement.

\begin{theorem}
\label{thm:alps-interval-impossibility}
For every sufficiently large integer $m$, there exists an identical-machines instance with $m$ machines, $n=m^2$ jobs, an interval uncertainty set $\mathcal U_m$, and a predicted instance $\hat q\in\mathcal U_m$ such that every deterministic algorithm satisfies
$$
\max_{q\in\mathcal U_m}
\frac{C_{\text{max}}(\textsc{ALG}(\mathcal U_m,\hat q),q)}{\OPT(q)}
\geq
\frac{m}{2}.
$$
Moreover, every randomized algorithm satisfies
$$
\max_{q\in\mathcal U_m}
\frac{\mathbb E\!\left[C_{\text{max}}(\textsc{ALG}(\mathcal U_m,\hat q),q)\right]}{\OPT(q)}
=
\Omega\left(\frac{\log m}{\log\log m}\right),
$$
where the expectation is over the internal randomness of the algorithm.
\end{theorem}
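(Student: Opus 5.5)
The plan is to build one instance in which every scenario keeps the optimal makespan at $1$, yet the adversary can force some machine of any fixed schedule to carry many large jobs. Take $m$ identical machines and $n=m^2$ jobs, each with interval $[p^-_j,p^+_j]=[0,1]$, and prediction $\hat q=0$ (or any fixed point of the box). For a set $K\subseteq[n]$ with $|K|=m$, let $q^K$ be the scenario in which the jobs of $K$ have size $1$ and all others size $0$. Then $\OPT(q^K)=1$, by placing one job of $K$ on each machine. For a schedule $\mathcal S$, the makespan satisfies $C_{\max}(\mathcal S,q^K)=\max_i |S_i\cap K|$. So both bounds reduce to lower bounds on the largest overlap of $K$ with a single machine of the output schedule. (If zero sizes are undesirable, use $[\varepsilon,1]$ with $\varepsilon=1/m^2$; this changes $\OPT(q^K)$ and the loads by at most a factor $2$.)

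\emph{Deterministic bound.} The algorithm's output on $(\mathcal U_m,\hat q)$ is a single fixed schedule. Since there are $m^2$ jobs on $m$ machines, some machine $S_i$ contains at least $m$ jobs. Choose $K\subseteq S_i$ with $|K|=m$. Then the ratio is at least $m/1\geq m/2$, even allowing for the $\varepsilon$ perturbation.

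\emph{Randomized bound.} The algorithm's output is a distribution over schedules. I will use Yao-style averaging: draw $K$ uniformly at random among $m$-subsets of $[n]$, independently of the algorithm. It then suffices to show that for \emph{every} fixed schedule $\mathcal S$,
\[
\mathbb E_K\bigl[\max_i |S_i\cap K|\bigr]\geq c\log m/\log\log m .
\]
Averaging over the algorithm's randomness then shows that some fixed $K$ attains this expectation.

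For the fixed-schedule bound, set $t=\log m/\log\log m$ and split into two cases.
\begin{enumerate}
\item Some machine has $|S_i|\geq mt$. Then $\mathbb E|S_i\cap K|=|S_i|/m\geq t$, and the claim follows directly.
\item Every machine has fewer than $mt$ jobs. Machines with $|S_i|<m/2$ hold fewer than $m^2/2$ jobs in total. Hence the "big" machines, those with $|S_i|\geq m/2$, hold at least $m^2/2$ jobs, and since each holds at most $mt$ jobs there are at least $r\geq m/(2t)$ of them. Each of the $m$ draws of $K$ lands in a given big machine with probability roughly $|S_i|/m^2\geq 1/(2m)$. This is a balls-into-bins situation: $m$ balls, $r$ bins, each bin receiving each ball with probability at least $1/(2m)$.
\end{enumerate}

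The main obstacle is making the balls-into-bins estimate in the second case rigorous. The draws are without replacement and the bins are nonuniform, with $|S_i|$ between $m/2$ and $mt$. I would first replace uniform sampling without replacement by $m$ independent draws with replacement, losing only constant factors since collisions are rare relative to $m^2$. I would then use the standard lower-tail argument for the maximum load. Each bin exceeds $k=c\log m/\log\log m$ with probability at least $\binom{m}{k}(1/(2m))^k(1-O(t/m))^{m}\geq (1/(2ek))^{k}e^{-O(t)}\geq m^{-1/2}$ for small $c$. Bin loads are negatively associated, so the probability that no bin among the $r\geq m/(2t)$ bins exceeds $k$ is at most $(1-m^{-1/2})^{r}=o(1)$, because $r\geq m/(2t)\gg m^{1/2}$. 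This yields $\mathbb E_K[\max_i|S_i\cap K|]=\Omega(\log m/\log\log m)$ while $\OPT(q^K)=1$, giving the randomized bound.
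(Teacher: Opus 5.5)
Your proof is correct, and it takes a different route from the paper's in the randomized part. The paper uses sizes in $[1,m^3]$ and puts each job into the deviating set $T$ independently with probability $1/m$. This makes the counts $X_i=|S_i\cap T|$ on disjoint machines exactly independent, so the ``many medium machines'' case needs only a product bound and an elementary estimate of $\Pr[X_i=t]$. The price is that $|T|$ is random, so $\OPT(q^T)$ fluctuates. The paper therefore intersects with the event $\{|T|\le 2m\}$ (via Chernoff) to get $\OPT(q^T)\le 3m^3$. For the same reason it treats a heavily loaded machine with a Chernoff lower-tail bound: it needs a probability, not an expectation, in order to intersect with that event.

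You instead fix $|K|=m$ with $0/1$ sizes, so $\OPT(q^K)=1$ for every $K$ and the ratio is exactly $\max_i|S_i\cap K|$. This makes the heavy-machine case a one-line expectation bound and removes any need to control $|K|$. It also gives $m$ rather than $m/2$ in the deterministic case. The cost is that the machine counts become dependent, so you need negative association (Joag-Dev--Proschan for sampling without replacement, or Dubhashi--Ranjan for balls in bins) to get $\Pr[\max_i X_i<k]\le\prod_i\Pr[X_i<k]$. That is a heavier external tool than the paper's plain independence.

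Two points deserve care when you write this out.
\begin{enumerate}
\item The with-replacement coupling does not lose ``only constant factors'' for arbitrary events. It works here because all $m$ draws are distinct with probability at least $1-\frac{m-1}{2m}>\frac12$, and conditioned on distinctness the drawn set is a uniform $m$-subset on which $Y_i=|S_i\cap K|$. Since your target event has probability $1-o(1)$ in the with-replacement model, you get $\Pr_K[\max_i|S_i\cap K|\ge k]\ge 1-o(1)-\frac12$. Alternatively, skip the coupling and apply negative association and the hypergeometric point probability directly.
\item With $[0,1]$ intervals, the scenario $q=0$ produces $0/0$ inside the maximum. Your $[\varepsilon,1]$ variant, or the paper's strictly positive lower bounds, avoids this.
\end{enumerate}
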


\begin{proof}
Fix a sufficiently large integer $m$ and let $n:=m^2$. Every job $j\in[n]$ has predicted job size $\hat p_j=1$ and upper job size $p_j^+=m^3$. Thus, $\mathcal U_m=[1,m^3]^{m^2}$. Let $\hat q$ be the predicted instance, so that every job has size $1$ under $\hat q$. The total predicted job size is $m^2$, and therefore $\OPT(\hat q)\geq m$. Assigning exactly $m$ jobs to every machine gives predicted makespan $m$, so $\OPT(\hat q)=m$. This equality is not needed in the lower-bound argument, but it shows that the predicted instance is balanced.

We first consider a deterministic algorithm. Let $\mathcal S=(S_1,\ldots,S_m):=\textsc{ALG}(\mathcal U_m,\hat q)$. Since $\sum_{i=1}^m|S_i|=m^2$, some machine $i^\star\in[m]$ satisfies $|S_{i^\star}|\geq m$. Choose any set $T\subseteq S_{i^\star}$ with $|T|=m$, and let $q^T\in\mathcal U_m$ be the instance in which every job in $T$ has size $m^3$ and every job in $[n]\setminus T$ has size $1$. All jobs in $T$ are assigned to machine $i^\star$, and therefore $C_{\text{max}}(\mathcal S,q^T)\geq m\cdot m^3=m^4$.

The instance $q^T$ contains exactly $m$ jobs of size $m^3$ and $m^2-m$ jobs of size $1$. Its total job size is $m^4+m^2-m$, so every schedule has a makespan of at least $(m^4+m^2-m)/m=m^3+m-1$. Conversely, assigning one job of $T$ and exactly $m-1$ jobs of $[n]\setminus T$ to every machine gives a load $m^3+m-1$ on every machine. Hence $\OPT(q^T)=m^3+m-1$. It follows that
$$
\frac{C_{\text{max}}(\mathcal S,q^T)}{\OPT(q^T)}
\geq
\frac{m^4}{m^3+m-1}
\geq
\frac{m}{2},
$$
where the last inequality is equivalent to $m^4-m^2+m\geq0$. Since the maximum is taken over all instances in $\mathcal U_m$, the instance $q^T$ may depend on the deterministic schedule $\mathcal S$.

We next consider a randomized algorithm. The instance used below is selected independently of the internal randomness of the algorithm. Let $T\subseteq[n]$ be a random set obtained by including every job independently with probability $1/m$, and let $q^T$ be the instance in which every job in $T$ has size $m^3$ and every job in $[n]\setminus T$ has size $1$.

Fix an arbitrary deterministic schedule $\mathcal S=(S_1,\ldots,S_m)$. For every machine $i\in[m]$, let $X_i:=|S_i\cap T|$, and let $M:=\max_{i\in[m]}X_i$ and $K:=|T|$. Since every job counted by $X_i$ has size $m^3$, we have $C_{\text{max}}(\mathcal S,q^T)\geq m^3M$.

Set $t:=\left\lfloor\frac{\log m}{16\log\log m}\right\rfloor$. For all sufficiently large $m$, we have $t\geq1$ and $t\leq m/2$. We first prove that $\text{Pr}_T[M\geq t]\geq1/2$.

For every machine $i\in[m]$, let $s_i:=|S_i|$ and $\mu_i:=s_i/m$. Then $X_i\sim\operatorname{Bin}(s_i,1/m)$ and $\mathbb E_T[X_i]=\mu_i$. Moreover, $\sum_{i=1}^m\mu_i=\frac{1}{m}\sum_{i=1}^ms_i=m$.

Suppose first that $\mu_i\geq4t$ for some machine $i$. Since $t\leq\mu_i/4$, the event $\{X_i<t\}$ is contained in the event $\{X_i\leq\mu_i/2\}$. The multiplicative Chernoff bound gives $\text{Pr}_T[X_i<t]\leq\text{Pr}_T[X_i\leq\mu_i/2]\leq\exp(-\mu_i/8)\leq\exp(-t/2)$. Since $t$ tends to infinity with $m$, we have $\exp(-t/2)\leq1/2$ for all sufficiently large $m$. Thus, $\text{Pr}_T[M\geq t]\geq\text{Pr}_T[X_i\geq t]\geq1/2$.

Suppose now that $\mu_i<4t$ for every machine $i$, and let $I:=\{i\in[m]:\mu_i\geq1/2\}$. The machines outside $I$ contribute less than $m/2$ to $\sum_{i=1}^m\mu_i=m$, and therefore $\sum_{i\in I}\mu_i>m/2$. Since $\mu_i<4t$ for every $i\in I$, we obtain $|I|\geq m/(8t)$.

Fix a machine $i\in I$. Since $1/2\leq\mu_i<4t$, we have $m/2\leq s_i<4mt$. In particular, $s_i\geq t$ for all sufficiently large $m$. Therefore,
$$
\text{Pr}_T [X_i\geq t]
\geq
\text{Pr}_T [X_i=t]
=
\binom{s_i}{t}
\left(\frac{1}{m}\right)^t
\left(1-\frac{1}{m}\right)^{s_i-t}.
$$
We next lower bound the two factors in this expression. Since $s_i\geq t$,
$$
\binom{s_i}{t}
=
\prod_{r=0}^{t-1}\frac{s_i-r}{t-r}
\geq
\left(\frac{s_i}{t}\right)^t,
$$
where the inequality follows because $(s_i-r)/(t-r)\geq s_i/t$ is equivalent to $r(s_i-t)\geq0$. Hence $\binom{s_i}{t}(1/m)^t\geq(s_i/(tm))^t=(\mu_i/t)^t\geq(1/(2t))^t$.

For every $m\geq2$, we have $\log(1-1/m)\geq-2/m$. Therefore,
$$
\left(1-\frac{1}{m}\right)^{s_i-t}
\geq
\exp\left(-\frac{2(s_i-t)}{m}\right)
\geq
\exp\left(-\frac{2s_i}{m}\right)
>
e^{-8t},
$$
where the last inequality follows from $s_i/m=\mu_i<4t$. Combining the two estimates gives $\text{Pr}_T[X_i\geq t]\geq(1/(2t))^te^{-8t}$.

We now verify that $(1/(2t))^te^{-8t}\geq m^{-1/8}$ for all sufficiently large $m$. Let $L:=\log m$ and $\ell:=\log\log m$. Since $t\leq L/(16\ell)$ and $\log(2t)\leq\ell$ for all sufficiently large $m$, we have $t\log(2t)\leq L/16$. Moreover, $8t\leq L/(2\ell)\leq L/16$ whenever $\ell\geq8$. Hence $t\log(2t)+8t\leq L/8$, and therefore $(1/(2t))^te^{-8t}=\exp(-t\log(2t)-8t)\geq e^{-L/8}=m^{-1/8}$. Thus, $\text{Pr}_T[X_i\geq t]\geq m^{-1/8}$ for every $i\in I$.

The random variables $(X_i)_{i\in I}$ are mutually independent because the sets $(S_i)_{i\in I}$ are pairwise disjoint and each $X_i$ depends only on the independent inclusion indicators of the jobs in $S_i$. Consequently,
$$
\text{Pr}_T[M<t]
\leq
\prod_{i\in I}\text{Pr}_T[X_i<t]
\leq
(1-m^{-1/8})^{|I|}
\leq
\exp\left(-m^{-1/8}|I|\right)
\leq
\exp\left(-\frac{m^{7/8}}{8t}\right).
$$
Since $t=O(\log m/\log\log m)$, the final expression is at most $1/2$ for all sufficiently large $m$. Hence $\text{Pr}_T[M\geq t]\geq1/2$ also in this case.

It remains to control the total number of jobs in $T$. Since every one of the $m^2$ jobs is included independently with probability $1/m$, we have $K\sim\operatorname{Bin}(m^2,1/m)$ and $\mathbb E_T[K]=m$. The multiplicative Chernoff bound gives $\text{Pr}_T[K\geq2m]\leq(e/4)^m$. In particular, $\text{Pr}_T[K>2m]\leq1/4$ for all sufficiently large $m$. Therefore,
$$
\text{Pr}_T[M\geq t\text{ and }K\leq2m]
\geq
\text{Pr}_{T}[M\geq t]-\text{Pr}_T[K>2m]
\geq
\frac{1}{4}.
$$

Consider an instance $q^T$ for which $K\leq2m$. The $K$ jobs of size $m^3$ can be assigned so that every machine receives at most $\lceil K/m\rceil\leq2$ of them. The remaining $m^2-K$ jobs have size $1$ and can be assigned so that every machine receives at most $\lceil(m^2-K)/m\rceil\leq m$ of them. Hence $\OPT(q^T)\leq2m^3+m\leq3m^3$.

On the event $E:=\{M\geq t\text{ and }K\leq2m\}$, we have $C_{\text{max}}(\mathcal S,q^T)\geq m^3t$ and $\OPT(q^T)\leq3m^3$. Hence, for every deterministic schedule $\mathcal S$,
$$
\mathbb E_T\left[\frac{C_{\text{max}}(\mathcal S,q^T)}{\OPT(q^T)}\right]
\geq \mathbb E_T\left[\frac{C_{\text{max}}(\mathcal S,q^T)}{\OPT(q^T)}\mathbf 1_E\right] \geq \frac{t}{3}\text{Pr}_{T}[E] \geq \frac{t}{12}.
$$
The first inequality follows since the ratio is nonnegative, the second since $C_{\text{max}}(\mathcal S,q^T)/\OPT(q^T)\geq t/3$ on the event $E$, and the last since $\text{Pr}_T[E]\geq1/4$.

Let now $\mathcal S:=\textsc{ALG}(\mathcal U_m,\hat q)$ be the random schedule returned by a randomized algorithm. The random set $T$ is selected independently of the internal randomness of the algorithm. Conditioning on the realized schedule and applying the preceding bound, we obtain
$$
\mathbb E_T\left[\frac{\mathbb E_{\textsc{ALG}}\!\left[C_{\text{max}}(\mathcal S,q^T)\right]}{\OPT(q^T)}\right] = \mathbb E_T\mathbb E_{\textsc{ALG}}\left[\frac{C_{\text{max}}(\mathcal S,q^T)}{\OPT(q^T)}\right] = \mathbb E_{\textsc{ALG}}\mathbb E_T\left[\frac{C_{\text{max}}(\mathcal S,q^T)}{\OPT(q^T)}\right] \geq \frac{t}{12}.
$$
The first equality follows since $\OPT(q^T)$ is independent of the internal randomness of the algorithm, the second follows by interchanging the two finite expectations, and the inequality follows from the preceding bound after conditioning on the realized schedule.

By averaging over $T$, there exists a fixed set $T^\star\subseteq[n]$ such that $ \frac{\mathbb E_{\textsc{ALG}}\!\left[C_{\text{max}}(\mathcal S,q^{T^\star})\right]}{\OPT(q^{T^\star})} \geq \frac{t}{12}$.
Indeed, otherwise every term in the preceding expectation would be strictly smaller than $t/12$, which is impossible.

The instance $q^{T^\star}$ is fixed independently of the realized random schedule. Therefore, the lower bound holds against an oblivious adversary. Finally, $t=\left\lfloor\log m/(16\log\log m)\right\rfloor=\Omega(\log m/\log\log m)$, and hence
$$
\max_{q\in\mathcal U_m}
\frac{\mathbb E_{\textsc{ALG}}\!\left[C_{\text{max}}(\mathcal S,q)\right]}{\OPT(q)}
=
\Omega\left(\frac{\log m}{\log\log m}\right).
$$

The proof does not use any consistency property of the algorithm. Since the deterministic lower bound is at least $m/2$ and the randomized lower bound is $\Omega(\log m/\log\log m)$, both ratios diverge with $m$. Thus, no deterministic or randomized algorithm has an instance-independent finite $\beta_{\text{ALPS}}$.
\end{proof}

\section{Proofs Missing from \Cref{sec:preliminaries}}
\label{app:proofs-section-preliminaries}

\subsection{Proof of \Cref{lem:consistency-implies-smoothness}}
\label{app:proof-consistency-implies-smoothness}

\LemConsImpliesSmoothness*
\begin{proof}
If $\eta = \max_{i,j}\left\{\frac{q_{i,j}}{\hat{q}_{i,j}}, \frac{\hat{q}_{i,j}}{q_{i,j}}\right\}$, then every processing time satisfies $q_{ij}\le \eta \hat q_{ij}$ and $\hat q_{ij}\le \eta q_{ij}$. Therefore, for every schedule $\mathcal{S}$, we have $C_{\text{max}}(\mathcal S,q)\le \eta C_{\text{max}}(\mathcal S,\hat q)$ and $C_{\max}(\mathcal S,\hat q)\le \eta C_{\max}(\mathcal S,q)$. Applying the second inequality to an optimal schedule for $q$ gives $\OPT(\hat q)\le \eta \OPT(q)$. Using the first inequality for the schedule $\textsc{ALG}(\mathcal U,\hat q)$, followed by $\alpha$-consistency for $\mathcal U$, and then the preceding optimum comparison, we obtain $C_{\text{max}}(\textsc{ALG}(\mathcal U,\hat q),q)\le \eta C_{\text{max}}(\textsc{ALG}(\mathcal U,\hat q),\hat q)\le \eta\alpha\OPT(\hat q)\le \eta^2\alpha\OPT(q)$. This proves the claim.
\end{proof}

\section{Proofs Missing from \Cref{sec:interval-uncertainty}} \label{app:proofs-section-3}

\subsection{Proof of \Cref{thm:polynomial-implementation-threshold-splitting}}
\label{app:proof-polynomial-implementation-threshold-splitting}
\CorPolyimeThresholdSplitting*
\begin{proof}
Let $A(q)$ be the output of the $\gamma$-approximation algorithm over instance $q$. We have $\OPT(\hat q) \le C_{\text{max}}(A(\hat{q}),\hat q)\le \gamma \cdot \OPT(\hat q)$ and $\OPT(q^+)\le C_{\text{max}}(A(q^+),q^+)\le \gamma \cdot\OPT(q^+)$. The polynomial-time algorithm uses these two makespans in the threshold. It defines $\hat{H} \leftarrow \Big\{ j\in [n]: \lambda\cdot\frac{\hat{p}_j}{C_{\text{max}}(A(\hat{q}),\hat q)}>\frac{p^+_j}{C_{\text{max}}(A(q^+),q^+)}\Big\}$. It then computes a $\gamma$-approximate schedule $(\hat{S}_1, \ldots, \hat{S}_m)$ for the induced subinstance $\hat{q}[\hat{H}]$, and a $\gamma$-approximate schedule $(S_1^+, \ldots, S_m^+)$ for the induced subinstance $q^+[[n] \setminus \hat{H}]$, and returns the merged schedule $(\hat{S}_1 \cup S^+_1,\ldots,\hat{S}_m \cup S^+_m)$.

\textbf{Consistency.} The load of machine $i$ over instance $\hat{q}$ with job sizes $\hat{p}$ is $\frac{1}{s_i}\sum_{j\in \hat{S}_i \cup S^+_i}\hat{p}_j$. We have

$$\frac{1}{s_i} \sum_{j\in \hat{S}_i}\hat{p}_j \leq \gamma \cdot\OPT(\hat{q}[\hat{H}]) \leq \gamma \cdot \OPT(\hat{q}) $$
where the first inequality is since $(\hat{S}_1, \ldots, \hat{S}_m)$ is a $\gamma$-approximate optimal schedule for instance $\hat{q}[\hat{H}]$ and the second since $\hat{H} \subseteq [n]$ and by the monotonicity of the optimal makespan over instances ordered by inclusion. We also have

$$ \frac{1}{s_i} \sum_{j\in S^+_i}\hat{p}_j \leq \frac{ C_{\text{max}}(A(\hat{q}),\hat q)}{\lambda s_i\cdot C_{\text{max}}(A(q^+),q^+)} \sum_{j\in S^+_i} p^+_j \leq \frac{ \gamma^2 \cdot \OPT(\hat{q}) \cdot \OPT(q^+[[n] \setminus \hat{H}])}{ \lambda \cdot \OPT(q^+)} \leq \frac{\gamma^2 \cdot\OPT(\hat{q})}{\lambda}$$

where the first inequality is by definition of $\hat{H}$ and since jobs $j \in S_i^+$ are not in $\hat{H}$, the second since $C_{\text{max}}(A(\hat{q}),\hat q)\le \gamma\cdot\OPT(\hat q)$, $\OPT(q^+)\le C_{\text{max}}(A(q^+),q^+)$, and $(S_1^+, \ldots, S_m^+)$ is a $\gamma$-approximate optimal schedule for instance $q^+[[n] \setminus \hat{H}]$, and the third  by the monotonicity of $\OPT$. Thus, each machine has load at most $\gamma(1+\gamma/\lambda) \OPT(\hat{q})$ over instance $\hat{q}$.

\textbf{Robustness.} The robustness analysis follows similarly to the consistency analysis, but is for instance $q^+$ instead of $\hat{q}$. We have  $\frac{1}{s_i} \sum_{j\in S^+_i} p^+_j \leq \gamma\cdot \OPT(q^+[[n] \setminus \hat{H}]) \leq \gamma\cdot\OPT(q^+)$
and
$$\frac{1}{s_i} \sum_{j\in \hat{S}_i} p^+_j \leq \frac{\lambda \cdot C_{\text{max}}(A(q^+),q^+)}{ s_i\cdot C_{\text{max}}(A(\hat{q}),\hat q)} \sum_{j\in \hat S_i} \hat{p}_j \leq \frac{\gamma^2\cdot \lambda \cdot \OPT(q^+)\cdot \OPT(\hat{q}[\hat{H}])}{\OPT(\hat{q})} \leq \gamma^2 \cdot \lambda \cdot \OPT(q^+).$$ Thus, each machine has load at most $\gamma(1+\gamma\lambda) \OPT(q^+)$ over instance $q^+$.
\end{proof}

\subsection{Polynomial time implementation via vector scheduling}
\label{app:vector-scheduling-polytime-identical}

In this subsection, we record an alternative polynomial-time implementation for identical machines under interval uncertainty. This implementation is different from the threshold-splitting construction. It uses \Cref{thm:threshold-splitting-rr} only as an existence result, and then applies the fixed-dimensional vector-scheduling approximation algorithm of Chekuri and Khanna \cite{DBLP:journals/siamcomp/ChekuriK04}.

\begin{theorem} \label{thm:vector-scheduling-polytime-identical} For identical machines and interval uncertainty, for every $\lambda>0$ and every $\varepsilon>0$, there is a polynomial-time algorithm that is  $(1+\varepsilon)\left(1+\frac1\lambda\right)$-consistent and $(1+\varepsilon)(1+\lambda)$-robust.
\end{theorem}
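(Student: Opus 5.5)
The plan is to treat each job $j$ as a two-dimensional vector and solve a fixed-dimensional vector scheduling problem. For identical machines with interval uncertainty, robustness reduces to the makespan on $q^+$. First, compute in polynomial time estimates $\hat A$ and $A^+$ with
\[
\OPT(\hat q)\le \hat A\le(1+\delta)\OPT(\hat q), \qquad \OPT(q^+)\le A^+\le(1+\delta)\OPT(q^+),
\]
using the PTAS for makespan on identical machines (Hochbaum--Shmoys) with a small $\delta$ depending on $\varepsilon$. Since these are $(1+\delta)$-approximate values rather than exact optima, take each estimate to be the makespan of the returned schedule, so the lower bound $\OPT\le\hat A$ holds automatically. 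Then define for each job the vector
\[
v_j=\Bigl(\frac{\hat p_j}{(1+1/\lambda)\hat A},\ \frac{p^+_j}{(1+\lambda)A^+}\Bigr)\in\mathbb R^2_{\ge0}.
\]
The goal is to find a schedule minimizing $\max_i\|\sum_{j\in S_i}v_j\|_\infty$.

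Next, invoke \Cref{thm:threshold-splitting-rr} purely as an existence statement. The schedule produced by \tsAlg with exact optima has predicted makespan at most $(1+1/\lambda)\OPT(\hat q)\le(1+1/\lambda)\hat A$ and worst-case makespan at most $(1+\lambda)\OPT(q^+)\le(1+\lambda)A^+$. Therefore the optimum of the two-dimensional vector scheduling instance is at most $1$. Apply the Chekuri--Khanna PTAS for vector scheduling in fixed dimension $d=2$ with accuracy $\delta$; it runs in polynomial time and returns a schedule $\mathcal S$ with vector makespan at most $1+\delta$. Translating back coordinatewise gives
\[
C_{\max}(\mathcal S,\hat q)\le(1+\delta)(1+1/\lambda)\hat A\le(1+\delta)^2(1+1/\lambda)\OPT(\hat q),
\]
and the analogous bound $C_{\max}(\mathcal S,q^+)\le(1+\delta)^2(1+\lambda)\OPT(q^+)$. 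Since $q^+$ is the worst-case scenario, the latter equals $C^{\mathrm{rob}}_{\max}(\mathcal S,\mathcal U)$ and $\OPT(q^+)=\OPT^{\mathrm{rob}}(\mathcal U)$. Choosing $\delta$ with $(1+\delta)^2\le1+\varepsilon$ finishes the argument.

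The main subtlety is the normalization. The threshold inside \tsAlg uses the exact optima, but we only need the \emph{existence} of a good schedule, so the exact values never have to be computed. The normalization by approximate values only has to upper bound the true optima, which is why the estimates are taken as makespans of actual schedules. The remaining routine checks are that zero-size jobs and the case $\OPT=0$ are handled, for instance by dropping a coordinate whose optimum is $0$ and forcing those jobs' sizes to be $0$, and that the running time stays polynomial for fixed $\varepsilon$ and $d=2$.
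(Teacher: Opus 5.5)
Your proposal is correct and follows the paper's proof essentially step for step. You normalize by the makespans of $(1+\delta)$-approximate schedules, use \Cref{thm:threshold-splitting-rr} only as an existence result to certify that the two-dimensional vector-scheduling optimum is at most $1$, and then apply the Chekuri--Khanna PTAS. The differences are cosmetic: you use a single $\delta$ with $(1+\delta)^2\le 1+\varepsilon$ where the paper uses two parameters $\varepsilon_0,\varepsilon_1$, and you treat the degenerate case $\OPT=0$ explicitly, which the paper leaves implicit.
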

\begin{proof}
Fix $\varepsilon_0,\varepsilon_1>0$ such that $(1+\varepsilon_0)(1+\varepsilon_1)\leq 1+\varepsilon.$ Let $A(q)$ be the output of a $(1+\varepsilon_0)$-approximation algorithm for identical-machine makespan minimization over instance $q$. We have $\OPT(\hat q)\leq C_{\max}(A(\hat q),\hat q)\leq (1+\varepsilon_0)\OPT(\hat q)$ and $\OPT(q^+)\leq C_{\max}(A(q^+),q^+)\leq(1+\varepsilon_0)\OPT(q^+)$.

Define a two-dimensional vector-scheduling instance in which every job $j\in[n]$ has vector $ v_j= \left(\frac{\hat p_j}{(1+1/\lambda)C_{\max}(A(\hat q),\hat q)},\frac{p_j^+}{(1+\lambda)C_{\max}(A(q^+),q^+)}\right)$. By \Cref{thm:threshold-splitting-rr}, there exists a schedule $\mathcal S$ such that $C_{\max}(\mathcal S,\hat q)\leq \left(1+\frac{1}{\lambda}\right)\OPT(\hat q)$ and $C_{\max}(\mathcal S,q^+)\leq (1+\lambda)\OPT(q^+)$.

Since $\OPT(\hat q)\leq C_{\max}(A(\hat q),\hat q)$ and $\OPT(q^+)\leq C_{\max}(A(q^+),q^+)$, the load of every machine under $\mathcal S$ in each coordinate of the vector-scheduling instance is at most one. Thus, the optimal maximum load of the vector-scheduling instance is at most one.

Run the fixed-dimensional vector-scheduling PTAS of Chekuri and Khanna on this instance with accuracy parameter $\varepsilon_1$. It returns a schedule $\mathcal S'$ whose load on every machine in each coordinate is at most $1+\varepsilon_1$. Therefore,
$$
\begin{aligned}
C_{\max}(\mathcal S',\hat q) &\leq (1+\varepsilon_1)\left(1+\frac{1}{\lambda}\right) C_{\max}(A(\hat q),\hat q)\\ &\leq (1+\varepsilon_1)(1+\varepsilon_0) \left(1+\frac{1}{\lambda}\right)\OPT(\hat q)\\ &\leq (1+\varepsilon)\left(1+\frac{1}{\lambda}\right)\OPT(\hat q)
\end{aligned}
$$
where the first inequality is since the vector-scheduling PTAS returns a schedule whose load in the first coordinate is at most $1+\varepsilon_1$, the second since $C_{\text{max}}(A(\hat q),\hat q)\leq (1+\varepsilon_0)\OPT(\hat q)$, and the third by the choice of $\varepsilon_0$ and $\varepsilon_1$ such that $(1+\varepsilon_0)(1+\varepsilon_1)\leq 1+\varepsilon$.

Similarly,
$$
\begin{aligned}
C_{\max}(\mathcal S',q^+) &\leq
(1+\varepsilon_1)(1+\lambda)C_{\max}(A(q^+),q^+)\\ &\leq (1+\varepsilon_1)(1+\varepsilon_0)(1+\lambda)\OPT(q^+)\\ &\leq (1+\varepsilon)(1+\lambda)\OPT(q^+)
\end{aligned}
$$
where the first inequality is since the vector-scheduling PTAS returns a schedule whose load in the second coordinate is at most $1+\varepsilon_1$, the second since $C_{\text{max}}(A(q^+),q^+)\leq (1+\varepsilon_0)\OPT(q^+)$, and the third by the choice of $\varepsilon_0$ and $\varepsilon_1$ such that $(1+\varepsilon_0)(1+\varepsilon_1)\leq 1+\varepsilon$.
\end{proof}

\subsection{Proof of \Cref{thm:interval-identical-lower}}
\label{app:proof-interval-identical-lower}

\IntervalIndenticalLowerBound*
\begin{proof}
We first prove the deterministic lower bound. Fix an integer $1\leq \lambda\leq m-1$, a constant $\alpha<1+1/\lambda$, and a parameter $L\geq m-1$. Consider an instance with $\lambda$ jobs satisfying $(\hat p_j,p_j^+)=(1/\lambda,L)$ and $m-1$ jobs satisfying $(\hat p_j,p_j^+)=(1,1)$. In particular, $\hat p_j\leq p_j^+$ for every job $j\in[n]$.

We first compute $\OPT(\hat q)$. Assign the $m-1$ jobs with predicted processing time $1$ to distinct machines and assign all $\lambda$ jobs with predicted processing time $1/\lambda$ to the remaining machine. This schedule has a predicted makespan of $1$. Since the instance contains a job with predicted processing time $1$, every schedule has predicted makespan at least $1$. Thus, $\OPT(\hat q)=1$.

We next compute $\OPT(q^+)$. Assign the $\lambda$ jobs with worst-case processing time $L$ to distinct machines and assign all $m-1$ remaining jobs to one of the machines that does not contain a job of the first type. Such a machine exists since $\lambda\leq m-1$. The worst-case load of this machine is $m-1$, while every machine containing a job of the first type has worst-case load $L$. Since $L\geq m-1$, we have $\OPT(q^+)\leq\max\{L,m-1\}=L$. The instance contains a job with worst-case processing time $L$, and therefore $\OPT(q^+)\geq L$. Thus, $\OPT(q^+)=L$.

Let $\mathcal S$ be a deterministic schedule satisfying $C_{\text{max}}(\mathcal S,\hat q)\leq\alpha\OPT(\hat q)=\alpha$. Since $\alpha<1+1/\lambda\leq2$, no machine can contain two jobs with predicted processing time $1$. Hence, the $m-1$ jobs of the second type occupy $m-1$ distinct machines. No job with predicted processing time $1/\lambda$ can be assigned to any of these machines, since this would give predicted load $1+1/\lambda>\alpha$. Thus, all $\lambda$ jobs of the first type are assigned to the remaining machine. The worst-case load of this machine is $\lambda L$, and therefore $C_{\text{max}}(\mathcal S,q^+)\geq\lambda L=\lambda\OPT(q^+)$. Hence, no deterministic algorithm with consistency strictly better than $1+1/\lambda$ can have robustness strictly smaller than $\lambda$.

We now prove the randomized lower bound. Let $h:=\lceil\lambda/2\rceil$. Since $\lambda\geq1$, we have $1\leq h\leq\lambda$, and since $\lambda\leq m-1$, we have $h\leq m-1$. Consider the same construction with $h$ in place of $\lambda$ and any $L\geq m-1$. Thus, there are $h$ jobs satisfying $(\hat p_j,p_j^+)=(1/h,L)$ and $m-1$ jobs satisfying $(\hat p_j,p_j^+)=(1,1)$. By the deterministic calculations above, we have $\OPT(\hat q)=1$ and $\OPT(q^+)=L$. Moreover, every deterministic schedule $\mathcal S$ satisfying $C_{\text{max}}(\mathcal S,\hat q)<1+1/h$ also satisfies $C_{\text{max}}(\mathcal S,q^+)\geq hL$.

Let $\mathcal S:=\textsc{ALG}(\mathcal U,\hat q)$ be the random schedule returned by the algorithm, and suppose that $\mathbb E[C_{\text{max}}(\mathcal S,\hat q)]\leq1+1/\lambda$. Let $E$ be the event that $C_{\text{max}}(\mathcal S,\hat q)\geq1+1/h$. Since every deterministic schedule has predicted makespan at least $\OPT(\hat q)=1$, we have
$$
\mathbb E\!\left[C_{\text{max}}(\mathcal S,\hat q)\right]
\geq
\text{Pr}[E^c]
+
\left(1+\frac{1}{h}\right)\text{Pr}[E]
=
1+\frac{\text{Pr}[E]}{h}.
$$
The inequality follows since the predicted makespan is at least $1$ on $E^c$ and at least $1+1/h$ on $E$. Since $\mathbb E_{\textsc{ALG}}[C_{\text{max}}(\mathcal S,\hat q)]\leq1+1/\lambda$, it follows that $\text{Pr}_{\textsc{ALG}}[E]\leq h/\lambda$.

On the event $E^c$, the deterministic property established above gives $C_{\text{max}}(\mathcal S,q^+)\geq hL$. On the event $E$, we use $C_{\text{max}}(\mathcal S,q^+)\geq\OPT(q^+)=L$. Therefore,
$$
\begin{aligned}
\mathbb E\!\left[C_{\text{max}}(\mathcal S,q^+)\right]
&\geq
hL\text{Pr}[E^c]+L\text{Pr}[E] =
L\left(h-(h-1)\text{Pr}[E]\right)\\
&\geq
L\left(h-\frac{h(h-1)}{\lambda}\right) =
L\frac{h(\lambda-h+1)}{\lambda}\\
&\geq
\frac{\lambda}{4}L =
\frac{\lambda}{4}\OPT(q^+).
\end{aligned}
$$
The first inequality follows from the lower bounds on the worst-case makespan on $E^c$ and $E$. The second inequality follows from $\text{Pr}[E]\leq h/\lambda$. For the last inequality, since $h=\lceil\lambda/2\rceil$, we have $h\geq\lambda/2$ and $\lambda-h+1\geq\lambda/2$. Hence $h(\lambda-h+1)/\lambda\geq\lambda/4$. The final equality follows from $\OPT(q^+)=L$. Thus, no randomized algorithm satisfying the expected consistency guarantee $1+1/\lambda$ can have expected robustness strictly smaller than $\lambda/4$.
\end{proof}

\subsection{Proof of \Cref{thm:interval-unrelated-no-constant-tradeoff}}
\label{app:proof-interval-unrelated-no-constant-tradeoff}
\IntervalUnrelatedNoTradeoff*

\begin{proof}
Fix constants $\alpha\geq1$ and $\beta\geq1$. Choose $A>\alpha$, set $\rho_0:=\frac{A-\alpha}{A-1}>0$,
and choose $B>\max\{A,\beta A/\rho_0\}$. Consider an instance with two machines and one job $j$. On machine $1$, let $p^-_{1,j}=\hat p_{1,j}=1$ and $p^+_{1,j}=B$, and on machine $2$, let $p^-_{2,j}=\hat p_{2,j}=p^+_{2,j}=A$. Thus, $p^-_{i,j}\leq \hat p_{i,j}\leq p^+_{i,j}$ for both machines.

We first compute $\OPT(\hat q)$. Assigning job $j$ to machine $1$ gives predicted makespan $1$, while assigning it to machine $2$ gives predicted makespan $A>1$. Thus $\OPT(\hat q)=1$.

Let $\mathcal S$ be a deterministic schedule satisfying $C_{\max}(\mathcal S,\hat q)\leq \alpha\OPT(\hat q)=\alpha$. If $\mathcal S$ assigns job $j$ to machine $2$, then $C_{\max}(\mathcal S,\hat q)=A>\alpha$, which is a contradiction. Hence, every deterministic $\alpha$-consistent schedule assigns job $j$ to machine $1$. For every such schedule, $C_{\max}(\mathcal S,q^+)=B$.

We next compute $\OPT(q^+)$. Assigning job $j$ to machine $1$ gives worst-case makespan $B$, while assigning it to machine $2$ gives worst-case makespan $A$. Since $B>A$, we have $\OPT(q^+)=A$. Therefore, every deterministic schedule $\mathcal S$ satisfying $C_{\max}(\mathcal S,\hat q)\leq \alpha\OPT(\hat q)$ also satisfies
$$
        \frac{C_{\max}(\mathcal S,q^+)}{\OPT(q^+)}=\frac{B}{A}>\beta.
$$
Thus, no deterministic algorithm is simultaneously $\alpha$-consistent and $\beta$-robust on this instance.

It remains to rule out randomized algorithms. Let $\textsc{ALG}$ be a randomized algorithm, and let $\rho$ be the probability that $\textsc{ALG}(\mathcal U,\hat q)$ assigns job $j$ to machine $1$. If $\textsc{ALG}$ is $\alpha$-consistent in expectation, then
$$
        \rho+(1-\rho)A
        =\mathbb E\!\left[C_{\max}(\textsc{ALG}(\mathcal U,\hat q),\hat q)\right]
        \leq \alpha\OPT(\hat q)=\alpha .
$$
Since $A>\alpha$, this implies
$$
        \rho\geq \frac{A-\alpha}{A-1}=\rho_0.
$$
Under the worst-case scenario $q^+$, the expected makespan is at least $\rho B$. Hence
$$
        \mathbb E\!\left[C_{\max}(\textsc{ALG}(\mathcal U,\hat q),q^+)\right]
        \geq \rho B
        \geq \rho_0 B
        > \beta A
        = \beta\OPT(q^+).
$$
Since interval uncertainty has $\OPT^{\mathrm{rob}}(\mathcal U)=\OPT(q^+)=A$, the randomized algorithm is not $\beta$-robust in expectation. Therefore, no randomized algorithm is simultaneously $\alpha$-consistent and $\beta$-robust on this instance.

\end{proof}

\section{Proofs Missing from \Cref{sec:budgeted-uncertainty}}
\label{app:proofs-section-4}

\subsection{Polynomial-time implementation of algorithm for budgeted uncertainty}
\label{app:proof-polynomial-implementation-cardinality-budget}

\begin{restatable}{corollary}{CorPolytimeCardinalityBudget}
\label{thm:polytime-cardinality-budget}
Assume that there is a polynomial-time $\gamma$-approximation algorithm for makespan minimization on restricted assignment and a polynomial-time $\gamma_{\mathrm{rob}}$-approximation algorithm for robust makespan minimization under cardinality-budget uncertainty. Then, for every $\lambda>0$, there is a polynomial-time algorithm with consistency $\gamma(1+\gamma/\lambda)$ and robustness $\gamma_{\mathrm{rob}}+\gamma(1+\gamma\lambda)$ for cardinality-budget uncertainty and restricted-assignment machines.
\end{restatable}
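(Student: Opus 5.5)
The plan is to implement \btiAlg with the two black boxes. First compute an estimate of the robust optimum, then run the polynomial-time version of \tsAlg from \Cref{thm:polynomial-implementation-threshold-splitting} on the resulting interval instance.

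Concretely, run the $\gamma_{\mathrm{rob}}$-approximation on $\mathcal U$ and let $R$ be the robust makespan of its output. Then
\[
\OPT^{\mathrm{rob}}(\mathcal U)\le R\le\gamma_{\mathrm{rob}}\OPT^{\mathrm{rob}}(\mathcal U).
\]
Set $\tau:=R/\Gamma$ and define $p_j^{+,\tau}:=p^-_j+(p^+_j-p^-_j-\tau)_+$. Evaluating $R$ needs the worst-case load of a fixed machine. This is computable in polynomial time: it is the lower load plus the $\Gamma$ largest deviations on that machine.

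Next I would show $\OPT(q^{+,\tau})\le\OPT^{\mathrm{rob}}(\mathcal U)$. This is the step needing care, because $\tau$ is now larger than the exact value $\OPT^{\mathrm{rob}}(\mathcal U)/\Gamma$. However, $p^{+,\tau}$ is nonincreasing in $\tau$, so with $\tau_0:=\OPT^{\mathrm{rob}}(\mathcal U)/\Gamma\le\tau$ we get $p_j^{+,\tau}\le p_j^{+,\tau_0}$ for every $j$. The proof of \Cref{thm:budgeted-cardinality-restricted} already shows $\OPT(q^{+,\tau_0})\le\OPT^{\mathrm{rob}}(\mathcal U)$. Monotonicity of $\OPT$ in job sizes then gives the claim. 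Conveniently, overestimating $\tau$ only helps here.

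Applying \Cref{thm:polynomial-implementation-threshold-splitting} to $(\hat q,q^{+,\tau})$ yields a schedule $\mathcal S$ in polynomial time. Its consistency is $\gamma(1+\gamma/\lambda)$ with respect to $\OPT(\hat q)$, which is the claimed consistency directly. It also satisfies
\[
C_{\max}(\mathcal S,q^{+,\tau})\le\gamma(1+\gamma\lambda)\OPT(q^{+,\tau})\le\gamma(1+\gamma\lambda)\OPT^{\mathrm{rob}}(\mathcal U).
\]

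Finally, for robustness I apply \Cref{lem:budgeted-cardinality-reduction} with $\theta=\tau$ to each $S_i$. This gives a worst-case load of at most $\Gamma\tau+\sum_{j\in S_i}p_j^{+,\tau}=R+\sum_{j\in S_i}p_j^{+,\tau}$. Taking the maximum over machines bounds the robust makespan by
\[
\gamma_{\mathrm{rob}}\OPT^{\mathrm{rob}}(\mathcal U)+\gamma(1+\gamma\lambda)\OPT^{\mathrm{rob}}(\mathcal U),
\]
as claimed.

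So the additive term pays for the overestimate of $\tau$, while the interval part benefits from it. There is a minor detail if $\Gamma>n$: then the uncertainty set is effectively $\Gamma=n$, and we can cap $\Gamma$ at $n$ without changing $\mathcal U$.
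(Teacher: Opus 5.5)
Your proposal is correct and follows the paper's proof essentially step for step. You use the same cutoff $\tau=R/\Gamma$ from the $\gamma_{\mathrm{rob}}$-approximate robust schedule, the same polynomial-time \tsAlg on $(\hat q,q^{+,\tau})$, and \Cref{lem:budgeted-cardinality-reduction} with $\theta=\tau$ to produce the additive $R$ term. The one minor difference is how you get $\OPT(q^{+,\tau})\le\OPT^{\mathrm{rob}}(\mathcal U)$: you use that $p^{+,\tau}$ is nonincreasing in $\tau$ and reduce to the exact-cutoff case, whereas the paper re-runs the $|R_i|\le\Gamma$ counting argument directly with the overestimated cutoff via $\Gamma\tau\ge\OPT^{\mathrm{rob}}(\mathcal U)$. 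Your shortcut is valid and slightly cleaner.
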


The black boxes can be instantiated with $\gamma=2$ using the classical $2$-approximation for unrelated machines of Lenstra, Shmoys, and Tardos~\cite{lenstra1990approximation}, and with $\gamma_{\mathrm{rob}}=3$ using the $3$-approximation for cardinality-budgeted uncertainty on unrelated machines due to Bougeret, Jansen, Poss, and Rohwedder~\cite{DBLP:journals/mst/BougeretJPR21}. This gives consistency $2+4/\lambda$ and robustness $5+4\lambda$. For the special case of identical machines, an even stronger EPTAS is known for the robust black box~\cite{DBLP:journals/mst/BougeretJPR21}.
\begin{proof}
Let $A^{\mathrm{rob}}(\mathcal U)$ be the output of the $\gamma_{\mathrm{rob}}$-approximation algorithm for robust makespan minimization over uncertainty set $\mathcal U$. We have $\OPT^{\mathrm{rob}}(\mathcal U)\leq\max_{q\in\mathcal U}C_{\text{max}}(A^{\mathrm{rob}}(\mathcal U),q)\leq\gamma_{\mathrm{rob}}\OPT^{\mathrm{rob}}(\mathcal U)$. The polynomial-time algorithm defines $ \tau:=\frac{C_{\max}^{\mathrm{rob}}(A^{\mathrm{rob}}(\mathcal U),\mathcal U)}{\Gamma}$ and $p_j^{+,\tau}:=p^-_j+\bigl(p_j^+-p^-_j-\tau\bigr)_+$ for every job $j\in[n]$. Let $q^{+,\tau}$ be the restricted-assignment instance with job sizes $p^{+,\tau}$ and the same eligibility sets as $\hat q$. Let $A(q)$ be the output of the $\gamma$-approximation algorithm for makespan minimization over instance $q$. The algorithm applies the polynomial-time implementation of \Cref{alg:threshold-splitting-rr} to instances $\hat q$ and $q^{+,\tau}$ using the makespans $C_{\text{max}}(A(\hat q),\hat q)$ and $C_{\text{max}}(A(q^{+,\tau}),q^{+,\tau})$.

\textbf{Consistency.} By \Cref{thm:polynomial-implementation-threshold-splitting}, the returned schedule $\mathcal S$ satisfies $C_{\text{max}}(\mathcal S,\hat q)\leq\gamma\left(1+\frac{\gamma}{\lambda}\right)\OPT(\hat q)$.

\textbf{Robustness.} We first show that $\OPT(q^{+,\tau})\leq\OPT^{\mathrm{rob}}(\mathcal U)$. Let $\mathcal T=(T_1,\ldots,T_m)$ be a schedule attaining $\OPT^{\mathrm{rob}}(\mathcal U)$. For every machine $i\in[m]$, let $R_i:=\{j\in T_i:p_j^+-p^-_j>\tau\}$. We have $|R_i|\leq\Gamma$. Indeed, if $|R_i|\geq\Gamma+1$, then a scenario that modifies any $\Gamma$ jobs in $R_i$ gives additional load strictly larger than $\Gamma\tau = \max_{q\in\mathcal U}C_{\text{max}}(A^{\mathrm{rob}}(\mathcal U),q) \geq\OPT^{\mathrm{rob}}(\mathcal U)$ on machine $i$, which contradicts the definition of $\mathcal T$.

Let $(p^+-p^-)_{T_i,(\ell)}$ denote the $\ell$-th largest value of $p_j^+-p^-_j$ among the jobs in $T_i$, where missing terms are taken to be zero. Since $|R_i|\leq\Gamma$, we have $\sum_{j\in T_i}\bigl(p_j^+-p^--\tau\bigr)_+\leq\sum_{\ell=1}^{\Gamma}(p^+-p^-)_{T_i,(\ell)}$. Therefore,$$\sum_{j\in T_i}p_j^{+,\tau}\leq\sum_{j\in T_i}p^-_j+\sum_{\ell=1}^{\Gamma}(p^+-p^-)_{T_i,(\ell)}\leq\OPT^{\mathrm{rob}}(\mathcal U),$$ where the first inequality is by the definition of $p_j^{+,\tau}$ and the second since the right-hand side is the maximum load of machine $i$ over the cardinality-budget uncertainty set. Since this holds for every machine of schedule $\mathcal T$, we have $\OPT(q^{+,\tau})\leq\OPT^{\mathrm{rob}}(\mathcal U)$.

By \Cref{thm:polynomial-implementation-threshold-splitting}, the returned schedule $\mathcal S=(S_1,\ldots,S_m)$ satisfies
$C_{\text{max}}(\mathcal S,q^{+,\tau})\leq\gamma(1+\gamma\lambda)\OPT(q^{+,\tau})\leq\gamma(1+\gamma\lambda)\OPT^{\mathrm{rob}}(\mathcal U)$, where the second inequality follows from $\OPT(q^{+,\tau})\leq\OPT^{\mathrm{rob}}(\mathcal U)$.

For every machine $i\in[m]$, \Cref{lem:budgeted-cardinality-reduction} applied to the set $S_i$ and $\theta=\tau$ gives $\max_{q\in\mathcal U}\sum_{j\in S_i}q_{i,j}\leq \Gamma\tau+\sum_{j\in S_i}p_j^{+,\tau}=\max_{q\in\mathcal U}C_{\text{max}}(A^{\mathrm{rob}}(\mathcal U),q)+\sum_{j\in S_i}p_j^{+,\tau}$. Thus,
$$
\max_{q\in\mathcal U}C_{\text{max}}(\mathcal S,q)\leq \max_{q\in\mathcal U}C_{\text{max}}(A^{\mathrm{rob}}(\mathcal U),q)+C_{\text{max}}(\mathcal S,q^{+,\tau})\leq\left(\gamma_{\mathrm{rob}}+\gamma(1+\gamma\lambda)\right)\OPT^{\mathrm{rob}}(\mathcal U),
$$
where the first inequality follows by taking the maximum over machines in the preceding inequality and the second since $A^{\mathrm{rob}}(\mathcal U)$ is a $\gamma_{\mathrm{rob}}$-approximate robust schedule and by the preceding bound on $C_{\text{max}}(\mathcal S,q^{+,\tau})$.

\end{proof}

\subsection{Proof of \Cref{thm:budgeted-related-no-constant-tradeoff}}
\label{app:proof-budgeted-related-no-constant-tradeoff}
\BudgetedRelatedNoTradeoff*
\begin{proof}
Fix constants $\alpha \ge 1$ and $\beta \ge 1$, and choose integers $k > 2\beta$ and $\ell \ge 2\alpha k$. Consider a related-machines instance with one machine of speed $k$ and $\ell$ machines of speed $1$. Let $n := k+\ell$. There are $n$ jobs, and every job $j\in[n]$ has lower job size $p^-_j=1$, predicted job size $\hat p_j=1$, and worst-case job size $p^+_j=L$, where $L\ge1$ is chosen such that $\frac{kL}{2(n+L-1)}>\beta$. Such a choice exists since $kL/(2(n+L-1))$ converges to $k/2>\beta$ as $L$ tends to infinity.

We first compute $\OPT(\hat q)$. The total predicted job size is $n$, while the total machine speed is $k+\ell=n$, and therefore $\OPT(\hat q)\geq 1$. Assigning $k$ jobs to the machine of speed $k$ and one job to each machine of speed $1$ gives a predicted makespan of $1$. Thus, $\OPT(\hat q)=1$.

Let $\mathcal S$ be a schedule satisfying $C_{\text{max}}(\mathcal S,\hat q)\leq\alpha\OPT(\hat q)=\alpha$. If $\mathcal S$ assigns every job to the machine of speed $k$, then $C_{\text{max}}(\mathcal S,\hat q)=\frac{n}{k}=1+\frac{\ell}{k}>\alpha$, where the last inequality follows from $\ell \ge  2\alpha k > (\alpha-1)k$. Hence, every $\alpha$-consistent schedule assigns some job $j$ to a machine of speed $1$. Consider the scenario in $\mathcal U$ in which job $j$ has size $p_j^+=L$ and every other job has size $p^-_j=1$. The load of the machine containing job $j$ is at least $L$, and therefore $C_{\text{max}}^{\mathrm{rob}}(\mathcal S,\mathcal U)= \max_{q\in\mathcal U} C_{\text{max}}(\mathcal S,q) \geq L$.

We next bound $\OPT^{\mathrm{rob}}(\mathcal U)$. Consider the schedule that assigns every job to the machine of speed $k$. Under any instance $q\in\mathcal U$, at most one job has size $L$, while every other job has size $1$. Thus, this schedule has makespan at most $(n+L-1)/k$, and hence $\OPT^{\mathrm{rob}}(\mathcal U)\leq\frac{n+L-1}{k}$. Therefore, every schedule $\mathcal S$ satisfying $C_{\text{max}}(\mathcal S,\hat q)\leq\alpha\OPT(\hat q)$ also satisfies
$$
\frac{C_{\text{max}}^{\mathrm{rob}}(\mathcal S,\mathcal U)}{\OPT^{\mathrm{rob}}(\mathcal U)}\geq\frac{L}{(n+L-1)/k}=\frac{kL}{n+L-1}>\beta.$$

We now prove that the same construction rules out randomized algorithms. Let $\textsc{ALG}$ be a randomized algorithm and $\mathcal S=\textsc{ALG}(\mathcal U,\hat q)$ be the random schedule returned by the algorithm, and let $X$ be the number of jobs assigned to machines of speed $1$. Since the load on the machine of speed $k$ is at least $(n-X)/k$, expected consistency implies

$$\alpha \ge \mathbb E\!\left[C_{\max}(\mathcal S,\hat q)\right] \ge \mathbb E\!\left[\frac{n-X}{k}\right] = \frac{n-\mathbb E[X]}{k}.$$

Thus $\mathbb E[X]\ge n-\alpha k=\ell-(\alpha-1)k$. Since $\ell\ge 2\alpha k$, we have $\mathbb E[X]\ge \frac n2$. Equivalently, the sum over all jobs of the probability that the job is assigned to a machine of speed $1$ is at least $n/2$. Hence, by averaging, there exists a job $j^\star$ that is assigned to a machine of speed $1$ with probability at least $1/2$.

Consider the scenario $q^{j^\star}$ in $\mathcal U$ in which job $j^\star$ has size $L$, while every other job has size $1$. Under this scenario, whenever $j^\star$ is assigned to a machine of speed $1$, the makespan is at least $L$. Therefore $\mathbb E\!\left[C_{\max}(\mathcal S,q^{j^\star})\right]\ge \frac L2$. On the other hand, assigning every job to the machine of speed $k$ gives $ \OPT^{\mathrm{rob}}(\mathcal U) \le \frac{n+L-1}{k}$.
Therefore,
$$
\frac{\mathbb E\!\left[C_{\max}(\mathcal S,q^{j^\star})\right]}{\OPT^{\mathrm{rob}}(\mathcal U)} \ge \frac{L/2}{(n+L-1)/k} = \frac{kL}{2(n+L-1)} > \beta .
$$
Thus, no randomized algorithm satisfying expected $\alpha$-consistency can be $\beta$-robust in expectation. Since deterministic algorithms are the special case of randomized algorithms with a degenerate distribution, the theorem follows.

\end{proof}
\section{Additional Budgeted Uncertainty Models} \label{app:budgeted-additional-models}

In this appendix, we give the analogous results for two weighted extensions of the cardinality-budget model. The weighted fractional and weighted binary uncertainty sets are defined in their respective subsections below. The structure is as follows: we first reduce the robust load of a set of jobs to the load of an effective interval instance, then apply the threshold splitting \Cref{alg:threshold-splitting-rr} to that interval instance, and finally give the corresponding impossibility result for related machines.

\subsection{Weighted fractional budget} \label{app:budgeted-weighted-fractional}

Under a weighted fractional budget, the adversary may distribute the budget fractionally across the jobs. Each job $j\in[n]$ has an interval $[p^-_j,p^+_j]$ and weight $w_j>0$. For a budget $\Gamma\geq0$, the corresponding uncertainty set is $\mathcal U = \{q: \exists z\in[0,1]^n \text{ s.t. } \sum_{j\in[n]}w_jz_j \leq \Gamma \text{ and } q_{i,j}=p_j^-+z_j(p_j^+-p_j^-) \}$. Thus, increasing the size of job $j$ by a fraction $z_j$ of its deviation consumes $w_jz_j$ units of budget. For a schedule $\mathcal S$, consistency is measured with respect to the predicted instance $\hat q$, while robustness is measured with respect to the min-max benchmark $\OPT^{\mathrm{rob}}(\mathcal U)$.

\paragraph{Algorithm description.}
The algorithm below is the weighted fractional analogue of \Cref{alg:budgeted-cardinality-restricted}. It chooses a cutoff $\tau$, defines the effective worst-case size $ p_j^{+,\tau}= p_j^-+\bigl(p_j^+-p_j^--\tau w_j\bigr)_+$ for every job $j\in[n]$, and lets $q^{+,\tau}$ be the corresponding restricted-assignment instance. It then applies \Cref{alg:threshold-splitting-rr} to the interval instance $(\hat q,q^{+,\tau})$.

Let $q(p, \mathcal M)$ denote the processing times corresponding to job sizes $p$ and eligibility sets $\mathcal M = \{M_j\}_{j\in [n]}$, i.e., $q(p, \mathcal M)_{i,j} = p_j$ if $i \in M_j$ and $q(p, \mathcal M)_{i,j} = \infty$ otherwise. Similarly, we let $\mathcal U(p^-, p^+, \mathcal M,w, \Gamma)$ denote the weighted fractional budgeted uncertainty set for restricted-assignment machines corresponding to $p^-, p^+, \mathcal M,w,$ and $\Gamma$.

\begin{algorithm}[H]
\caption{\btiAlg algorithm for weighted fractional budget uncertainty and restricted assignment}
\label{alg:budgeted-weighted-fractional-restricted}
\begin{algorithmic}[1]
\setlength{\itemsep}{0.25em}
\Input Lower, predicted, and worst-case sizes $p^-, \hat p,$ and $p^+$, eligibility sets $\mathcal M=\{M_j\}_j$, job weights $(w_j)_{j\in[n]}$, weighted fractional budget $\Gamma>0$, and parameter $\lambda>0$
\State $\tau\leftarrow\OPT^{\mathrm{rob}}(\mathcal U(p^-,p^+,\mathcal M,w,\Gamma))/\Gamma$ \Comment{Truncation threshold using the optimal robust makespan}
\State $p_j^{+,\tau}\leftarrow p_j^-+\bigl(p_j^+-p_j^--\tau w_j\bigr)_+$ for every job $j\in[n]$ \Comment{Effective worst-case job sizes}
\State \Return $\tsAlg(q(\hat p,\mathcal M),q(p^{+,\tau},\mathcal M),\lambda)$ \Comment{Apply the interval-uncertainty algorithm}
\end{algorithmic}
\end{algorithm}
\paragraph{Algorithm analysis.} We show that \btiAlg inherits the same consistency guarantee as \tsAlg, while losing only an additional additive constant in the robustness bound when passing back to the weighted fractional budget uncertainty set. Again, the parameter $\lambda$ controls how much the algorithm trusts the predicted instance. The next result states the resulting tradeoff.

\begin{restatable}{theorem}{WeightedFractionalTradeoff}
\label{thm:budgeted-weighted-fractional-restricted}
For any $\lambda > 0$, there exists an algorithm that has consistency $1+1/\lambda$ and robustness $2+\lambda$ for weighted fractional budget uncertainty and restricted assignment.
\end{restatable}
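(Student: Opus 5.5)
The plan is to run the proof of \Cref{thm:budgeted-cardinality-restricted} again, replacing the cardinality dual by a weighted one. The algorithm is \Cref{alg:budgeted-weighted-fractional-restricted}.

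\textbf{Step 1 (weighted reduction lemma).} For a fixed job set $J$ and any $\theta\ge 0$, I will show
\[
\max_{p\in\mathcal P(\mathcal U)}\sum_{j\in J}p_j\le \Gamma\theta+\sum_{j\in J}\bigl(p_j^-+(p_j^+-p_j^--\theta w_j)_+\bigr).
\]
The extra load above $\sum_{j\in J} p_j^-$ is exactly the value of the LP
\[
\max \sum_j (p_j^+-p_j^-)z_j \quad\text{s.t.}\quad \sum_j w_jz_j\le\Gamma,\ 0\le z\le 1.
\]
Here $\mathcal U$ is defined fractionally, so no integrality argument is needed. Jobs outside $J$ have $z_j=0$ at the optimum, and restricting the budget to $J$ only loosens it. The dual is
\[
\min\ \Gamma\theta'+\sum_j\mu_j \quad\text{s.t.}\quad w_j\theta'+\mu_j\ge p_j^+-p_j^-,\ \ \theta',\mu\ge0,
\]
whose optimal choice is $\mu_j=(p_j^+-p_j^--w_j\theta')_+$. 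Weak duality at $\theta'=\theta$ already gives the inequality, so strong duality is not even required.

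\textbf{Step 2 ($\OPT(q^{+,\tau})\le \OPT^{\mathrm{rob}}(\mathcal U)$).} Set $\tau=\OPT^{\mathrm{rob}}(\mathcal U)/\Gamma$ and let $\mathcal T=(T_1,\dots,T_m)$ be an optimal robust schedule. It suffices to show, for every machine $i$,
\[
\sum_{j\in T_i}\bigl(p_j^+-p_j^--\tau w_j\bigr)_+\le \max_{z}\sum_{j\in T_i}(p_j^+-p_j^-)z_j,
\]
where the maximum is the LP value above on $T_i$. Then $\sum_{j\in T_i}p^{+,\tau}_j$ is at most the worst-case load of $T_i$, which is at most $\OPT^{\mathrm{rob}}(\mathcal U)$. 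This step is the main obstacle, because the cardinality counting argument ($|R_i|\le\Gamma$) must be replaced.

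Let $R=\{j\in T_i: p_j^+-p_j^->\tau w_j\}$. \emph{Case $w(R)\le\Gamma$:} take $z=\mathbf 1_R$. This is feasible and gives value $\sum_{R}(p_j^+-p_j^-)$, which dominates the left side.

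\emph{Case $w(R)>\Gamma$:} scale $z=(\Gamma/w(R))\mathbf 1_R$. Every job in $R$ has deviation ratio $(p_j^+-p_j^-)/w_j>\tau$, so the extra load is strictly greater than $\Gamma\tau=\OPT^{\mathrm{rob}}(\mathcal U)$. This makes the worst-case load of $T_i$ exceed $\OPT^{\mathrm{rob}}(\mathcal U)$, contradicting optimality of $\mathcal T$. Only the first case can therefore occur. The case $\Gamma=0$ is degenerate (there is no deviation), and I will handle it separately or exclude it via $\Gamma>0$.

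\textbf{Step 3 (assemble).} By \Cref{thm:threshold-splitting-rr}, the returned schedule $\mathcal S$ has predicted makespan at most $(1+1/\lambda)\OPT(\hat q)$, which gives consistency. It also satisfies
\[
C_{\max}(\mathcal S,q^{+,\tau})\le(1+\lambda)\OPT(q^{+,\tau})\le(1+\lambda)\OPT^{\mathrm{rob}}(\mathcal U),
\]
using Step 2. Applying Step 1 with $\theta=\tau$ to each $S_i$ adds $\Gamma\tau=\OPT^{\mathrm{rob}}(\mathcal U)$ per machine, so the robust makespan is at most $(2+\lambda)\OPT^{\mathrm{rob}}(\mathcal U)$.
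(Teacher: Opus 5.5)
Your proposal is correct and follows the paper's proof essentially step for step. It uses the same weighted LP dual bound with $\mu_j=(p_j^+-p_j^--\theta w_j)_+$, and the same contradiction via a fractional deviation vector of total weight exactly $\Gamma$ supported on $R_i=\{j\in T_i: p_j^+-p_j^->\tau w_j\}$ to obtain $\OPT(q^{+,\tau})\le\OPT^{\mathrm{rob}}(\mathcal U)$. The assembly via \Cref{thm:threshold-splitting-rr} is also the same; your remark that weak duality at $\theta'=\theta$ already suffices, where the paper invokes strong duality, is a valid minor simplification.
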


The analysis of \Cref{alg:budgeted-weighted-fractional-restricted} relies on a duality argument. For a fixed set of jobs, the worst weighted fractional load can be upper bounded by the load in an effective interval instance, up to an additive term $\Gamma\theta$, where the effective worst-case size of job $j$ is obtained by truncating its deviation at the weighted cutoff $\theta w_j$. The next lemma formalizes this observation.

Let $p(q)$ denote the job sizes corresponding to processing times $q$. Given a weighted fractional budgeted uncertainty set $\mathcal U$, we let $\mathcal{P}(\mathcal U) = \{p(q) : q \in \mathcal U\}$ be the uncertain job sizes  corresponding to the uncertain processing times $\mathcal U$.

\begin{lemma}
\label{lem:budgeted-weighted-fractional-reduction}
Let $p_j^{+,\theta}:=p_j^-+\bigl(p_j^+-p_j^--\theta w_j\bigr)_+$, then, for any weighted fractional budgeted uncertainty set $\mathcal U$ with budget $\Gamma$, jobs $J\subseteq[n]$, and $\theta\geq0$, we have $\max_{p\in\mathcal P(\mathcal U)}\sum_{j\in J}p_j\leq\Gamma\theta+\sum_{j\in J}p_j^{+,\theta}$.
\end{lemma}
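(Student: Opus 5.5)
The plan is to mirror the proof of \Cref{lem:budgeted-cardinality-reduction} through LP duality, now with weights in the budget constraint. Fix $\mathcal U$ with weights $w$ and budget $\Gamma$, a set $J\subseteq[n]$, and $\theta\ge 0$. Every $p\in\mathcal P(\mathcal U)$ has the form $p_j=p_j^-+z_j(p_j^+-p_j^-)$ for some feasible $z$. Hence the maximum extra load of $J$ above its lower sizes is exactly the optimal value of the linear program
\[
\text{(P)}\quad \max \sum_{j\in J}(p_j^+-p_j^-)z_j \quad\text{s.t.}\quad \sum_{j\in J} w_j z_j\le\Gamma,\quad 0\le z_j\le 1 .
\]
Jobs outside $J$ can be set to $z_j=0$, which frees budget and never decreases the objective. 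Because the model is fractional, no integrality argument is needed here, which makes this step simpler than in the cardinality case.

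We will then form the dual
\[
\text{(D)}\quad \min\ \Gamma\theta'+\sum_{j\in J}\mu_j \quad\text{s.t.}\quad w_j\theta'+\mu_j\ge p_j^+-p_j^- \ \ \forall j\in J,\qquad \theta',\mu\ge 0 .
\]
Instead of invoking strong duality, it is enough to use weak duality with the specific dual point $\theta'=\theta$ and $\mu_j=(p_j^+-p_j^--\theta w_j)_+$. This point is feasible because $w_j\theta+\mu_j\ge p_j^+-p_j^-$ holds by construction and $\mu_j\ge0$. Weak duality then bounds the optimum of (P) by $\Gamma\theta+\sum_{j\in J}(p_j^+-p_j^--\theta w_j)_+$.

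Adding $\sum_{j\in J}p_j^-$ to both sides and using the definition of $p_j^{+,\theta}$ gives the claim. If one prefers to avoid LP language entirely, the same bound follows directly. For any feasible $z$, write
\[
(p_j^+-p_j^-)z_j \le \theta w_j z_j + (p_j^+-p_j^--\theta w_j)_+ z_j \le \theta w_j z_j + (p_j^+-p_j^--\theta w_j)_+ ,
\]
and then sum over $j\in J$, using $\sum_j w_j z_j\le\Gamma$.

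There is no real obstacle. The only points needing care are that the termwise inequality needs $z_j\le 1$ together with $\theta w_j z_j\ge0$, and that restricting the scenario's $z$ to $J$ is harmless. The sup over $\mathcal P(\mathcal U)$ is attained because $\mathcal P(\mathcal U)$ is compact, but the inequality holds for every element in any case.
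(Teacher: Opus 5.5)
Your proposal is correct and follows essentially the same LP-duality route as the paper: the same primal, the same dual, and the same dual point $\theta'=\theta$, $\mu_j=(p_j^+-p_j^--\theta w_j)_+$. The only difference is that the paper invokes strong duality to get an equality before fixing $\theta'=\theta$. You instead observe, correctly, that weak duality already gives the upper bound the lemma needs; your termwise inequality is just that weak-duality step unrolled.
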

\begin{proof}
Fix a set of jobs $J\subseteq[n]$ and a machine $i\in[m]$. The maximum additional load above the lower load is the maximum of $\{\sum_{j\in J}(p_j^+-p_j^-)z_j: \sum_{j\in J}w_jz_j \leq \Gamma \text{ and } 0 \leq z_j \leq1 \text{ for every } j\in J \}$. Its dual is the minimum of $\{ \Gamma \theta' + \sum_{j\in J} \mu_j : \theta'w_j + \mu_j \geq p_j^+-p_j^- \text{ for every } j\in J, \ \theta' \geq0 , \text{ and } \mu_j \geq 0 \text{ for every } j\in J \}$. For a fixed value of $\theta'\geq0$, the minimum feasible choice is $\mu_j=(p_j^+-p_j^--\theta'w_j)_+$ for every $j\in J$. Strong duality applies since the primal linear program is feasible and bounded. Thus, $\max_{q\in\mathcal U}\sum_{j\in J}q_{i,j}=\sum_{j\in J}p_j^-+\min_{\theta'\geq0}\left\{\Gamma\theta'+\sum_{j\in J}(p_j^+-p_j^--\theta'w_j)_+\right\}$. Fixing $\theta'=\theta$ gives $\max_{q\in\mathcal U}\sum_{j\in J}q_{i,j}\leq\sum_{j\in J}p_j^-+\Gamma\theta+\sum_{j\in J}(p_j^+-p_j^--\theta w_j)_+=\Gamma\theta+\sum_{j\in J}p_j^{+,\theta}$, where the equality follows from the definition of $p_j^{+,\theta}$.
\end{proof}

We now apply \Cref{lem:budgeted-weighted-fractional-reduction} to prove the weighted fractional guarantee. We choose the cutoff $\tau$ from the min-max optimum $\OPT^{\mathrm{rob}}(\mathcal U)$, define the effective worst-case job sizes $p_j^{+,\tau}$ and the corresponding instance $q^{+,\tau}$, and run \Cref{alg:threshold-splitting-rr} on the interval instance $(\hat q,q^{+,\tau})$. \Cref{lem:budgeted-weighted-fractional-reduction} bounds the loss when transferring the robustness guarantee from $q^{+,\tau}$ to the original weighted fractional budget uncertainty set.

\begin{proof}[Proof of \Cref{thm:budgeted-weighted-fractional-restricted}]
Let $\tau:=\OPT^{\mathrm{rob}}(\mathcal U)/\Gamma$. For every job $j\in[n]$, define $p_j^{+,\tau}:= p_j^-+\bigl(p_j^+-p_j^--\tau w_j\bigr)_+$, and let $q^{+,\tau}$ be the restricted-assignment instance with job sizes $p^{+,\tau}$ and the same eligibility sets as $\hat q$. Let $\mathcal T=(T_1,\ldots,T_m)$ be a schedule attaining $\OPT^{\mathrm{rob}}(\mathcal U)$. Thus, for every machine $i\in[m]$, $\max_{q\in\mathcal U}\sum_{j\in T_i}q_{i,j}\leq\OPT^{\mathrm{rob}}(\mathcal U)$.

We first show that $\OPT(q^{+,\tau})\leq\OPT^{\mathrm{rob}}(\mathcal U)$. Fix a machine $i\in[m]$, and let $R_i:=\{j\in T_i:p_j^+-p_j^->\tau w_j\}$. We have $\sum_{j\in R_i}w_j\leq\Gamma$. Indeed, suppose that $\sum_{j\in R_i}w_j>\Gamma$. Since the adversary may choose fractional deviations, there exists a vector $z\in[0,1]^n$ supported on $R_i$ such that $\sum_{j\in R_i}w_jz_j=\Gamma$. Since $p_j^+-p_j^->\tau w_j$ for every $j\in R_i$, this vector gives $\sum_{j\in R_i}(p_j^+-p_j^-)z_j>\tau\sum_{j\in R_i}w_jz_j=\Gamma\tau=\OPT^{\mathrm{rob}}(\mathcal U)$. The resulting load of machine $i$ is strictly larger than $\OPT^{\mathrm{rob}}(\mathcal U)$, which contradicts the definition of $\mathcal T$.

Since $\sum_{j\in R_i}w_j\leq\Gamma$, the vector that sets $z_j=1$ for every $j\in R_i$ and $z_j=0$ otherwise is feasible for the weighted fractional budget. Therefore,
$$
\sum_{j\in T_i}p_j^{+,\tau}=\sum_{j\in T_i}p_j^-+\sum_{j\in R_i}\bigl(p_j^+-p_j^--\tau w_j\bigr) \leq \sum_{j\in T_i}p_j^- + \sum_{j\in R_i}\bigl(p_j^+-p_j^-\bigr) \leq \OPT^{\mathrm{rob}}(\mathcal U),
$$
where the first equality is by the definition of $R_i$ and $p_j^{+,\tau}$, the first inequality follows since $\tau w_j\geq0$, and the second since the deviation vector supported on $R_i$ is feasible for the weighted fractional uncertainty set. Since this holds for every machine of schedule $\mathcal T$, we have $\OPT(q^{+,\tau})\leq\OPT^{\mathrm{rob}}(\mathcal U)$.

\Cref{alg:budgeted-weighted-fractional-restricted} runs \Cref{alg:threshold-splitting-rr} on instances $\hat q$ and $q^{+,\tau}$. By \Cref{thm:threshold-splitting-rr}, the returned schedule $\mathcal S=(S_1,\ldots,S_m)$ satisfies $C_{\text{max}}(\mathcal S,\hat q)\leq\left(1+\frac{1}{\lambda}\right)\OPT(\hat q)$ and
$$
C_{\text{max}}(\mathcal S,q^{+,\tau})
\leq
(1+\lambda)\OPT(q^{+,\tau})
\leq
(1+\lambda)\OPT^{\mathrm{rob}}(\mathcal U),
$$
where the second inequality follows from $\OPT(q^{+,\tau})\leq\OPT^{\mathrm{rob}}(\mathcal U)$.

It remains to bound the makespan of $\mathcal S$ over the original uncertainty set. By \Cref{lem:budgeted-weighted-fractional-reduction}, for every machine $i\in[m]$, $\max_{q\in\mathcal U}\sum_{j\in S_i}q_{i,j} \leq \Gamma\tau+\sum_{j\in S_i}p_j^{+,\tau} = \OPT^{\mathrm{rob}}(\mathcal U)+\sum_{j\in S_i}p_j^{+,\tau}$.
Thus,
$$ C_{\max}^{\mathrm{rob}}(\mathcal S,\mathcal U) = \max_{q\in\mathcal U}C_{\text{max}}(\mathcal S,q) \leq \OPT^{\mathrm{rob}}(\mathcal U) + C_{\text{max}}(\mathcal S,q^{+,\tau}) \leq (\lambda+2)\OPT^{\mathrm{rob}}(\mathcal U),
$$
where the first inequality follows by taking the maximum over machines in the preceding inequality and the second by the robustness guarantee of \Cref{alg:threshold-splitting-rr}.
\end{proof}

\Cref{alg:budgeted-weighted-fractional-restricted} is not directly polynomial-time implementable since it requires computing $\OPT^{\mathrm{rob}}(\mathcal U)$ to define the cutoff, which is NP-hard. By estimating $\OPT^{\mathrm{rob}}(\mathcal U)$ with an approximation algorithm for the weighted fractional min-max problem and applying \Cref{thm:polynomial-implementation-threshold-splitting} to the resulting interval instance, we obtain the following consistency and robustness.

\begin{restatable}{corollary}{CorPolytimeWeightedFractional}
\label{thm:polytime-weighted-fractional-budget}
Assume that there is a polynomial-time $\gamma$-approximation algorithm for makespan minimization on restricted assignment and a polynomial-time $\gamma_{\mathrm{rob}}$-approximation algorithm for robust makespan minimization under weighted fractional budget uncertainty. Then, for every $\lambda>0$, there is a polynomial-time algorithm with consistency $\gamma(1+\gamma/\lambda)$ and robustness $\gamma_{\mathrm{rob}}+\gamma(1+\gamma\lambda)$.
\end{restatable}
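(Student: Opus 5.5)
The plan is to mirror the proof of \Cref{thm:polytime-cardinality-budget}, replacing the cardinality reduction by \Cref{lem:budgeted-weighted-fractional-reduction}. Let $A^{\mathrm{rob}}(\mathcal U)$ be the output of the $\gamma_{\mathrm{rob}}$-approximation, and set $\tau:=C_{\max}^{\mathrm{rob}}(A^{\mathrm{rob}}(\mathcal U),\mathcal U)/\Gamma$. This gives $\OPT^{\mathrm{rob}}(\mathcal U)\leq\Gamma\tau\leq\gamma_{\mathrm{rob}}\OPT^{\mathrm{rob}}(\mathcal U)$. Define $p_j^{+,\tau}:=p_j^-+(p_j^+-p_j^--\tau w_j)_+$ and let $q^{+,\tau}$ be the corresponding restricted-assignment instance. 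Then run the polynomial-time version of \tsAlg from \Cref{thm:polynomial-implementation-threshold-splitting} on $(\hat q,q^{+,\tau})$, using the $\gamma$-approximate makespans as normalizers. Consistency $\gamma(1+\gamma/\lambda)$ follows directly from \Cref{thm:polynomial-implementation-threshold-splitting}.

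For robustness, the first step is to show $\OPT(q^{+,\tau})\leq\OPT^{\mathrm{rob}}(\mathcal U)$. Let $\mathcal T$ be an optimal robust schedule and $R_i:=\{j\in T_i:p_j^+-p_j^->\tau w_j\}$.

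\begin{itemize}
\item First I show $\sum_{j\in R_i}w_j\leq\Gamma$. Suppose not. Then choose a fractional $z$ supported on $R_i$ with $\sum_j w_jz_j=\Gamma$; such a $z$ exists by scaling down from the all-ones vector on $R_i$. Its added load exceeds $\tau\Gamma\geq\OPT^{\mathrm{rob}}(\mathcal U)$, contradicting the optimality of $\mathcal T$.
\item Given this bound, setting $z_j=1$ on $R_i$ and $0$ elsewhere is feasible. Hence $\sum_{j\in T_i}p_j^{+,\tau}\leq\sum_{j\in T_i}p_j^-+\sum_{j\in R_i}(p_j^+-p_j^-)\leq\OPT^{\mathrm{rob}}(\mathcal U)$.
\end{itemize}

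This step is the only place needing care. The approximate threshold $\tau$ is larger than the exact one, but the argument only uses $\Gamma\tau\geq\OPT^{\mathrm{rob}}(\mathcal U)$, which is the right direction. Moreover, a larger $\tau$ only shrinks $p^{+,\tau}$, so the bound survives.

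Next, \Cref{thm:polynomial-implementation-threshold-splitting} gives $C_{\max}(\mathcal S,q^{+,\tau})\leq\gamma(1+\gamma\lambda)\OPT(q^{+,\tau})\leq\gamma(1+\gamma\lambda)\OPT^{\mathrm{rob}}(\mathcal U)$.

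Finally, apply \Cref{lem:budgeted-weighted-fractional-reduction} with $\theta=\tau$ to each $S_i$. This gives $\max_{q\in\mathcal U}\sum_{j\in S_i}q_{i,j}\leq\Gamma\tau+\sum_{j\in S_i}p_j^{+,\tau}\leq\gamma_{\mathrm{rob}}\OPT^{\mathrm{rob}}(\mathcal U)+C_{\max}(\mathcal S,q^{+,\tau})$. Taking the maximum over machines yields robustness $\gamma_{\mathrm{rob}}+\gamma(1+\gamma\lambda)$.

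Polynomiality holds because every step is either a black-box call or a linear-time computation.
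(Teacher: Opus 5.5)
Your proposal is correct and follows essentially the same route as the paper's proof. Both take the cutoff $\tau$ from the robust black box and use the scaled fractional deviation on $R_i$ to show $\OPT(q^{+,\tau})\le\OPT^{\mathrm{rob}}(\mathcal U)$. Both then apply the polynomial-time threshold-splitting corollary and finish with the duality lemma at $\theta=\tau$.
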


\begin{proof}
Let $A^{\mathrm{rob}}(\mathcal U)$ be the output of the $\gamma_{\mathrm{rob}}$-approximation algorithm for robust makespan minimization over uncertainty set $\mathcal U$. We have $\OPT^{\mathrm{rob}}(\mathcal U)\leq C_{\max}^{\mathrm{rob}}(A^{\mathrm{rob}}(\mathcal U),\mathcal U) \leq \gamma_{\mathrm{rob}}\OPT^{\mathrm{rob}}(\mathcal U)$. The polynomial-time algorithm defines $\tau:=C_{\max}^{\mathrm{rob}}(A^{\mathrm{rob}}(\mathcal U),\mathcal U)/\Gamma$ and $p_j^{+,\tau}:=p_j^-+\bigl(p_j^+-p_j^--\tau w_j\bigr)_+$ for every job $j\in[n]$. Let $q^{+,\tau}$ be the restricted-assignment instance with job sizes $p^{+,\tau}$ and the same eligibility sets as $\hat q$. Let $A(q)$ be the output of the $\gamma$-approximation algorithm for makespan minimization over instance $q$. The algorithm applies the polynomial-time implementation of \Cref{alg:threshold-splitting-rr} to instances $\hat q$ and $q^{+,\tau}$ using the makespans $C_{\text{max}}(A(\hat q),\hat q)$ and $C_{\text{max}}(A(q^{+,\tau}),q^{+,\tau})$.

\textbf{Consistency.} By \Cref{thm:polynomial-implementation-threshold-splitting}, the returned schedule $\mathcal S$ satisfies $C_{\text{max}}(\mathcal S,\hat q) \leq\gamma\left(1+\frac{\gamma}{\lambda}\right)\OPT(\hat q)$.

\textbf{Robustness.} We first show that $\OPT(q^{+,\tau})\leq\OPT^{\mathrm{rob}}(\mathcal U)$. Let $\mathcal T=(T_1,\ldots,T_m)$ be a schedule attaining $\OPT^{\mathrm{rob}}(\mathcal U)$. For every machine $i\in[m]$, let $R_i:=\{j\in T_i:p_j^+-p_j^->\tau w_j\}$. We have $\sum_{j\in R_i}w_j\leq\Gamma$. Indeed, suppose that $\sum_{j\in R_i}w_j>\Gamma$. Since the adversary may choose fractional deviations, there is a vector $z\in[0,1]^n$ supported on $R_i$ such that $\sum_{j\in R_i}w_jz_j=\Gamma$. Since $p_j^+-p_j^->\tau w_j$ for every $j\in R_i$, this vector gives additional load strictly larger than $\tau\sum_{j\in R_i}w_jz_j=\Gamma\tau= C_{\max}^{\mathrm{rob}}(A^{\mathrm{rob}}(\mathcal U),\mathcal U) \geq \OPT^{\mathrm{rob}}(\mathcal U)$ on machine $i$, which contradicts the definition of $\mathcal T$.

Since $\sum_{j\in R_i}w_j\leq\Gamma$, the vector that sets $z_j=1$ for every $j\in R_i$ and $z_j=0$ otherwise is feasible for the weighted fractional uncertainty set. Therefore,
$$
\sum_{j\in T_i}p_j^{+,\tau}
=
\sum_{j\in T_i}p_j^-
+
\sum_{j\in R_i}\bigl(p_j^+-p_j^--\tau w_j\bigr)
\leq
\sum_{j\in T_i}p_j^-
+
\sum_{j\in R_i}\bigl(p_j^+-p_j^-\bigr)
\leq
\OPT^{\mathrm{rob}}(\mathcal U),
$$
where the equality is by the definitions of $R_i$ and $p_j^{+,\tau}$, the first inequality follows since $\tau w_j\geq0$, and the second since the deviation vector supported on $R_i$ is feasible for the weighted fractional uncertainty set. Since this holds for every machine of schedule $\mathcal T$, we have $\OPT(q^{+,\tau})\leq\OPT^{\mathrm{rob}}(\mathcal U)$.

By \Cref{thm:polynomial-implementation-threshold-splitting}, the returned schedule $\mathcal S=(S_1,\ldots,S_m)$ satisfies
$$
C_{\text{max}}(\mathcal S,q^{+,\tau})
\leq
\gamma(1+\gamma\lambda)\OPT(q^{+,\tau})
\leq
\gamma(1+\gamma\lambda)\OPT^{\mathrm{rob}}(\mathcal U),
$$
where the second inequality follows from $\OPT(q^{+,\tau})\leq\OPT^{\mathrm{rob}}(\mathcal U)$.

For every machine $i\in[m]$, \Cref{lem:budgeted-weighted-fractional-reduction} applied to the set $S_i$ and $\theta=\tau$ gives
$\max_{q\in\mathcal U}\sum_{j\in S_i}q_{i,j}\leq\Gamma\tau+\sum_{j\in S_i}p_j^{+,\tau}=C_{\max}^{\mathrm{rob}}(A^{\mathrm{rob}}(\mathcal U),\mathcal U)+\sum_{j\in S_i}p_j^{+,\tau}$. Thus,
$$
\max_{q\in\mathcal U}C_{\text{max}}(\mathcal S,q)
\leq
C_{\max}^{\mathrm{rob}}(A^{\mathrm{rob}}(\mathcal U),\mathcal U)
+
C_{\text{max}}(\mathcal S,q^{+,\tau})
\leq
\left(\gamma_{\mathrm{rob}}+\gamma(1+\gamma\lambda)\right)
\OPT^{\mathrm{rob}}(\mathcal U),
$$
where the first inequality follows by taking the maximum over machines in the preceding inequality and the second since $A^{\mathrm{rob}}(\mathcal U)$ is a $\gamma_{\mathrm{rob}}$-approximate robust schedule and by the preceding bound on $C_{\text{max}}(\mathcal S,q^{+,\tau})$.
\end{proof}

The same obstruction as in \Cref{subsec:budgeted-related-impossibility} also applies to weighted fractional budgets. Even when all weights are equal to one, and the budget is one, the adversary can fully deviate the job placed on a slow machine.

\begin{theorem}
\label{thm:budgeted-related-weighted-fractional-no-constant-tradeoff}
For every constants $\alpha\geq 1$ and $\beta\geq 1$, there exists a related-machines instance with weights $w_j=1$ for every job and weighted fractional budget $\Gamma=1$ such that no randomized algorithm is both $\alpha$-consistent and $\beta$-robust.
\end{theorem}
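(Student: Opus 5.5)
The plan is to reuse the instance from the proof of \Cref{thm:budgeted-related-no-constant-tradeoff}, with all weights $w_j=1$ and weighted fractional budget $\Gamma=1$. Fix $\alpha,\beta\geq1$ and choose integers $k>2\beta$ and $\ell\geq 2\alpha k$. Take one machine of speed $k$, $\ell$ machines of speed $1$, and $n=k+\ell$ jobs with $(p^-_j,\hat p_j,p^+_j)=(1,1,L)$. Choose $L$ so large that $kL/(2(n+L-1))>\beta$. The predicted scenario $\hat q$ (all $z_j=0$) lies in $\mathcal U$. As before, $\OPT(\hat q)=1$: total size $n$ equals total speed $n$, and $k$ jobs on the fast machine plus one job per slow machine achieves $1$.

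\textbf{Key step: bounding the robust optimum.} The only new ingredient is an upper bound on $\OPT^{\mathrm{rob}}(\mathcal U)$ for the fractional set. I will take the schedule that puts every job on the speed-$k$ machine. For any $z\in[0,1]^n$ with $\sum_j z_j\leq 1$, the total size is
\[
\sum_j \bigl(1+z_j(L-1)\bigr)\leq n+L-1.
\]
So this schedule has worst-case makespan at most $(n+L-1)/k$, giving $\OPT^{\mathrm{rob}}(\mathcal U)\leq (n+L-1)/k$, exactly as in the cardinality case. Equivalently, one can invoke \Cref{lem:budgeted-weighted-fractional-reduction} with $\theta=L-1$, giving $\Gamma\theta+\sum_j p_j^{+,\theta}=L-1+n$.

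\textbf{Lower bound on the algorithm.} For the deterministic case, an $\alpha$-consistent schedule cannot place all jobs on the fast machine, since $n/k=1+\ell/k>\alpha$. Hence some job $j$ sits on a speed-$1$ machine. The scenario $z=e_j$ is feasible because $w_jz_j=1\leq\Gamma$, and it gives load at least $L$. Therefore the robustness ratio is at least $kL/(n+L-1)>\beta$.

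For randomized algorithms, let $X$ be the number of jobs placed on slow machines. Expected consistency gives $\alpha\geq\mathbb E[(n-X)/k]$, so $\mathbb E[X]\geq n-\alpha k\geq n/2$. By averaging, some fixed job $j^\star$ lies on a slow machine with probability at least $1/2$. The oblivious scenario $e_{j^\star}$ then gives expected makespan at least $L/2$, so the ratio is at least $kL/(2(n+L-1))>\beta$.

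The only point needing care is that the fractional set contains more scenarios than the cardinality set. This could in principle raise $\OPT^{\mathrm{rob}}$, but the total-size bound above shows it does not. Everything else transfers verbatim from \Cref{app:proof-budgeted-related-no-constant-tradeoff}, because the extreme point $e_j$ of the fractional set coincides with the cardinality-budget deviation of job $j$.
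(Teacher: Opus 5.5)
Your proposal is correct and follows the paper's proof essentially step for step: the same instance (one speed-$k$ machine, $\ell$ unit-speed machines, unit weights, $\Gamma=1$), the same bound $\OPT^{\mathrm{rob}}(\mathcal U)\leq (n+L-1)/k$ from placing every job on the fast machine, and the same deterministic and averaging-based randomized arguments using the single-job deviation $z=e_{j^\star}$. Your alternative derivation of the robust bound via \Cref{lem:budgeted-weighted-fractional-reduction} with $\theta=L-1$ is also valid, though the direct total-size argument already suffices.
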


Consequently, related machines admit no universal constant consistency robustness guarantee for weighted fractional budgeted uncertainty relative to the min-max benchmark.

\begin{proof}
Fix constants $\alpha \ge 1$ and $\beta \ge 1$, and choose integers $k>2\beta$ and $\ell\ge 2\alpha k$. Consider a related-machines instance with one machine of speed $k$ and $\ell$ machines of speed $1$. Let $n:=k+\ell$. There are $n$ jobs, and every job $j\in[n]$ has lower job size $p^-_j=1$, predicted job size $\hat p_j=1$, worst-case job size $p^+_j=L$, and weight $w_j=1$, where $L\ge1$ is chosen such that $kL/(2(n+L-1))>\beta$. Such a choice exists since $kL/(2(n+L-1))$ converges to $k/2>\beta$ as $L$ tends to infinity.

We first compute $\OPT(\hat q)$. The total predicted job size is $n$, while the total machine speed is $k+\ell=n$, and therefore $\OPT(\hat q)\ge 1$. Assigning $k$ jobs to the machine of speed $k$ and one job to each machine of speed $1$ gives a predicted makespan of $1$. Thus, $\OPT(\hat q)=1$.

We next bound the robust optimum. Consider the schedule that assigns every job to the machine of speed $k$. For any instance $q\in\mathcal U$, there is a vector $z\in[0,1]^n$ satisfying $\sum_{j\in[n]}z_j\le1$ such that the realized size of job $j$ is $1+z_j(L-1)$. Thus, the total realized job size assigned to the machine of speed $k$ is at most $n+L-1$. Hence $\OPT^{\mathrm{rob}}(\mathcal U)\le \frac{n+L-1}{k}$.

We first prove the deterministic lower bound. Let $\mathcal S$ be a schedule satisfying $C_{\max}(\mathcal S,\hat q)\le\alpha\OPT(\hat q)=\alpha$. If $\mathcal S$ assigns every job to the machine of speed $k$, then $C_{\max}(\mathcal S,\hat q)=n/k=1+\ell/k>\alpha$, where the last inequality follows from $\ell\ge 2\alpha k$. Hence, every $\alpha$-consistent schedule assigns some job $j$ to a machine of speed $1$. Under the weighted fractional budget with $w_j=1$ for every job and $\Gamma=1$, the adversary may set $z_j=1$ and $z_{j'}=0$ for every $j'\neq j$. The load of the machine containing job $j$ is then at least $L$, and therefore $C^{\mathrm{rob}}_{\max}(\mathcal S,\mathcal U)\ge L$. Thus every deterministic $\alpha$-consistent schedule has robustness ratio at least $L/((n+L-1)/k)=kL/(n+L-1)>\beta$.

We now prove that the same construction rules out randomized algorithms. Let $\textsc{ALG}$ be a randomized algorithm and $S=\textsc{ALG}(\mathcal U,\hat q)$ be the random schedule returned by the algorithm, and let $X$ be the number of jobs assigned to machines of speed $1$. Since the load on the machine of speed $k$ is at least $(n-X)/k$, expected consistency implies
$$
\alpha
\ge
\mathbb E\!\left[C_{\max}(\mathcal S,\hat q)\right]
\ge
\frac{n-\mathbb E[X]}{k}.
$$
Thus $\mathbb E[X]\ge n-\alpha k=\ell-(\alpha-1)k$. Since $\ell\ge2\alpha k$, we have $\mathbb E[X]\ge n/2$. Equivalently, the sum over all jobs of the probability that the job is assigned to a machine of speed $1$ is at least $n/2$. Hence, by averaging, there exists a job $j^\star$ that is assigned to a machine of speed $1$ with probability at least $1/2$.

Consider the scenario in which $z_{j^\star}=1$ and $z_j=0$ for every $j\neq j^\star$. This scenario is feasible for the weighted fractional budget because $w_{j^\star}=1$ and $\Gamma=1$. Under this scenario, whenever $j^\star$ is assigned to a machine of speed $1$, the makespan is at least $L$. Therefore, $\mathbb E\!\left[C_{\max}(\mathcal S,q^{j^\star})\right]\ge \frac L2$.

Using $\OPT^{\mathrm{rob}}(\mathcal U)\le (n+L-1)/k$, the expected robustness ratio is at least
$$
\frac{\mathbb E\!\left[C_{\max}(\mathcal S,q^{j^\star})\right]}{\OPT^{\mathrm{rob}}(\mathcal U)}\geq \frac{L/2}{(n+L-1)/k} = \frac{kL}{2(n+L-1)} > \beta .
$$
Thus, no randomized algorithm satisfying expected $\alpha$-consistency can be $\beta$-robust in expectation. Since $\alpha$ and $\beta$ are arbitrary constants, related machines admit no instance-independent constant consistency-robustness tradeoff for weighted fractional budgeted uncertainty.
\end{proof}

\subsection{Weighted binary budget} \label{app:budgeted-weighted-binary}

Under a weighted binary budget, the adversary may fully deviate a subset of jobs whose total weight is at most $\Gamma$. Each job $j\in[n]$ has an interval $[p_j^-,p_j^+]$ and weight $w_j>0$. For a budget $\Gamma\geq0$, the corresponding uncertainty set is $\mathcal U=\{ q:\exists T\subseteq [n]\text{ s.t. }\sum_{j\in T}w_j \leq \Gamma, q_{i,j}=p^-_j \text{ if } j\notin T, \text{ and } q_{i,j}=p_j\in [p^-_j,p^+_j] \text{ if } j\in T \}$.
Thus, increasing job $j$ from $p^-_j$ to $p^+_j$ consumes $w_j$ units of budget. The cardinality-budget model is the special case in which $w_j=1$ for every job. We assume that every job $j$ with positive deviation $p_j^+-p_j^-$ satisfies $w_j\leq\Gamma$. This is without loss of generality: if $w_j>\Gamma$, then job $j$ cannot be selected by any feasible scenario, and therefore its processing time can never be increased from $p_j^-$ to $p_j^+$; replacing $p_j^+$ by $p_j^-$ leaves the uncertainty set unchanged. For a schedule $\mathcal S$, consistency is measured with respect to the predicted instance $\hat q$, while robustness is measured with respect to the min-max benchmark $\OPT^{\mathrm{rob}}(\mathcal U)$.

\paragraph{Algorithm description.}
The algorithm below reduces the weighted binary model to its weighted fractional relaxation. It applies the weighted fractional construction to the relaxed uncertainty set and returns the resulting schedule. Under the assumption that every job with positive deviation satisfies $w_j\leq\Gamma$, the weighted fractional robust benchmark is at most twice the weighted binary robust benchmark. Combining this comparison with the weighted fractional guarantee gives the stated consistency and robustness.

Let $q(p, \mathcal M)$ denote the processing times corresponding to job sizes $p$ and eligibility sets $\mathcal M = \{M_j\}_{j\in [n]}$, i.e., $q(p, \mathcal M)_{i,j} = p_j$ if $i \in M_j$ and $q(p, \mathcal M)_{i,j} = \infty$ otherwise. Similarly, we let $\mathcal U(p^-, p^+, \mathcal M,w, \Gamma)$ denote the weighted fractional budgeted uncertainty set for restricted-assignment machines corresponding to $p^-, p^+, \mathcal M,w,$ and $\Gamma$.

\begin{algorithm}[H]
\caption{\btiAlg algorithm for weighted binary budget uncertainty and restricted assignment}
\label{alg:budgeted-weighted-binary-restricted}
\begin{algorithmic}[1]
\setlength{\itemsep}{0.25em}
\Input Lower, predicted, and worst-case sizes $p^-, \hat p,$ and $p^+$, eligibility sets $\mathcal M=\{M_j\}_j$, job weights $(w_j)_{j\in[n]}$, weighted binary budget $\Gamma>0$, and parameter $\lambda>0$
\State Let $\mathcal U'$ be the weighted fractional relaxation of $\mathcal U(p^-,p^+,\mathcal M,w,\Gamma)$
\State \Return the schedule returned by \Cref{alg:budgeted-weighted-fractional-restricted} on $\mathcal U'$ with predicted sizes $\hat p$ and parameter $\lambda$ \Comment{Apply the weighted fractional construction}
\end{algorithmic}
\end{algorithm}
\paragraph{Algorithm analysis.} We show that \btiAlg inherits the consistency guarantee of the weighted fractional construction, while losing a factor of two in the robustness bound when passing from the weighted fractional relaxation back to the weighted binary budget uncertainty set. Again, the parameter $\lambda$ controls how much the algorithm trusts the predicted instance. The next result states the resulting tradeoff.

\begin{restatable}{theorem}{WeightedBinaryTradeoff}
\label{thm:budgeted-weighted-binary-restricted}
Assume that every job with positive deviation satisfies $w_j\leq\Gamma$. For any $\lambda>0$, there exists an algorithm that has consistency $1+1/\lambda$ and robustness $4+2\lambda$ for weighted binary budget uncertainty and restricted assignment.
\end{restatable}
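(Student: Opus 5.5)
The plan is to run \Cref{alg:budgeted-weighted-binary-restricted} and compare the binary uncertainty set $\mathcal U$ with its weighted fractional relaxation $\mathcal U'$. Consistency is immediate: the predicted scenario $\hat q$ and the makespan $\OPT(\hat q)$ do not depend on the uncertainty set. Hence \Cref{thm:budgeted-weighted-fractional-restricted} applied to $\mathcal U'$ gives consistency $1+1/\lambda$. (We only need $\hat q$ as an input of \tsAlg, not membership in $\mathcal U'$; in any case $\mathcal U\subseteq\mathcal U'$.) For robustness, the same theorem gives $C_{\max}^{\mathrm{rob}}(\mathcal S,\mathcal U')\le(2+\lambda)\OPT^{\mathrm{rob}}(\mathcal U')$. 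Since $\mathcal U\subseteq\mathcal U'$, we have $C_{\max}^{\mathrm{rob}}(\mathcal S,\mathcal U)\le C_{\max}^{\mathrm{rob}}(\mathcal S,\mathcal U')$. It therefore suffices to prove
\[
\OPT^{\mathrm{rob}}(\mathcal U')\le 2\,\OPT^{\mathrm{rob}}(\mathcal U),
\]
which yields robustness $2(2+\lambda)=4+2\lambda$.

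To prove this comparison, take a schedule $\mathcal T$ attaining $\OPT^{\mathrm{rob}}(\mathcal U)$ and show that for every machine $i$ and set $J=T_i$ the fractional worst load is at most twice the binary worst load. Write $d_j:=p_j^+-p_j^-$. The fractional extra load is the value of the fractional knapsack LP $\max\{\sum_{j\in J}d_jz_j:\sum_j w_jz_j\le\Gamma,\ z\in[0,1]^J\}$, and the binary extra load is its integer version. Use the standard structure of an optimal LP vertex: sort the jobs by $d_j/w_j$, take a greedy prefix $P$ integrally, and include at most one fractional job $f$. Then $P$ is itself binary-feasible, since $\sum_{j\in P}w_j\le\Gamma$. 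The singleton $\{f\}$ is also binary-feasible because $w_f\le\Gamma$ by assumption; jobs with $d_j=0$ are irrelevant. The LP value is therefore at most $\sum_{j\in P}d_j+d_f\le 2\cdot(\text{binary extra load})$. Adding the lower load $\sum_{j\in J}p_j^-$ once on each side gives a fractional worst load at most $2\times$ the binary worst load on machine $i$. (One could even keep the lower load with factor $1$.) Taking the maximum over machines shows that $C_{\max}^{\mathrm{rob}}(\mathcal T,\mathcal U')\le 2\OPT^{\mathrm{rob}}(\mathcal U)$, and hence the comparison holds.

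The only delicate step is the knapsack rounding bound, specifically the justification that the single fractional item is individually feasible. This is exactly where the assumption $w_j\le\Gamma$ is used, and it is why the theorem states that hypothesis. Everything else is a direct chaining of \Cref{thm:budgeted-weighted-fractional-restricted} with set inclusion.
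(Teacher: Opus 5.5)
Your proposal is correct and follows the paper's proof essentially step for step. The paper also runs the weighted fractional construction on the relaxation $\mathcal U'$ and uses $\mathcal U\subseteq\mathcal U'$. It proves $\OPT^{\mathrm{rob}}(\mathcal U')\le 2\OPT^{\mathrm{rob}}(\mathcal U)$ in \Cref{lem:budgeted-weighted-binary-gap} with the same argument: a greedy prefix plus one fractional item, where $w_h\le\Gamma$ makes the singleton binary-feasible. Chaining these gives robustness $2(\lambda+2)=2\lambda+4$, exactly as you do.
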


The analysis of \Cref{alg:budgeted-weighted-binary-restricted} proceeds through the weighted fractional relaxation. For a fixed set of jobs, the worst weighted binary load is upper bounded by the corresponding weighted fractional load, which can in turn be bounded by the load in an effective interval instance up to an additive term $\Gamma\theta$. The next lemma formalizes this bound.

Let $p(q)$ denote the job sizes corresponding to processing times $q$. Given a weighted binary budgeted uncertainty set $\mathcal U$, we let $\mathcal{P}(\mathcal U) = \{p(q) : q \in \mathcal U\}$ be the uncertain job sizes  corresponding to the uncertain processing times $\mathcal U$.

\begin{lemma}
\label{lem:budgeted-weighted-binary-reduction}
Let $p_j^{+,\theta}:=p_j^-+\bigl(p_j^+-p_j^--\theta w_j\bigr)_+$, then, for any weighted binary budgeted uncertainty set $\mathcal U$ with budget $\Gamma$, jobs $J\subseteq[n]$, and $\theta\geq0$, we have $\max_{p\in\mathcal P(\mathcal U)}\sum_{j\in J}p_j\leq\Gamma\theta+\sum_{j\in J}p_j^{+,\theta}$.
\end{lemma}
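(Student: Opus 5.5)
The plan is to reduce this to \Cref{lem:budgeted-weighted-fractional-reduction} through a containment of uncertainty sets. Let $\mathcal U'$ be the weighted fractional relaxation of $\mathcal U$: it has the same $p^-,p^+,w,\Gamma$, but the deviation vector $z$ may range over $[0,1]^n$ with $\sum_j w_jz_j\le\Gamma$.

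\textbf{Step 1: $\mathcal P(\mathcal U)\subseteq\mathcal P(\mathcal U')$.} Take any $p\in\mathcal P(\mathcal U)$, witnessed by a set $T$ with $\sum_{j\in T}w_j\le\Gamma$. For $j\in T$ we have $p_j\in[p^-_j,p^+_j]$, so we can write $p_j=p^-_j+z_j(p^+_j-p^-_j)$ for some $z_j\in[0,1]$. If $p^+_j=p^-_j$, choose $z_j=0$. For $j\notin T$, set $z_j=0$. Then
\[
\sum_j w_jz_j\le\sum_{j\in T}w_j\le\Gamma,
\]
so $z$ is feasible for $\mathcal U'$ and produces exactly the sizes $p$. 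Hence $p\in\mathcal P(\mathcal U')$.

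\textbf{Step 2: pass the bound across.} By Step 1, for any fixed $J\subseteq[n]$,
\[
\max_{p\in\mathcal P(\mathcal U)}\sum_{j\in J}p_j\le\max_{p\in\mathcal P(\mathcal U')}\sum_{j\in J}p_j .
\]
Now apply \Cref{lem:budgeted-weighted-fractional-reduction} to $\mathcal U'$ with the same $\theta\ge0$. Since $p_j^{+,\theta}$ is defined by the same formula in both lemmas, this bounds the right-hand side by $\Gamma\theta+\sum_{j\in J}p_j^{+,\theta}$, which is the claim.

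The only point that needs care is Step 1. The binary model allows any value in $[p^-_j,p^+_j]$ for the selected jobs, rather than only the endpoint $p^+_j$, so these intermediate values must also be shown to lie in the relaxation; the interpolation parameter $z_j$ handles this. Note that the assumption $w_j\le\Gamma$ is not needed for this lemma. It only matters later, when comparing the two robust optima to obtain the factor-two loss in robustness.
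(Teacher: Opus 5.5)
Your proof is correct and is essentially the paper's argument: the paper also bounds the binary worst-case load by the value of the fractional relaxation and then applies the same LP duality bound at $\theta'=\theta$. The only difference is that it re-derives the dual inline instead of invoking \Cref{lem:budgeted-weighted-fractional-reduction}. Your explicit containment $\mathcal P(\mathcal U)\subseteq\mathcal P(\mathcal U')$, which covers intermediate values in $[p_j^-,p_j^+]$, makes precise a step the paper states in one line, and you are right that $w_j\le\Gamma$ plays no role here.
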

\begin{proof}
Fix a set of jobs $J\subseteq[n]$. The maximum additional load above the lower load under the weighted binary budget is at most the value of its fractional relaxation, $\max\{\sum_{j\in J}(p_j^+-p_j^-)z_j:\sum_{j\in J}w_jz_j\leq\Gamma\text{ and }0\leq z_j\leq1\text{ for every }j\in J\}$. The dual of this linear program is the minimum of $\{\Gamma\theta'+\sum_{j\in J}\mu_j:\theta'w_j+\mu_j\geq p_j^+-p_j^-\text{ for every }j\in J,\ \theta'\geq0,\text{ and }\mu_j\geq0\text{ for every }j\in J\}$. For a fixed value of $\theta'\geq0$, the minimum feasible choice is $\mu_j=(p_j^+-p_j^--\theta'w_j)_+$ for every $j\in J$. Strong duality applies since the fractional relaxation is feasible and bounded. Therefore, $\max_{p\in\mathcal P(\mathcal U)}\sum_{j\in J}p_j\leq\sum_{j\in J}p_j^-+\min_{\theta'\geq0}\{\Gamma\theta'+\sum_{j\in J}(p_j^+-p_j^--\theta'w_j)_+\}$. Fixing $\theta'=\theta$ gives $\max_{p\in\mathcal P(\mathcal U)}\sum_{j\in J}p_j\leq\sum_{j\in J}p_j^-+\Gamma\theta+\sum_{j\in J}(p_j^+-p_j^--\theta w_j)_+=\Gamma\theta+\sum_{j\in J}p_j^{+,\theta}$, where the equality follows from the definition of $p_j^{+,\theta}$.
\end{proof}

The next lemma compares the robust loads under the weighted fractional and weighted binary uncertainty sets. Under the assumption that every job with positive deviation satisfies $w_j\leq\Gamma$, the weighted fractional robust load is at most twice the weighted binary robust load.

\begin{lemma}
\label{lem:budgeted-weighted-binary-gap}
Assume that every job with positive deviation satisfies $w_j\leq\Gamma$. Let $\mathcal U$ be the weighted binary uncertainty set and let $\mathcal U'$ be its weighted fractional relaxation. Then, for every set of jobs $J\subseteq[n]$ and every machine $i\in[m]$, $ \max_{q\in\mathcal U'}\sum_{j\in J}q_{i,j} \leq 2\max_{q\in\mathcal U}\sum_{j\in J}q_{i,j}$.  Consequently, $\OPT^{\mathrm{rob}}(\mathcal U')\leq2\OPT^{\mathrm{rob}}(\mathcal U)$.
\end{lemma}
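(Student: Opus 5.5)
The inequality reduces to the classical factor-two rounding of the fractional knapsack relaxation. Fix $J$ and machine $i$. Both maxima share the lower load $\sum_{j\in J}p_j^-$, so it suffices to compare the additional loads. Restricting to jobs $j\in J$ with $i\in M_j$ (otherwise both sides are $\infty$ and the claim is trivial) and writing $d_j:=p_j^+-p_j^-\geq 0$, these are
\[
F:=\max\Bigl\{\sum_{j\in J}d_jz_j:\ \sum_{j\in J}w_jz_j\leq\Gamma,\ z\in[0,1]^J\Bigr\}
\]
and the binary value $B$, which is the same maximum over $z\in\{0,1\}^J$. Since $d_j\ge 0$, the binary optimum equals the maximum of $\sum_{j\in T}d_j$ over $T\subseteq J$ with $w(T)\le\Gamma$, setting all selected jobs to $p_j^+$. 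I will show $F\leq 2B$. This gives $\sum p^-_j+F\le 2(\sum p^-_j+B)$, using $\sum p_j^-\geq 0$, which is the claimed inequality.

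To bound $F$, I take an optimal basic (greedy) solution of the fractional knapsack. Sort the jobs with $d_j>0$ by the ratio $d_j/w_j$ in nonincreasing order and fill greedily. The optimal fractional solution then sets $z_j=1$ for a prefix $T$ of jobs, at most one fractional job $k$ with $0<z_k<1$, and zero elsewhere. Jobs with $d_j=0$ are irrelevant. I will use the standard fact that some optimal vertex of the LP has at most one fractional coordinate, or verify the greedy structure directly by an exchange argument.

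Next come the two feasible binary candidates. The prefix $T$ is feasible because $w(T)\leq\Gamma$. The singleton $\{k\}$ is feasible by the standing assumption $w_k\leq\Gamma$, since $d_k>0$. This is exactly where the assumption enters, and it is the only delicate point. Hence
\[
F=\sum_{j\in T}d_j+z_kd_k\le \sum_{j\in T}d_j+d_k\le 2B.
\]
If there is no fractional job, then $F=\sum_{j\in T}d_j\le B$ directly.

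For the consequence about robust optima, take a schedule $\mathcal T$ attaining $\OPT^{\mathrm{rob}}(\mathcal U)$. Apply the per-machine inequality to each $T_i$ and note that $\max_{q\in\mathcal U'}C_{\max}(\mathcal T,q)$ is the maximum over machines of the per-machine maxima. This holds because the machines' worst cases can be taken machine by machine: the robust makespan is $\max_i\max_q$, so the maxima commute. It follows that $\OPT^{\mathrm{rob}}(\mathcal U')\le C^{\mathrm{rob}}_{\max}(\mathcal T,\mathcal U')\le 2\OPT^{\mathrm{rob}}(\mathcal U)$.
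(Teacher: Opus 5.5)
Your proposal is correct and follows essentially the same argument as the paper. You take the greedy fractional-knapsack optimum (a fully selected prefix plus at most one fractional job), observe that both the prefix and the singleton are feasible binary choices (the latter by $w_j\leq\Gamma$), add the two bounds using $\sum_{j\in J}p_j^-\geq 0$, and apply the per-machine inequality to an optimal robust schedule for $\mathcal U$; your extra handling of ineligible machines and zero-deviation jobs only makes explicit what the paper leaves implicit.
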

\begin{proof}
Fix a set of jobs $J\subseteq[n]$. Order the jobs in $J$ by nonincreasing value of $(p_j^+-p_j^-)/w_j$. An optimal fractional-knapsack solution for the uncertain part of the load selects a set $J'\subseteq J$ of jobs fully and possibly one additional job $h\in J\setminus J'$ fractionally. If there is no fractional job, set $p_h^+-p_h^-:=0$. Therefore, $\max_{p\in\mathcal P(\mathcal U')}\sum_{j\in J}p_j\leq\sum_{j\in J}p_j^-+\sum_{j\in J'}(p_j^+-p_j^-)+(p_h^+-p_h^-)$.

The set $J'$ satisfies $\sum_{j\in J'}w_j\leq\Gamma$ and is therefore feasible for the weighted binary uncertainty set. Moreover, if $p_h^+-p_h^->0$, then $w_h\leq\Gamma$ by assumption, and hence the singleton $\{h\}$ is also feasible. Thus, $\max_{p\in\mathcal P(\mathcal U)}\sum_{j\in J}p_j\geq\sum_{j\in J}p_j^-+\sum_{j\in J'}(p_j^+-p_j^-)$ and $\max_{p\in\mathcal P(\mathcal U)}\sum_{j\in J}p_j\geq\sum_{j\in J}p_j^-+(p_h^+-p_h^-)$. Adding these two inequalities and using $\sum_{j\in J}p_j^-\geq0$ gives
$$
\begin{aligned}
2\max_{p\in\mathcal P(\mathcal U)}\sum_{j\in J}p_j
&\geq
2\sum_{j\in J}p_j^-+\sum_{j\in J'}(p_j^+-p_j^-)+(p_h^+-p_h^-)\\
&\geq
\sum_{j\in J}p_j^-+\sum_{j\in J'}(p_j^+-p_j^-)+(p_h^+-p_h^-)\\
&\geq
\max_{p\in\mathcal P(\mathcal U')}\sum_{j\in J}p_j.
\end{aligned}
$$
This proves the first statement. Let $\mathcal T=(T_1,\ldots,T_m)$ be a schedule attaining $\OPT^{\mathrm{rob}}(\mathcal U)$. Applying the preceding inequality to $T_i$ for every machine $i\in[m]$ gives $C_{\max}^{\mathrm{rob}}(\mathcal T,\mathcal U')\leq2C_{\max}^{\mathrm{rob}}(\mathcal T,\mathcal U)=2\OPT^{\mathrm{rob}}(\mathcal U)$. Since $\OPT^{\mathrm{rob}}(\mathcal U')$ is at most the robust makespan of $\mathcal T$ over $\mathcal U'$, we obtain $\OPT^{\mathrm{rob}}(\mathcal U')\leq2\OPT^{\mathrm{rob}}(\mathcal U)$.
\end{proof}

We now apply \Cref{lem:budgeted-weighted-binary-gap} to prove the weighted binary guarantee. We run the weighted fractional construction on the fractional relaxation $\mathcal U'$ of the weighted binary uncertainty set $\mathcal U$. The consistency guarantee is unchanged, while \Cref{lem:budgeted-weighted-binary-gap} bounds the loss when transferring the robustness guarantee from $\mathcal U'$ back to the original weighted binary uncertainty set $\mathcal U$ by a factor of two.

\begin{proof}[Proof of \Cref{thm:budgeted-weighted-binary-restricted}]
Let $\mathcal U$ be the weighted binary uncertainty set and let $\mathcal U'$ be its weighted fractional relaxation. \Cref{alg:budgeted-weighted-binary-restricted} runs \Cref{alg:budgeted-weighted-fractional-restricted} on $\mathcal U'$. By \Cref{thm:budgeted-weighted-fractional-restricted}, the returned schedule $\mathcal S$ satisfies
$C_{\text{max}}(\mathcal S,\hat q)\leq\left(1+\frac{1}{\lambda}\right)\OPT(\hat q)$
and $\max_{q\in\mathcal U'}C_{\text{max}}(\mathcal S,q)\leq (\lambda+2)\OPT^{\mathrm{rob}}(\mathcal U')$.

Since every instance in $\mathcal U$ also belongs to $\mathcal U'$, we have
$\max_{q\in\mathcal U}C_{\text{max}}(\mathcal S,q)\leq\max_{q\in\mathcal U'}C_{\text{max}}(\mathcal S,q)$. Moreover, by \Cref{lem:budgeted-weighted-binary-gap}, $\OPT^{\mathrm{rob}}(\mathcal U')\leq2\OPT^{\mathrm{rob}}(\mathcal U)$.
Therefore,
$$
\max_{q\in\mathcal U}C_{\text{max}}(\mathcal S,q)
\leq
(\lambda+2)\OPT^{\mathrm{rob}}(\mathcal U')
\leq
2(\lambda+2)\OPT^{\mathrm{rob}}(\mathcal U).
$$
Thus, $\mathcal S$ is $(2\lambda+4)$-robust. The consistency guarantee is unchanged from the weighted fractional construction, and hence $\mathcal S$ is $\left(1+\frac{1}{\lambda}\right)$-consistent.
\end{proof}

\Cref{alg:budgeted-weighted-binary-restricted} is not directly polynomial-time implementable since it runs the weighted fractional construction, which requires computing the robust optimum of the fractional relaxation to define the cutoff. By estimating this value with an approximation algorithm for the weighted fractional min-max problem, applying \Cref{thm:polynomial-implementation-threshold-splitting} to the resulting interval instance, and using the factor-two comparison between the weighted fractional and weighted binary benchmarks from \Cref{lem:budgeted-weighted-binary-gap}, we obtain the following consistency and robustness.

\begin{restatable}{corollary}{CorPolytimeWeightedBinary}
\label{thm:polytime-weighted-binary-budget}
Assume that every job with positive deviation satisfies $w_j\leq\Gamma$, that there is a polynomial-time $\gamma$-approximation algorithm for makespan minimization on restricted assignment, and that there is a polynomial-time $\gamma_{\mathrm{rob}}$-approximation algorithm for robust makespan minimization under weighted fractional budget uncertainty. Then, for weighted binary budget uncertainty and restricted assignment, for every $\lambda>0$, there is a polynomial-time algorithm with consistency $\gamma(1+\gamma/\lambda)$ and robustness $2\bigl(\gamma_{\mathrm{rob}}+\gamma(1+\gamma\lambda)\bigr)$.
\end{restatable}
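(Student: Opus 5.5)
The plan is to compose the polynomial-time weighted fractional construction with the factor-two comparison of \Cref{lem:budgeted-weighted-binary-gap}, exactly mirroring how \Cref{thm:budgeted-weighted-binary-restricted} was derived from \Cref{thm:budgeted-weighted-fractional-restricted}. Concretely, the algorithm forms the weighted fractional relaxation $\mathcal U'$ of the weighted binary set $\mathcal U$ and runs the polynomial-time algorithm of \Cref{thm:polytime-weighted-fractional-budget} on $\mathcal U'$. That algorithm needs the $\gamma$-approximation for restricted-assignment makespan and a $\gamma_{\mathrm{rob}}$-approximation for the weighted fractional robust problem; the latter is applied to $\mathcal U'$ itself, which is precisely the setting where the hypothesis provides it. The relaxation $\mathcal U'$ has the same data $(p^-,p^+,\mathcal M,w,\Gamma)$, so constructing it is trivial and runs in polynomial time.

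For consistency, \Cref{thm:polytime-weighted-fractional-budget} directly gives $C_{\max}(\mathcal S,\hat q)\le\gamma(1+\gamma/\lambda)\OPT(\hat q)$. The consistency benchmark $\OPT(\hat q)$ does not depend on the uncertainty set. One point to check is that $\hat q\in\mathcal U$ implies $\hat q\in\mathcal U'$, so the hypothesis that the prediction lies in the set is preserved. In any case the consistency argument in \Cref{thm:polynomial-implementation-threshold-splitting} never uses membership of $\hat q$ in the set.

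For robustness, \Cref{thm:polytime-weighted-fractional-budget} gives
\[
\max_{q\in\mathcal U'}C_{\max}(\mathcal S,q)\le\bigl(\gamma_{\mathrm{rob}}+\gamma(1+\gamma\lambda)\bigr)\OPT^{\mathrm{rob}}(\mathcal U').
\]
Since $\mathcal U\subseteq\mathcal U'$ (a binary deviation set $T$ with $\sum_{j\in T}w_j\le\Gamma$ corresponds to the indicator vector $z$), the left-hand side upper bounds $\max_{q\in\mathcal U}C_{\max}(\mathcal S,q)$. Then \Cref{lem:budgeted-weighted-binary-gap}, which uses the assumption $w_j\le\Gamma$ for every job with positive deviation, gives $\OPT^{\mathrm{rob}}(\mathcal U')\le2\OPT^{\mathrm{rob}}(\mathcal U)$. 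Chaining these three inequalities yields robustness $2\bigl(\gamma_{\mathrm{rob}}+\gamma(1+\gamma\lambda)\bigr)$.

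There is no real obstacle here. The only subtlety is the bookkeeping of which benchmark each approximation factor refers to. The $\gamma_{\mathrm{rob}}$ guarantee is relative to $\OPT^{\mathrm{rob}}(\mathcal U')$, not $\OPT^{\mathrm{rob}}(\mathcal U)$. For that reason the factor $2$ has to multiply the whole robustness expression, rather than only the $\gamma(1+\gamma\lambda)$ term.
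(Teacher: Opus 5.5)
Your proposal is correct and matches the paper's proof: run the polynomial-time weighted fractional construction on the relaxation $\mathcal U'$, use $\mathcal U\subseteq\mathcal U'$ to transfer the robust-makespan bound, and apply \Cref{lem:budgeted-weighted-binary-gap}, so that the factor $2$ multiplies the whole expression $\gamma_{\mathrm{rob}}+\gamma(1+\gamma\lambda)$. One small imprecision: a binary scenario supported on $T$ corresponds to some $z\in[0,1]^n$ supported on $T$, not necessarily to the indicator vector, since jobs in $T$ may take intermediate sizes. That $z$ still satisfies $\sum_j w_jz_j\le\sum_{j\in T}w_j\le\Gamma$, so the inclusion holds.
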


\begin{proof}
Let $\mathcal U$ be the weighted binary uncertainty set and let $\mathcal U'$ be its weighted fractional relaxation. The polynomial-time implementation runs the polynomial-time implementation of \Cref{alg:budgeted-weighted-fractional-restricted} on $\mathcal U'$. By \Cref{thm:polytime-weighted-fractional-budget}, the returned schedule $\mathcal S$ satisfies
$C_{\text{max}}(\mathcal S,\hat q)\leq\gamma\left(1+\frac{\gamma}{\lambda}\right)\OPT(\hat q)$ and $\max_{q\in\mathcal U'}C_{\text{max}}(\mathcal S,q)\leq \left(\gamma_{\mathrm{rob}}+\gamma(1+\gamma\lambda)\right)\OPT^{\mathrm{rob}}(\mathcal U')$. Since $\mathcal U\subseteq\mathcal U'$, we have $\max_{q\in\mathcal U}C_{\text{max}}(\mathcal S,q)\leq\max_{q\in\mathcal U'}C_{\text{max}}(\mathcal S,q)$.

Moreover, by \Cref{lem:budgeted-weighted-binary-gap}, $\OPT^{\mathrm{rob}}(\mathcal U')\leq2\OPT^{\mathrm{rob}}(\mathcal U)$. Therefore $ \max_{q\in\mathcal U}C_{\text{max}}(\mathcal S,q) \leq 2  (\gamma_{\mathrm{rob}} + \gamma(1+\gamma\lambda)) \OPT^{\mathrm{rob}}(\mathcal U)$. This proves the robustness bound, and the consistency bound follows from \Cref{thm:polytime-weighted-fractional-budget}.
\end{proof}

Finally, the impossibility result for cardinality budgets on related machines extends directly to weighted binary budgets, since the cardinality-budget model with $\Gamma=1$ is the special case in which $w_j=1$ for every job.

\begin{theorem}
\label{thm:budgeted-related-weighted-binary-no-constant-tradeoff}
For every constants $\alpha\geq 1$ and $\beta\geq 1$, there exists a related-machines instance with weights $w_j=1$ for every job and weighted binary budget $\Gamma=1$ such that no randomized algorithm is both $\alpha$-consistent and $\beta$-robust.
\end{theorem}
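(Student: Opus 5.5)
The plan is to reduce the statement directly to \Cref{thm:budgeted-related-no-constant-tradeoff}. When all weights satisfy $w_j=1$, the weighted binary uncertainty set with budget $\Gamma=1$ is
\[
\mathcal U=\{q:\exists T\subseteq[n],\ |T|\le 1,\ q_{i,j}=p^-_j \text{ for } j\notin T,\ q_{i,j}=p_j\in[p^-_j,p^+_j] \text{ for } j\in T\}.
\]
This is exactly the cardinality-budget uncertainty set with $\Gamma=1$. The first step is therefore to write out this identity explicitly, noting that the condition $\sum_{j\in T}w_j\le\Gamma$ becomes $|T|\le 1$. The standing assumption $w_j\le\Gamma$ holds trivially, since $w_j=1=\Gamma$.

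The second step is to take the related-machines instance constructed in the proof of \Cref{thm:budgeted-related-no-constant-tradeoff}:
\begin{itemize}
\item one machine of speed $k$ and $\ell$ machines of speed $1$, with integers $k>2\beta$ and $\ell\ge 2\alpha k$;
\item $n=k+\ell$ jobs, each with $(p^-_j,\hat p_j,p^+_j)=(1,1,L)$ and $L$ chosen so that $kL/(2(n+L-1))>\beta$;
\item weights $w_j=1$ for every job.
\end{itemize}
The benchmarks $\OPT(\hat q)$ and $\OPT^{\mathrm{rob}}(\mathcal U)$, and the robust makespan of every schedule, depend only on $\hat q$ and the set $\mathcal U$. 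Since $\mathcal U$ coincides with the cardinality-budget set, every quantity in that proof is unchanged. Hence the deterministic bound $kL/(n+L-1)>\beta$ and the randomized bound $kL/(2(n+L-1))>\beta$ transfer verbatim. The randomized bound relies on averaging to find a job $j^\star$ placed on a speed-$1$ machine with probability at least $1/2$, and then deviating that job alone, which is feasible because $w_{j^\star}=1\le\Gamma$.

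There is no real obstacle. The only point needing care is confirming that the two uncertainty sets are identical as sets of scenarios, including the fact that an unselected job has size exactly $p^-_j$ in both models. Once that is checked, the conclusion follows immediately. Alternatively, one can repeat the proof of \Cref{thm:budgeted-related-weighted-fractional-no-constant-tradeoff} with integral deviation vectors, since the adversarial scenarios used there ($z_{j^\star}=1$, all other $z_j=0$) are themselves binary.
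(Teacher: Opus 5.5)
Your proposal is correct and follows essentially the same route as the paper: with $w_j=1$ and $\Gamma=1$, the weighted binary constraint $\sum_{j\in T}w_j\le\Gamma$ becomes $|T|\le1$, so the uncertainty set coincides with the cardinality-budget set. The result then follows directly from \Cref{thm:budgeted-related-no-constant-tradeoff} applied to its instance.
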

Consequently, related machines admit no universal constant consistency-robustness guarantee for weighted binary budgeted uncertainty relative to the min-max benchmark.

\begin{proof}
Consider the instance from \Cref{thm:budgeted-related-no-constant-tradeoff}, and let $w_j=1$ for every job $j\in[n]$. Let $\mathcal U$ be the weighted binary uncertainty set with budget $\Gamma=1$, and let $\mathcal U'$ be the corresponding cardinality-budget uncertainty set with budget $\Gamma=1$. Since all weights are equal to one, a set of jobs is feasible for $\mathcal U$ if and only if it contains at most one job. Hence $\mathcal U=\mathcal U'$.

Therefore, consistency and robustness under $\mathcal U$ are identical to consistency and robustness under $\mathcal U'$. The result follows directly from \Cref{thm:budgeted-related-no-constant-tradeoff}.
\end{proof}

\section{Proofs Missing from \Cref{sec:general-uncertainty}}
\label{app:general-uncertainty}

\subsection{Proof of \Cref{thm:general-uncertainty-tradeoff}}
\label{app:proof-general-uncertainty-tradeoff}
In this subsection, we prove \Cref{thm:general-uncertainty-tradeoff}. For every scenario $q\in\mathcal U$, let $p(q)$ be the corresponding vector of job sizes. For every set of jobs $J\subseteq[n]$, let $\Psi_{\mathcal U}(J):=\sup_{q\in\mathcal U}\sum_{j\in J}p(q)_j$ denote its maximum load over the uncertainty set. We first show the properties of $\Psi_{\mathcal U}$ used in the analysis.

\begin{lemma}
\label{lem:general-psi-properties}
Let $\mathcal U\subseteq\mathbb R_+^n$ be nonempty and bounded. The set function $\Psi_{\mathcal U}$ is monotone and subadditive. Moreover, for every schedule $\mathcal S=(S_1,\ldots,S_m)$, $\sup_{q\in\mathcal U}C_{\text{max}}(\mathcal S,q)=\max_{i\in[m]}\Psi_{\mathcal U}(S_i)$.
\end{lemma}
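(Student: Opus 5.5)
The lemma has three parts, and I will verify each directly from the definition $\Psi_{\mathcal U}(J)=\sup_{q\in\mathcal U}\sum_{j\in J}p(q)_j$. Two facts make this work. First, $\mathcal U\subseteq\mathbb R_+^n$ is nonempty and bounded, so every supremum involved is a finite real number. Second, on identical machines $q_{i,j}=p(q)_j$ for every machine $i$.

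\emph{Monotonicity.} Take $J\subseteq J'$. For each fixed $q\in\mathcal U$, the sizes are nonnegative, so $\sum_{j\in J}p(q)_j\le\sum_{j\in J'}p(q)_j\le\Psi_{\mathcal U}(J')$. Taking the supremum over $q$ on the left gives $\Psi_{\mathcal U}(J)\le\Psi_{\mathcal U}(J')$.

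\emph{Subadditivity.} Take arbitrary $J,J'$ and fix $q\in\mathcal U$. By nonnegativity,
\[
\sum_{j\in J\cup J'}p(q)_j\le\sum_{j\in J}p(q)_j+\sum_{j\in J'}p(q)_j\le\Psi_{\mathcal U}(J)+\Psi_{\mathcal U}(J').
\]
Taking the supremum over $q$ gives $\Psi_{\mathcal U}(J\cup J')\le\Psi_{\mathcal U}(J)+\Psi_{\mathcal U}(J')$. Finiteness of both terms on the right is what makes this sum meaningful.

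\emph{Identity for a schedule.} For identical machines, $C_{\max}(\mathcal S,q)=\max_i\sum_{j\in S_i}p(q)_j$. The claim therefore amounts to exchanging a supremum over $q$ with a finite maximum over machines, i.e.
\[
\sup_{q}\max_i f_i(q)=\max_i\sup_q f_i(q), \qquad f_i(q):=\sum_{j\in S_i}p(q)_j .
\]
I will prove the two inequalities separately.

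For $\le$: for every $q$ and every $i$ we have $f_i(q)\le\Psi_{\mathcal U}(S_i)\le\max_{i'}\Psi_{\mathcal U}(S_{i'})$. Hence $C_{\max}(\mathcal S,q)\le\max_{i'}\Psi_{\mathcal U}(S_{i'})$, and taking the supremum over $q$ gives the bound.

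For $\ge$: fix any machine $i$. For every $q$, $f_i(q)\le C_{\max}(\mathcal S,q)\le\sup_{q'}C_{\max}(\mathcal S,q')$. Taking the supremum over $q$ gives $\Psi_{\mathcal U}(S_i)\le\sup_{q}C_{\max}(\mathcal S,q)$. Since this holds for every $i$, the maximum over $i$ satisfies the same bound.

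No step is a genuine obstacle. The only point needing care is that the worst-case scenario may differ across machines: the adversary gets a single $q$ for the whole schedule, so one cannot just plug in one maximizing scenario. The sup–max exchange above handles this, because the identity only asks about the maximum over machines and never about simultaneous maximization. The argument also does not need the supremum to be attained, which matters because $\mathcal U$ need not be closed.
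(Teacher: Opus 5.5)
Your proof is correct and follows the paper's argument essentially step for step: monotonicity and subadditivity are shown pointwise in $q$ and then passed to the supremum, and the schedule identity is obtained from the two inequalities that exchange $\sup_q$ with the finite $\max_i$. The only cosmetic difference is that you prove subadditivity for two arbitrary, possibly overlapping, sets using nonnegativity. The paper instead states it for $t$ pairwise disjoint sets, which is the form used later, and that form follows from yours by induction.
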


\begin{proof}
If $J\subseteq J'$, then $\sum_{j\in J}p(q)_j\leq\sum_{j\in J'}p(q)_j$ for every $q\in\mathcal U$. Taking the supremum over $q\in\mathcal U$ gives $\Psi_{\mathcal U}(J)\leq\Psi_{\mathcal U}(J')$.

Let $J_1,\ldots,J_t$ be pairwise disjoint sets of jobs. For every $q\in\mathcal U$,
$$
\sum_{j\in\bigcup_{s=1}^t J_s}p(q)_j
=
\sum_{s=1}^t\sum_{j\in J_s}p(q)_j
\leq
\sum_{s=1}^t\Psi_{\mathcal U}(J_s),
$$
where the inequality follows from the definition of $\Psi_{\mathcal U}(J_s)$, since $\sum_{j\in J_s}p(q)_j\leq\Psi_{\mathcal U}(J_s)$ for every $s\in[t]$ and every $q\in\mathcal U$. Taking the supremum over $q\in\mathcal U$ gives $\Psi_{\mathcal U}\bigl(\bigcup_{s=1}^tJ_s\bigr)\leq\sum_{s=1}^t\Psi_{\mathcal U}(J_s)$.

For every $q\in\mathcal U$, $C_{\text{max}}(\mathcal S,q)=\max_{i\in[m]}\sum_{j\in S_i}p(q)_j\leq\max_{i\in[m]}\Psi_{\mathcal U}(S_i)$. Taking the supremum over $q\in\mathcal U$ gives $\sup_{q\in\mathcal U}C_{\text{max}}(\mathcal S,q)\leq\max_{i\in[m]}\Psi_{\mathcal U}(S_i)$. Conversely, for every machine $i\in[m]$ and every $q\in\mathcal U$, $\sum_{j\in S_i}p(q)_j\leq C_{\text{max}}(\mathcal S,q)$. Taking the supremum over $q\in\mathcal U$ and then the maximum over machines gives the reverse inequality.
\end{proof}

The next lemma bounds the additional predicted load introduced by the first-crossing assignment.

\begin{lemma}
\label{lem:general-first-crossing-packing}
Let $a_1,\ldots,a_m\geq0$ be machine capacities, and let $b_1,\ldots,b_N\geq0$ be item sizes satisfying $b_t\leq c$ for every $t\in[N]$ and $\sum_{t=1}^Nb_t\leq\sum_{i=1}^ma_i$. The first-crossing assignment partitions the items into sets $I_1,\ldots,I_m$ such that $\sum_{t\in I_i}b_t\leq a_i+c$ for every machine $i\in[m]$.
\end{lemma}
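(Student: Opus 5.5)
We analyze the first-crossing rule exactly as it is executed in \Cref{alg:general-uncertainty-identical}. Machines are processed in order $1,\ldots,m$, and a pointer $t$ marks the first unassigned item. When machine $i$ is processed, one of three cases occurs. Either all items are already assigned, so $I_i=\varnothing$. Or all remaining items fit within $a_i$, so $I_i$ is the whole remainder. Or $I_i=\{t,\ldots,\ell\}$, where $\ell$ is the smallest index with $\sum_{s=t}^{\ell}b_s>a_i$.

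The per-machine bound is immediate in each case. In the first two cases $\sum_{s\in I_i}b_s\le a_i$. In the third case, minimality of $\ell$ gives $\sum_{s=t}^{\ell-1}b_s\le a_i$, with the empty sum when $\ell=t$. Adding the last item yields
\[
\sum_{s\in I_i}b_s\le a_i+b_\ell\le a_i+c .
\]

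The real content is that the $I_i$ form a partition, i.e.\ every item is assigned by the time machine $m$ is processed. Disjointness is clear because the pointer only advances. For coverage, suppose for contradiction that some item remains after machine $m$. Then no machine fell into the first two cases, since either of those exhausts the items and ends the procedure. So every machine $i$ received a first-crossing prefix with $\sum_{s\in I_i}b_s>a_i$. Summing over $i$, and using that the $I_i$ are disjoint and that at least one item is left unassigned with $b_s\ge0$, we get
\[
\sum_{t=1}^N b_t\ \ge\ \sum_{i=1}^m\sum_{s\in I_i}b_s\ >\ \sum_{i=1}^m a_i .
\]
This contradicts the hypothesis $\sum_t b_t\le\sum_i a_i$.

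One should also check that the third case is well defined, meaning an index $\ell$ with a crossing exists. It does, because we are in that case exactly when the remaining total exceeds $a_i$. Hence every item lies in exactly one $I_i$.

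The only step needing care is this coverage argument. The strict inequality from the crossing condition is what makes the contradiction work. Zero-size items are harmless: they either get swept into a crossing prefix or into the final ``all remaining fit'' case.
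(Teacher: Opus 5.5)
Your proof is correct and follows the paper's argument. The per-machine bound comes from minimality of the crossing index together with $b_\ell\le c$, and coverage comes from the strict inequality $\sum_{s\in I_i}b_s>a_i$ at each crossing, combined with $\sum_t b_t\le\sum_i a_i$. The only difference is presentation: the paper phrases coverage as a forward count (after crossing machines $1,\ldots,i$ the remainder is strictly below $\sum_{h>i}a_h$, so machine $m$ absorbs everything), whereas you argue by contradiction over all $m$ machines.
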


\begin{proof}
The first-crossing assignment scans the items in their given order and assigns consecutive items to the machines. If the total size of the remaining items is at most $a_i$, then all remaining items are assigned to machine $i$ and the procedure stops. Otherwise, machine $i$ receives the shortest prefix of the remaining items whose total size exceeds $a_i$. Since the last item of this prefix has size at most $c$, the total size assigned to machine $i$ is at most $a_i+c$.

It remains to show that all items are assigned. Whenever the procedure crosses the capacity of machine $i$, it assigns a total size strictly larger than $a_i$. Thus, after crossing the capacities of machines $1,\ldots,i$, the remaining total size is strictly smaller than $\sum_{t=1}^Nb_t-\sum_{h=1}^ia_h\leq\sum_{h=i+1}^ma_h$. In particular, when the procedure reaches the last machine, the remaining total size is at most $a_m$, and all remaining items are assigned.
\end{proof}

We next give the predicted and robust bounds satisfied by the blocks constructed by \Cref{alg:general-uncertainty-identical}.

\begin{lemma}
\label{lem:general-block-bounds}
Every non-tiny block $B$ constructed by \Cref{alg:general-uncertainty-identical} satisfies
$$
\frac{\OPT(\hat q)}{2\lambda}<\sum_{j\in B}\hat p_j\leq\frac{3\OPT(\hat q)}{2\lambda},
$$
and every tiny block $D$ satisfies
$\sum_{j\in D}\hat p_j\leq\OPT(\hat q)/(2\lambda)$. Moreover, every block $B$, whether tiny or non-tiny, satisfies $\Psi_{\mathcal U}(B)\leq\OPT^{\mathrm{rob}}(\mathcal U)$. Finally, each machine $i\in[m]$ receives blocks of total predicted load at most $y_i+2\OPT(\hat q)/\lambda$.
$$
$$
\end{lemma}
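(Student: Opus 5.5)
The plan is to verify the four claims of the lemma in order, each directly from the construction in \Cref{alg:general-uncertainty-identical} together with \Cref{lem:general-psi-properties} and \Cref{lem:general-first-crossing-packing}. Throughout, write $\delta:=\OPT(\hat q)/(2\lambda)$, and recall that every job outside $\hat H$ has $\hat p_j\leq\OPT(\hat q)/\lambda=2\delta$.

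For the non-tiny blocks, the lower bound $\sum_{j\in B}\hat p_j>\delta$ holds by the defining condition of the prefix chosen in the while loop. For the upper bound, I use that $B$ is the \emph{shortest} prefix exceeding $\delta$. Hence $B$ minus its last job $j^\ast$ has predicted load at most $\delta$, and adding $\hat p_{j^\ast}\leq 2\delta$ gives at most $3\delta=3\OPT(\hat q)/(2\lambda)$. Such a prefix exists because the loop runs only while the remaining load exceeds $\delta$. A tiny block is the set $R$ remaining when the loop exits, so its predicted load is at most $\delta$ by the loop condition.

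For the robust bound, every block (tiny or not) is a subset of $T_i\setminus\hat H\subseteq T_i$ for a single machine $i$ of the optimal min-max schedule. Monotonicity of $\Psi_{\mathcal U}$ from \Cref{lem:general-psi-properties} gives $\Psi_{\mathcal U}(B)\leq\Psi_{\mathcal U}(T_i)$. The identity $\sup_{q}C_{\max}(\mathcal T,q)=\max_i\Psi_{\mathcal U}(T_i)$ from the same lemma then bounds this by $\OPT^{\mathrm{rob}}(\mathcal U)$.

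For the final claim, I split the blocks received by machine $i$ into at most one tiny block and the non-tiny blocks. The tiny blocks are assigned injectively, and there are at most $m$ of them since each $T_i$ yields at most one, so the injective assignment is possible. This contributes at most $\delta$. For the non-tiny blocks, I apply \Cref{lem:general-first-crossing-packing} with $a_i=y_i$, $b_t=\sum_{j\in B^t}\hat p_j$, and $c=3\delta$.

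The main obstacle is checking the hypothesis $\sum_t b_t\leq\sum_i y_i$. Here $\sum_i y_i=\sum_{j\notin\hat H}\hat p_j$, which is the total predicted load of all blocks, tiny ones included. The non-tiny blocks form a subset of these, so the inequality holds. I will also check that the algorithm's loop matches the first-crossing procedure of \Cref{lem:general-first-crossing-packing}, including the break conditions, and that all blocks are therefore assigned. This gives non-tiny load at most $y_i+3\delta$ on machine $i$, and the total is at most $y_i+4\delta=y_i+2\OPT(\hat q)/\lambda$.
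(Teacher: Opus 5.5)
Your proposal is correct and follows the paper's proof step for step. The block sizes come from the shortest-prefix argument, and the robust bound comes from monotonicity of $\Psi_{\mathcal U}$ together with \Cref{lem:general-psi-properties}. The final claim comes from \Cref{lem:general-first-crossing-packing} with capacities $y_i$ and item bound $3\OPT(\hat q)/(2\lambda)$, plus at most one tiny block. Your explicit check that $\sum_i y_i=\sum_{j\notin\hat H}\hat p_j$ bounds the total non-tiny load fills in a step the paper only asserts.
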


\begin{proof}
Every job in $\hat L=[n]\setminus \hat H$ has predicted size at most $\OPT(\hat q)/\lambda$. A non-tiny block is the shortest prefix whose predicted load exceeds $\OPT(\hat q)/(2\lambda)$. Therefore, its predicted load is greater than $\OPT(\hat q)/(2\lambda)$ and at most
$\OPT(\hat q)/(2\lambda)+\OPT(\hat q)/\lambda=3\OPT(\hat q)/(2\lambda)$.
The stopping condition gives $\sum_{j\in D}\hat p_j\leq\OPT(\hat q)/(2\lambda)$ for every tiny block $D$.

Every block $B$ is contained in $T_h\cap\hat L$ for some machine $h$ of the optimal min-max schedule $\mathcal T=(T_1,\ldots,T_m)$. By monotonicity of $\Psi_{\mathcal U}$, $\Psi_{\mathcal U}(B)\leq\Psi_{\mathcal U}(T_h)\leq\max_{i\in[m]}\Psi_{\mathcal U}(T_i)=\OPT^{\mathrm{rob}}(\mathcal U)$, where the equality follows from \Cref{lem:general-psi-properties}.

Each machine of $\mathcal T$ contributes at most one tiny block, and hence the tiny blocks can be assigned injectively to distinct machines. Their predicted load on each machine is at most $\OPT(\hat q)/(2\lambda)$. The total predicted load of all non-tiny blocks is at most $\sum_{i=1}^my_i$. Applying \Cref{lem:general-first-crossing-packing} with capacities $y_1,\ldots,y_m$ and item-size bound $3\OPT(\hat q)/(2\lambda)$ shows that machine $i$ receives non-tiny blocks of total predicted load at most $y_i+3\OPT(\hat q)/(2\lambda)$. Adding the possible tiny block gives a total predicted load at most
$y_i+2\OPT(\hat q)/\lambda$.
\end{proof}
\GeneralTradeoff*
\begin{proof}[Proof of \Cref{thm:general-uncertainty-tradeoff}]
We first show the consistency guarantee. Fix a machine $i\in[m]$. The jobs assigned to machine $i$ are $S_i=H_i\cup L_i$, where $H_i=\hat S_i\cap\hat H$ and $L_i$ is the set of jobs contained in the blocks assigned to machine $i$. We have $\sum_{j\in H_i}\hat p_j\leq\OPT(\hat q)-y_i$, where the inequality follows since $(\hat S_1,\ldots,\hat S_m)$ is an optimal schedule for the predicted instance and $y_i=\sum_{j\in\hat S_i\setminus\hat H}\hat p_j$. By \Cref{lem:general-block-bounds}, we also have $\sum_{j\in L_i}\hat p_j\leq y_i+2\OPT(\hat q)/\lambda$. Therefore,
$$\sum_{j\in S_i}\hat p_j=\sum_{j\in H_i}\hat p_j+\sum_{j\in L_i}\hat p_j\leq\OPT(\hat q)-y_i+y_i+\frac{2\OPT(\hat q)}{\lambda}=\left(1+\frac{2}{\lambda}\right)\OPT(\hat q),
$$
where the inequality follows from the preceding bounds on the predicted loads of $H_i$ and $L_i$. Thus, each machine has load at most $\left(1+\frac{2}{\lambda}\right)\OPT(\hat q)$ over the predicted instance, and hence $C_{\text{max}}(\mathcal S,\hat q)\leq\left(1+\frac{2}{\lambda}\right)\OPT(\hat q)$.

We now show the robustness guarantee. By subadditivity of $\Psi_{\mathcal U}$, we have $\Psi_{\mathcal U}(S_i)\leq\Psi_{\mathcal U}(H_i)+\Psi_{\mathcal U}(L_i)$. Let $h_i:=|H_i|$. Since every job in $H_i$ satisfies $\hat p_j>\OPT(\hat q)/\lambda$ and $\sum_{j\in H_i}\hat p_j\leq\OPT(\hat q)-y_i$, we have $h_i\leq\lambda\left(1-\frac{y_i}{\OPT(\hat q)}\right)$. Moreover, every job $j\in[n]$ belongs to some machine $T_h$ of the optimal min-max schedule. Hence $\Psi_{\mathcal U}(\{j\})\leq\Psi_{\mathcal U}(T_h)\leq\OPT^{\mathrm{rob}}(\mathcal U)$, where the first inequality follows from the monotonicity of $\Psi_{\mathcal U}$ and the second since $(T_1,\ldots,T_m)$ is an optimal min-max schedule. By subadditivity, $\Psi_{\mathcal U}(H_i)\leq h_i\OPT^{\mathrm{rob}}(\mathcal U)\leq\lambda\left(1-\frac{y_i}{\OPT(\hat q)}\right)\OPT^{\mathrm{rob}}(\mathcal U)$.

Let $r_i$ be the number of non-tiny blocks assigned to machine $i$. Every non-tiny block has predicted load strictly larger than $\OPT(\hat q)/(2\lambda)$, while the total predicted load of the non-tiny blocks assigned to machine $i$ is at most $y_i+3\OPT(\hat q)/(2\lambda)$. Therefore, $r_i\OPT(\hat q)/(2\lambda)<y_i+3\OPT(\hat q)/(2\lambda)$, and hence $r_i\leq2\lambda y_i/\OPT(\hat q)+3$. Machine $i$ receives at most one tiny block, so $L_i$ is the union of at most $r_i+1$ blocks. By \Cref{lem:general-block-bounds}, every block has support value at most $\OPT^{\mathrm{rob}}(\mathcal U)$. Thus, by subadditivity, $\Psi_{\mathcal U}(L_i)\leq(r_i+1)\OPT^{\mathrm{rob}}(\mathcal U)\leq\left(\frac{2\lambda y_i}{\OPT(\hat q)}+4\right)\OPT^{\mathrm{rob}}(\mathcal U)$.

Combining the bounds on $H_i$ and $L_i$ gives
$$
\begin{aligned}
\Psi_{\mathcal U}(S_i)
&\leq
\left[
\lambda\left(1-\frac{y_i}{\OPT(\hat q)}\right)
+
\frac{2\lambda y_i}{\OPT(\hat q)}
+
4
\right]\OPT^{\mathrm{rob}}(\mathcal U)\\
&=
\left(
\lambda+4+\frac{\lambda y_i}{\OPT(\hat q)}
\right)\OPT^{\mathrm{rob}}(\mathcal U)\\
&\leq
(2\lambda+4)\OPT^{\mathrm{rob}}(\mathcal U),
\end{aligned}
$$
where the last inequality follows from $0\leq y_i\leq\OPT(\hat q)$. Thus, every machine has support value at most $(2\lambda+4)\OPT^{\mathrm{rob}}(\mathcal U)$. Taking the maximum over machines and applying \Cref{lem:general-psi-properties} gives $\sup_{q\in\mathcal U}C_{\text{max}}(\mathcal S,q)\leq(2\lambda+4)\OPT^{\mathrm{rob}}(\mathcal U)$.
\end{proof}

\subsection{Polynomial-time implementation of algorithm for general uncertainty}
\label{app:proof-polytime-general-uncertainty}
\begin{restatable}{corollary}{CorPolytimeGeneral}
\label{thm:polytime-general-uncertainty}
Assume that there is a polynomial-time $\gamma$-approximation algorithm for makespan minimization on identical machines and a polynomial-time $\gamma_{\mathrm{rob}}$-approximation algorithm for robust makespan minimization over $\mathcal U$. Then, for general uncertainty sets and identical machines, for every $\lambda>0$, there is a polynomial-time algorithm with consistency $\gamma(1+2/\lambda)$ and robustness $(2\lambda+4)\gamma_{\mathrm{rob}}$.
\end{restatable}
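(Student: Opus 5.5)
The plan is to run \bpAlg with its two exact subroutines replaced: line~1 uses a polynomial-time $\gamma$-approximate schedule $(\hat S_1,\ldots,\hat S_m)$ for $\hat q$, and line~2 uses a polynomial-time $\gamma_{\mathrm{rob}}$-approximate robust schedule $(T_1,\ldots,T_m)$. Write $\hat C:=C_{\max}((\hat S_i)_i,\hat q)$ and $C^{\mathrm{rob}}:=\max_i\Psi_{\mathcal U}(T_i)$. By the approximation guarantees, $\OPT(\hat q)\le\hat C\le\gamma\OPT(\hat q)$ and $\OPT^{\mathrm{rob}}(\mathcal U)\le C^{\mathrm{rob}}\le\gamma_{\mathrm{rob}}\OPT^{\mathrm{rob}}(\mathcal U)$. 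Everywhere the algorithm uses $\OPT(\hat q)$ (the large-job threshold $\hat C/\lambda$ and the block cutoff $\hat C/(2\lambda)$), I substitute $\hat C$. The remaining steps are purely combinatorial and clearly polynomial: the prefix blocking, the injective assignment of tiny blocks, and the first-crossing rule. One point to flag is that computing $C^{\mathrm{rob}}$ is never needed, since the algorithm only uses the sets $T_i$.

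Next I rerun the proof of \Cref{thm:general-uncertainty-tradeoff} with $\hat C$ in place of $\OPT(\hat q)$ and $C^{\mathrm{rob}}$ in place of $\OPT^{\mathrm{rob}}(\mathcal U)$, checking which properties of the optima were actually used. For consistency, the only property used was that every machine of $\hat{\mathcal S}$ has predicted load at most $\OPT(\hat q)$, so $\sum_{j\in H_i}\hat p_j\le \OPT(\hat q)-y_i$. This now holds with $\hat C$, since $\hat C$ is by definition the makespan of $\hat{\mathcal S}$. \Cref{lem:general-block-bounds} goes through verbatim with $\hat C$: jobs outside $\hat H$ have size at most $\hat C/\lambda$, and \Cref{lem:general-first-crossing-packing} applies because the total non-tiny load is at most $\sum_i y_i$. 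This gives each machine predicted load at most $(1+2/\lambda)\hat C\le\gamma(1+2/\lambda)\OPT(\hat q)$.

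For robustness, the proof used that every block and every single job lies inside some $T_h$, hence has $\Psi_{\mathcal U}$-value at most $\max_h\Psi_{\mathcal U}(T_h)$. This is now $C^{\mathrm{rob}}$ rather than $\OPT^{\mathrm{rob}}(\mathcal U)$, by monotonicity in \Cref{lem:general-psi-properties}. The counting arguments are unchanged:
\begin{itemize}
\item $h_i\le\lambda(1-y_i/\hat C)$, since large jobs exceed $\hat C/\lambda$ and $H_i$ has predicted load at most $\hat C-y_i$;
\item $r_i\le 2\lambda y_i/\hat C+3$, from the block lower bound $\hat C/(2\lambda)$;
\item there is at most one tiny block per machine.
\end{itemize}
Subadditivity then gives $\Psi_{\mathcal U}(S_i)\le(\lambda+4+\lambda y_i/\hat C)C^{\mathrm{rob}}$. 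Using $y_i\le\hat C$, this is at most $(2\lambda+4)C^{\mathrm{rob}}\le(2\lambda+4)\gamma_{\mathrm{rob}}\OPT^{\mathrm{rob}}(\mathcal U)$.

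The main thing to verify carefully is that normalizing everything by the approximate value $\hat C$ leaves no mismatch in the consistency analysis. Specifically, the inequality $y_i\le\hat C$ and the bound on the load of $H_i$ must both refer to the same schedule $\hat{\mathcal S}$ and the same threshold; they do, because the threshold is defined from $\hat C$ itself. This consistent normalization is why the loss is a single factor $\gamma$ rather than $\gamma^2$ as in \Cref{thm:polynomial-implementation-threshold-splitting}.
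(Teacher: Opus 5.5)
Your proposal is correct and matches the paper's proof. Like the paper, you swap in the $\gamma$-approximate predicted schedule and the $\gamma_{\mathrm{rob}}$-approximate robust schedule. You replace $\OPT(\hat q)$ by the achieved predicted makespan in both the large-job threshold and the block cutoff, and then rerun the analysis of \bpAlg, with every block and every single job having support value at most the robust makespan of the approximate robust schedule. This gives consistency $\gamma(1+2/\lambda)$ and robustness $(2\lambda+4)\gamma_{\mathrm{rob}}$.
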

\begin{proof}
Let $A(\hat q)=(\hat S_1,\ldots,\hat S_m)$ be the output of the $\gamma$-approximation algorithm for makespan minimization on the predicted instance. We have $\OPT(\hat q)\leq C_{\text{max}}(A(\hat q),\hat q)\leq\gamma\OPT(\hat q)$. Let $A^{\mathrm{rob}}(\mathcal U)=(T_1,\ldots,T_m)$ be the output of the $\gamma_{\mathrm{rob}}$-approximation algorithm for robust makespan minimization over $\mathcal U$. We have $\OPT^{\mathrm{rob}}(\mathcal U)\leq C_{\max}^{\mathrm{rob}}(A^{\mathrm{rob}}(\mathcal U),\mathcal U)\leq\gamma_{\mathrm{rob}}\OPT^{\mathrm{rob}}(\mathcal U)$.

The polynomial-time algorithm applies \Cref{alg:general-uncertainty-identical} using schedules $A(\hat q)$ and $A^{\mathrm{rob}}(\mathcal U)$. It replaces the threshold $\OPT(\hat q)/\lambda$ by $C_{\text{max}}(A(\hat q),\hat q)/\lambda$, defines $\hat H:=\{j\in[n]:\hat p_j>C_{\text{max}}(A(\hat q),\hat q)/\lambda\}$, and sets $y_i:=\sum_{j\in\hat S_i\setminus\hat H}\hat p_j$ for every machine $i\in[m]$. The non-tiny blocks are constructed using the cutoff $C_{\text{max}}(A(\hat q),\hat q)/(2\lambda)$.

\textbf{Consistency.} Fix a machine $i\in[m]$. The predicted large jobs assigned to machine $i$ have total predicted load at most $C_{\text{max}}(A(\hat q),\hat q)-y_i$. By the proof of \Cref{lem:general-block-bounds}, with $C_{\text{max}}(A(\hat q),\hat q)$ in place of $\OPT(\hat q)$, the blocks assigned to machine $i$ have total predicted load at most $y_i+2C_{\text{max}}(A(\hat q),\hat q)/\lambda$. Therefore, $\sum_{j\in S_i}\hat p_j\leq C_{\text{max}}(A(\hat q),\hat q)-y_i+y_i+2C_{\text{max}}(A(\hat q),\hat q)/\lambda=(1+2/\lambda)C_{\text{max}}(A(\hat q),\hat q)\leq\gamma(1+2/\lambda)\OPT(\hat q)$. Taking the maximum over machines gives $C_{\text{max}}(\mathcal S,\hat q)\leq\gamma(1+2/\lambda)\OPT(\hat q)$.

\textbf{Robustness.} Fix a machine $i\in[m]$, and let $H_i:=\hat S_i\cap\hat H$. Let $L_i$ be the set of jobs contained in the blocks assigned to machine $i$, so that $S_i=H_i\cup L_i$. By subadditivity of $\Psi_{\mathcal U}$, we have $\Psi_{\mathcal U}(S_i)\leq\Psi_{\mathcal U}(H_i)+\Psi_{\mathcal U}(L_i)$.

Since every job in $H_i$ has predicted size greater than $\frac{C_{\text{max}}(A(\hat q),\hat q)}{\lambda}$ and $\sum_{j\in H_i}\hat p_j\leq C_{\text{max}}(A(\hat q),\hat q) -y_i$, we have $|H_i|\leq\lambda\left(1-\frac{y_i}{C_{\text{max}}(A(\hat q),\hat q)}\right)$. Every job is contained in some machine $T_h$ of $A^{\mathrm{rob}}(\mathcal U)$, and therefore $\Psi_{\mathcal U}(\{j\})\leq\Psi_{\mathcal U}(T_h)\leq C_{\max}^{\mathrm{rob}}(A^{\mathrm{rob}}(\mathcal U),\mathcal U)\leq\gamma_{\mathrm{rob}}\OPT^{\mathrm{rob}}(\mathcal U)$. Thus, $\Psi_{\mathcal U}(H_i)\leq\lambda\left(1-\frac{y_i}{C_{\text{max}}(A(\hat q),\hat q)}\right)\gamma_{\mathrm{rob}}\OPT^{\mathrm{rob}}(\mathcal U)$.

Let $r_i$ be the number of non-tiny blocks assigned to machine $i$. Each non-tiny block has predicted load greater than $C_{\text{max}}(A(\hat q),\hat q)/(2\lambda)$, while their total predicted load is at most $y_i+3C_{\text{max}}(A(\hat q),\hat q)/(2\lambda)$. Hence $r_i\leq2\lambda y_i/C_{\text{max}}(A(\hat q),\hat q)+3$. Machine $i$ receives at most one tiny block, so $L_i$ is the union of at most $r_i+1$ blocks. Every block is contained in some machine of $A^{\mathrm{rob}}(\mathcal U)$ and therefore has support value at most $\gamma_{\mathrm{rob}}\OPT^{\mathrm{rob}}(\mathcal U)$. By subadditivity, $\Psi_{\mathcal U}(L_i)\leq\left(2\lambda y_i/C_{\text{max}}(A(\hat q),\hat q)+4\right)\gamma_{\mathrm{rob}}\OPT^{\mathrm{rob}}(\mathcal U)$.

Combining the two bounds gives
$$
\begin{aligned}
\Psi_{\mathcal U}(S_i)&\leq\left[\lambda\left(1-\frac{y_i}{C_{\text{max}}(A(\hat q),\hat q)}\right)+\frac{2\lambda y_i}{C_{\text{max}}(A(\hat q),\hat q)}+4\right]\gamma_{\mathrm{rob}}\OPT^{\mathrm{rob}}(\mathcal U)\\ &=\left(\lambda+4+\frac{\lambda y_i}{C_{\text{max}}(A(\hat q),\hat q)}\right)\gamma_{\mathrm{rob}}\OPT^{\mathrm{rob}}(\mathcal U)\\ &\leq(2\lambda+4)\gamma_{\mathrm{rob}}\OPT^{\mathrm{rob}}(\mathcal U),
\end{aligned}
$$ where the last inequality follows from $0\leq y_i\leq C_{\text{max}}(A(\hat q),\hat q)$. Taking the maximum over machines and applying \Cref{lem:general-psi-properties} gives $C_{\max}^{\mathrm{rob}}(\mathcal S,\mathcal U)\leq(2\lambda+4)\gamma_{\mathrm{rob}}\OPT^{\mathrm{rob}}(\mathcal U)$.
\end{proof}

\subsection{Proof of \Cref{thm:general-restricted-no-constant-tradeoff}}
\label{app:proof-general-restricted-no-constant-tradeoff}
\GeneralRestrictedNoTradeoff*
\begin{proof}
Fix constants $\alpha \ge 1$ and $\beta \ge 1$, and choose an integer $k>2\beta$. Choose an integer $\ell\ge 2\alpha k$, and let $L>0$ be such that $kL/2>\beta(\ell+L)$. Such a choice exists since since $kL/(\ell+L)$ converges to $k/2>\beta$ as $L$ tends to infinity.

Consider a restricted-assignment instance with $\ell$ machines. The first $k$ machines are called color machines. For every color $c\in[k]$ and every index $i\in[\ell]$, create one job $j_{c,i}$ with lower job size $p^-_{j_{c,i}}=1$, predicted job size $\hat p_{j_{c,i}}=1$, and eligibility set $M_{j_{c,i}}:=\{c,i\}$. If $i=c$, then $M_{j_{c,i}}=\{c\}$.

For every function $\sigma:[k]\to[\ell]$, define an instance $q^\sigma$ by setting the size of job $j_{c,i}$ to $1+L$ if $i=\sigma(c)$ and to $1$ otherwise. Let $\mathcal U:=\{q^\sigma:\sigma:[k]\to[\ell]\}$. The uncertainty set $\mathcal U$ is nonempty and bounded. For every color $c\in[k]$, let $J_c:=\{j_{c,i}:i\in[\ell]\}$. For every set of jobs $A$, we have
$$\Psi_{\mathcal U}(A) = |A|+L\cdot |\{c\in[k]:A\cap J_c\neq\emptyset\}|.
$$
Indeed, every job contributes its lower size $1$. In addition, for every color represented in $A$, the adversary may choose $\sigma(c)$ so that one job of that color in $A$ receives the additional size $L$. These choices are independent across colors.

We first compute $\OPT(\hat q)$. The total predicted job size is $k\ell$, and there are $\ell$ machines, so $\OPT(\hat q)\ge k$. Assigning every job $j_{c,i}$ to machine $i$ is feasible and assigns exactly $k$ jobs to every machine. Thus, $\OPT(\hat q)=k$.

We next bound the robust optimum. Consider the schedule that assigns every job $j_{c,i}$ to its color machine $c$. This assignment is feasible. Every color machine receives all $\ell$ jobs of one color. Under every instance in $\mathcal U$, exactly one of these jobs has size $1+L$, while the remaining $\ell-1$ jobs have size $1$. Thus, every color machine has load at most $\ell+L$, and hence $\OPT^{\mathrm{rob}}(\mathcal U)\le \ell+L$.

We first prove the deterministic lower bound. Let $\mathcal S=(S_1,\ldots,S_\ell)$ be a schedule satisfying $C_{\max}(\mathcal S,\hat q)\le \alpha\OPT(\hat q)=\alpha k$. Since every predicted job size is equal to $1$, every machine contains at most $\alpha k$ jobs.

Fix a color $c\in[k]$. At most $\alpha k$ jobs of $J_c$ can be assigned to machine $c$, since machine $c$ contains at most $\alpha k$ jobs in total. Hence, at least $\ell-\alpha k$ jobs of color $c$ are assigned to machines different from $c$. If job $j_{c,i}$ is not assigned to machine $c$, then its eligibility set $M_{j_{c,i}}=\{c,i\}$ forces it to be assigned to machine $i$. Therefore, color $c$ is represented on at least $\ell-\alpha k$ machines outside its color machine.

For every machine $i\in[\ell]$, let $d_i:=|\{c\in[k]:S_i\cap J_c\neq\emptyset\}|$ be the number of colors represented on machine $i$. Summing the preceding bound over all colors gives $\sum_{i=1}^{\ell} d_i \ge k(\ell-\alpha k)$.
Thus, some machine $i^\star$ satisfies
$$
d_{i^\star} \ge \frac{k(\ell-\alpha k)}{\ell} = k\left(1-\frac{\alpha k}{\ell}\right) \ge \frac{k}{2},
$$
where the last inequality follows from $\ell\ge 2\alpha k$. Using the expression for $\Psi_{\mathcal U}$, we obtain
$$
C^{\mathrm{rob}}_{\max}(\mathcal S,\mathcal U) = \max_{i\in[\ell]}\Psi_{\mathcal U}(S_i) \ge \Psi_{\mathcal U}(S_{i^\star}) \ge Ld_{i^\star} \ge \frac{kL}{2}.
$$
Since $kL/2>\beta(\ell+L)\ge \beta\OPT^{\mathrm{rob}}(\mathcal U)$, no deterministic schedule is simultaneously $\alpha$-consistent and $\beta$-robust on this instance.

We now prove that the same construction rules out randomized algorithms. Let $\textsc{ALG}$ be a randomized algorithm and $\mathcal S=\textsc{ALG}(\mathcal U,\hat q)$ be the random schedule returned by the algorithm, and assume that $\mathbb E\!\left[C_{\max}(\mathcal S,\hat q)\right]\le \alpha\OPT(\hat q)=\alpha k$. For every color $c\in[k]$, let $Y_c$ be the number of jobs of color $c$ assigned to the color machine $c$. Since the predicted load of machine $c$ is at least $Y_c$, we have $C_{\max}(\mathcal S,\hat q)\ge Y_c$ for every realized schedule. Hence $\mathbb E[Y_c]\le \mathbb E\!\left[C_{\max}(\mathcal S,\hat q)\right]\le \alpha k$.

For $c\in[k]$ and $i\in[\ell]$, define $a_{c,i}$ as follows. If $i\neq c$, let $a_{c,i}$ be the probability that job $j_{c,i}$ is assigned to machine $i$. If $i=c$, set $a_{c,c}:=0$. Since every job $j_{c,i}$ with $i\neq c$ is assigned either to its color machine $c$ or to machine $i$, we have $\sum_{i=1}^{\ell} a_{c,i} = \ell-\mathbb E[Y_c] \ge \ell-\alpha k$ for every color $c$. Summing over the colors gives $\sum_{i=1}^{\ell}\sum_{c=1}^k a_{c,i}
\ge k(\ell-\alpha k)$.

Therefore, there exists a machine $i^\star\in[\ell]$ such that $\sum_{c=1}^k a_{c,i^\star} \ge \frac{k(\ell-\alpha k)}{\ell} \ge \frac{k}{2}.$ Now fix the scenario $q^{\sigma^\star}\in\mathcal U$ defined by $\sigma^\star(c)=i^\star$ for every color $c$. Under this scenario, job $j_{c,i^\star}$ has size $1+L$ for every color $c$. Therefore, the expected load on machine $i^\star$ is at least $L\sum_{c=1}^k a_{c,i^\star} \ge \frac{kL}{2}$. Consequently, $\mathbb E\!\left[C_{\max}(\mathcal S,q^{\sigma^\star})\right]\ge \frac{kL}{2}$. Since $\OPT^{\mathrm{rob}}(\mathcal U)\le \ell+L$ and $kL/2>\beta(\ell+L)$, we obtain
$$
\mathbb E\!\left[C_{\max}(\mathcal S,q^{\sigma^\star})\right]> \beta\OPT^{\mathrm{rob}}(\mathcal U).
$$
Thus, no randomized algorithm satisfying expected $\alpha$-consistency can be $\beta$-robust in expectation. Since $\alpha$ and $\beta$ are arbitrary constants, restricted assignment admits no instance-independent constant consistency-robustness tradeoff for general uncertainty sets, even for randomized algorithms.

\end{proof}

\section{Proofs for \Cref{sec:oblivious-cardinality}}
\label{app:oblivious-cardinality}

This appendix gives the construction and analysis of the budget-oblivious schedule used in \Cref{thm:oblivious-cardinality-tradeoff}. For every budget $\Gamma\in[n]$, let $\mathcal U_\Gamma$ be the cardinality-budget uncertainty set defined by the lower job sizes $p^-$ and the upper job sizes $p^+$. The reduction differs from the budget-to-interval reduction used in \Cref{sec:budgeted-uncertainty}, since that reduction depends explicitly on $\Gamma$. Instead, we normalize the uncertainty set associated with each budget by its robust optimum and combine the resulting sets into one general uncertainty set.

If $\OPT^{\mathrm{rob}}(\mathcal U_\Gamma)=0$ for some $\Gamma\in[n]$, then the instance in which only job $j$ takes its upper processing time belongs to $\mathcal U_\Gamma$ for every job $j\in[n]$. Hence $p_j^+=0$ for every job $j\in[n]$, and therefore all lower and upper processing times are zero. In this case, every schedule satisfies the conclusion of \Cref{thm:oblivious-cardinality-tradeoff}. We therefore assume throughout this appendix that $\OPT^{\mathrm{rob}}(\mathcal U_\Gamma)>0$ for every $\Gamma\in[n]$.

For every set of jobs $J\subseteq[n]$, let $(p^+-p^-)_{J,(1)}\geq(p^+-p^-)_{J,(2)}\geq\cdots$ be the values $p_j^+-p_j^-$ for the jobs in $J$, sorted in nonincreasing order and padded by zeros. For every budget $\Gamma\in[n]$, define $\Phi_\Gamma(J):=\sum_{j\in J}p_j^-+\sum_{\ell=1}^{\Gamma}(p^+-p^-)_{J,(\ell)}$. Thus, $\Phi_\Gamma(J)$ is the largest load of the jobs in $J$ over the uncertainty set $\mathcal U_\Gamma$.

\begin{lemma}
\label{lem:normalized-budget-profile}
Define $\mathcal U:=\bigcup_{\Gamma=1}^n\{q/\OPT^{\mathrm{rob}}(\mathcal U_\Gamma):q\in\mathcal U_\Gamma\}$. Then $\mathcal U$ is nonempty and bounded. Moreover, for every set of jobs $J\subseteq[n]$, $\Psi_{\mathcal U}(J)=\max_{\Gamma\in[n]}\Phi_\Gamma(J)/\OPT^{\mathrm{rob}}(\mathcal U_\Gamma)$. Consequently, for every schedule $\mathcal S$, $C_{\max}^{\mathrm{rob}}(\mathcal S,\mathcal U)=\max_{\Gamma\in[n]}C_{\max}^{\mathrm{rob}}(\mathcal S,\mathcal U_\Gamma)/\OPT^{\mathrm{rob}}(\mathcal U_\Gamma)$.
\end{lemma}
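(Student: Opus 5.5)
The plan is to verify each claim directly from the definitions, working one budget at a time and then exchanging the order of the two suprema.

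\emph{Nonempty and bounded.} By the standing assumption, $\OPT^{\mathrm{rob}}(\mathcal U_\Gamma)>0$ for every $\Gamma\in[n]$, so each scaled set $\{q/\OPT^{\mathrm{rob}}(\mathcal U_\Gamma):q\in\mathcal U_\Gamma\}$ is well defined. Each set $\mathcal U_\Gamma$ is nonempty, since it contains the lower scenario $p^-$. Each set is bounded, since it is contained in the box $[p^-,p^+]$. A finite union of nonempty bounded sets, each scaled by a positive constant, is again nonempty and bounded.

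\emph{The support function.} Fix $J\subseteq[n]$. The supremum over a finite union equals the maximum of the suprema over the pieces, so
\[
\Psi_{\mathcal U}(J)=\max_{\Gamma\in[n]}\sup_{q\in\mathcal U_\Gamma}\frac{\sum_{j\in J}p(q)_j}{\OPT^{\mathrm{rob}}(\mathcal U_\Gamma)}=\max_{\Gamma\in[n]}\frac{\Psi_{\mathcal U_\Gamma}(J)}{\OPT^{\mathrm{rob}}(\mathcal U_\Gamma)}.
\]
It remains to show $\Psi_{\mathcal U_\Gamma}(J)=\Phi_\Gamma(J)$.

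For the upper bound, take any $q\in\mathcal U_\Gamma$ with deviating set $T$, where $|T|\le\Gamma$. The load of $J$ under $q$ is at most $\sum_{j\in J}p_j^-+\sum_{j\in J\cap T}(p_j^+-p_j^-)$. The second sum has at most $\Gamma$ nonnegative terms, so it is at most the sum of the $\Gamma$ largest deviations in $J$.

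For the lower bound, let $T$ be the jobs of $J$ realizing the $\min\{\Gamma,|J|\}$ largest deviations, and set those jobs to $p^+_j$. This scenario attains $\Phi_\Gamma(J)$. The zero padding handles the case $|J|<\Gamma$.

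The only point needing care is that deviations are nonnegative, which holds because $p^-\le p^+$. This is the main, though mild, obstacle.

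\emph{The schedule identity.} For identical machines, \Cref{lem:general-psi-properties} applied to $\mathcal U$ gives $C_{\max}^{\mathrm{rob}}(\mathcal S,\mathcal U)=\max_i\Psi_{\mathcal U}(S_i)$. Substituting the formula above yields
\[
C_{\max}^{\mathrm{rob}}(\mathcal S,\mathcal U)=\max_i\max_\Gamma\frac{\Phi_\Gamma(S_i)}{\OPT^{\mathrm{rob}}(\mathcal U_\Gamma)}.
\]
Swapping the two maxima gives $\max_\Gamma \max_i\Phi_\Gamma(S_i)/\OPT^{\mathrm{rob}}(\mathcal U_\Gamma)$. Applying \Cref{lem:general-psi-properties} again, now to $\mathcal U_\Gamma$, together with $\Phi_\Gamma=\Psi_{\mathcal U_\Gamma}$, identifies $\max_i\Phi_\Gamma(S_i)$ with $C_{\max}^{\mathrm{rob}}(\mathcal S,\mathcal U_\Gamma)$. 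This gives the claimed identity.
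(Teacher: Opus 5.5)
Your proposal is correct and follows the paper's proof essentially step for step. The steps are: boundedness from a finite union with positive normalizers; the identity $\Psi_{\mathcal U_\Gamma}(J)=\Phi_\Gamma(J)$ combined with taking the supremum over a finite union; and the schedule identity via \Cref{lem:general-psi-properties} with an interchange of the two finite maxima. Your explicit upper- and lower-bound check of $\Psi_{\mathcal U_\Gamma}(J)=\Phi_\Gamma(J)$ just spells out what the paper states in one line.
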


\begin{proof}
Each uncertainty set $\mathcal U_\Gamma$ is nonempty and bounded, and each normalizing denominator is positive. Since the union is finite, $\mathcal U$ is nonempty and bounded. Fix a set of jobs $J\subseteq[n]$. For a fixed budget $\Gamma$, the largest load of $J$ is obtained by selecting the at most $\Gamma$ jobs in $J$ with the largest values of $p_j^+-p_j^-$. Hence $\Psi_{\mathcal U_\Gamma}(J)=\Phi_\Gamma(J)$. Taking the support function of the finite union and accounting for the scaling gives $\Psi_{\mathcal U}(J)=\max_{\Gamma\in[n]}\Phi_\Gamma(J)/\OPT^{\mathrm{rob}}(\mathcal U_\Gamma)$.

Let $\mathcal S=(S_1,\ldots,S_m)$ be a schedule. By \Cref{lem:general-psi-properties}, $C_{\max}^{\mathrm{rob}}(\mathcal S,\mathcal U)=\max_{i\in[m]}\Psi_{\mathcal U}(S_i)$. Substituting the preceding identity and interchanging the finite maxima gives
$$
C_{\max}^{\mathrm{rob}}(\mathcal S,\mathcal U)
=
\max_{\Gamma\in[n]}
\frac{\max_{i\in[m]}\Phi_\Gamma(S_i)}
{\OPT^{\mathrm{rob}}(\mathcal U_\Gamma)}.
$$
Applying \Cref{lem:general-psi-properties} to $\mathcal U_\Gamma$ gives $\max_{i\in[m]}\Phi_\Gamma(S_i)=C_{\max}^{\mathrm{rob}}(\mathcal S,\mathcal U_\Gamma)$ for every $\Gamma\in[n]$, which proves the schedule identity.
\end{proof}

Order the jobs so that $p_{j_1}^+-p_{j_1}^-\geq p_{j_2}^+-p_{j_2}^-\geq\cdots\geq p_{j_n}^+-p_{j_n}^-$, and interpret $p_{j_r}^+-p_{j_r}^-$ as zero for every $r>n$.

\begin{lemma}
\label{lem:rank-spread-cardinality-budget}
For every $\Gamma\in[n]$, $\sum_{\ell=1}^{\Gamma}(p_{j_{(\ell-1)m+1}}^+-p_{j_{(\ell-1)m+1}}^-)\leq3\OPT^{\mathrm{rob}}(\mathcal U_\Gamma)$.
\end{lemma}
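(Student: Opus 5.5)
Write $d_r:=p_{j_r}^+-p_{j_r}^-$, so $d_1\geq d_2\geq\cdots$, and fix an optimal min--max schedule $\mathcal T=(T_1,\ldots,T_m)$ for $\mathcal U_\Gamma$, with value $O:=\OPT^{\mathrm{rob}}(\mathcal U_\Gamma)$. Every machine satisfies $\Phi_\Gamma(T_i)\leq O$. In particular, the deviations of the $\Gamma$ largest jobs on any single machine sum to at most $O$.

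The plan is a charging argument that splits the sum into the first term and the remaining terms, $d_1+\sum_{\ell=2}^{\Gamma}d_{(\ell-1)m+1}$.

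The first term is handled directly. Job $j_1$ lies on some machine of $\mathcal T$, so $d_1\leq O$.

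For the remaining terms, I use the sandwich $d_{(\ell-1)m+1}\leq d_r$ for every $r\in\{(\ell-2)m+2,\ldots,(\ell-1)m+1\}$. This is a window of $m$ consecutive ranks, so it gives $d_{(\ell-1)m+1}\leq\frac1m\sum_{r=(\ell-2)m+2}^{(\ell-1)m+1}d_r$. For $\ell=2,\ldots,\Gamma$ these windows are disjoint and all lie inside ranks $2,\ldots,(\Gamma-1)m+1$. Summing over $\ell$, I get
\[
\sum_{\ell=2}^{\Gamma}d_{(\ell-1)m+1}\leq\frac1m\sum_{r=1}^{\Gamma m}d_r .
\]

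It remains to show that $\sum_{r=1}^{\Gamma m}d_r\leq 2mO$, or even $\leq mO$. Let $A$ be the set of the top $\Gamma m$ jobs by deviation. For each machine $i$, the jobs of $A\cap T_i$ can be split into at most $\lceil |A\cap T_i|/\Gamma\rceil$ groups of size at most $\Gamma$. Each group has total deviation at most $O$, because it is a set of at most $\Gamma$ jobs on one machine. Therefore $\sum_{r\le \Gamma m}d_r\leq O\sum_i\lceil |A\cap T_i|/\Gamma\rceil\leq O\,(m+m)=2mO$.

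Combining the pieces gives the total bound $O+2O=3O$, which is exactly the claimed constant.

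I expect the main obstacle to be the ceiling count. Bounding $\sum_i\lceil a_i/\Gamma\rceil\leq\sum_i a_i/\Gamma+m=2m$ is crude, and it is precisely why the constant is $3$ rather than $2$. I need to check carefully that the window bookkeeping does not accidentally require more than the top $\Gamma m$ ranks. If it did, I would fall back to grouping all ranks up to $(\Gamma-1)m+1$, which is still at most $\Gamma m$ ranks. If $n<\Gamma m$, the padded zeros handle the missing terms.
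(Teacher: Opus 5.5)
Your proof is correct and is essentially the paper's argument: bound $d_1\le O$ directly, dominate each later term $d_{(\ell-1)m+1}$ by the average over a window of $m$ higher ranks, and bound the total deviation of the top-ranked jobs by $2mO$ through a per-machine count on an optimal min--max schedule. The differences are cosmetic. Your chunking into groups of at most $\Gamma$ jobs gives the same per-machine bound $(a_i/\Gamma+1)O$ as the paper's ``top $\Gamma$ jobs plus at most $O/\Gamma$ for each further job'' argument. Summing up to rank $\Gamma m$ rather than the paper's $(\Gamma-1)m$ also changes nothing.

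Drop the aside that $\sum_{r\le\Gamma m}d_r\le mO$ might hold, since it is false. Take $m=\Gamma=2$, $p^-=0$ and $p^+=(2,1,1,1)$: then $O=2$ but $\sum_{r\le 4}d_r=5$. Your proof never relies on it.
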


\begin{proof}
Fix a budget $\Gamma\in[n]$. Let $\mathcal T=(T_1,\ldots,T_m)$ be a schedule minimizing $\max_{i\in[m]}\sum_{\ell=1}^{\Gamma}(p^+-p^-)_{T_i,(\ell)}$, and let $\mathcal T^\Gamma=(T^\Gamma_1,\ldots,T^\Gamma_m)$ attain $\OPT^{\mathrm{rob}}(\mathcal U_\Gamma)$. For every machine $i\in[m]$, we have $\sum_{\ell=1}^{\Gamma}(p^+-p^-)_{T^\Gamma_i,(\ell)}\leq\Phi_\Gamma(T^\Gamma_i)\leq\OPT^{\mathrm{rob}}(\mathcal U_\Gamma)$. Hence, the deviation-only objective of $\mathcal T$ is at most $\OPT^{\mathrm{rob}}(\mathcal U_\Gamma)$, and therefore $\sum_{\ell=1}^{\Gamma}(p^+-p^-)_{T_i,(\ell)}\leq\OPT^{\mathrm{rob}}(\mathcal U_\Gamma)$ for every machine $i\in[m]$. In particular, $p_{j_1}^+-p_{j_1}^-\leq\OPT^{\mathrm{rob}}(\mathcal U_\Gamma)$.

If $\Gamma=1$, the result follows. Suppose that $\Gamma\geq2$. For every $\ell=2,\ldots,\Gamma$, the $m$ values with ranks $(\ell-2)m+1,\ldots,(\ell-1)m$ are at least $p_{j_{(\ell-1)m+1}}^+-p_{j_{(\ell-1)m+1}}^-$. Thus, $p_{j_{(\ell-1)m+1}}^+-p_{j_{(\ell-1)m+1}}^-\leq\frac{1}{m}\sum_{s=(\ell-2)m+1}^{(\ell-1)m}(p_{j_s}^+-p_{j_s}^-)$. Summing over $\ell=2,\ldots,\Gamma$ gives $\sum_{\ell=2}^{\Gamma}(p_{j_{(\ell-1)m+1}}^+-p_{j_{(\ell-1)m+1}}^-)\leq\frac{1}{m}\sum_{s=1}^{(\Gamma-1)m}(p_{j_s}^+-p_{j_s}^-)$.

Let $h:=\min\{(\Gamma-1)m,n\}$, let $H$ be the set of the $h$ jobs with largest values of $p_j^+-p_j^-$, breaking ties arbitrarily, and let $x_i:=|H\cap T_i|$ for every machine $i\in[m]$. Then $\sum_{i=1}^m x_i=h$. Moreover, by the zero-padding convention, $\sum_{s=1}^{(\Gamma-1)m}(p_{j_s}^+-p_{j_s}^-)=\sum_{j\in H}(p_j^+-p_j^-)$.

We claim that $\sum_{j\in H\cap T_i}(p_j^+-p_j^-)\leq\OPT^{\mathrm{rob}}(\mathcal U_\Gamma)+(x_i/\Gamma)\OPT^{\mathrm{rob}}(\mathcal U_\Gamma)$ for every machine $i\in[m]$. If $x_i\leq\Gamma$, then the left-hand side is at most the sum of the $\Gamma$ largest values of $p_j^+-p_j^-$ on $T_i$, and is therefore at most $\OPT^{\mathrm{rob}}(\mathcal U_\Gamma)$. If $x_i>\Gamma$, then the $\Gamma$ largest values of $p_j^+-p_j^-$ in $H\cap T_i$ have total value at most $\OPT^{\mathrm{rob}}(\mathcal U_\Gamma)$. Every remaining value in $H\cap T_i$ is at most the average of these $\Gamma$ values and is therefore at most $\OPT^{\mathrm{rob}}(\mathcal U_\Gamma)/\Gamma$. This proves the claim.

Summing over the machines gives $$\sum_{s=1}^{(\Gamma-1)m}(p_{j_s}^+-p_{j_s}^-)=\sum_{i=1}^m\sum_{j\in H\cap T_i}(p_j^+-p_j^-)\leq m\OPT^{\mathrm{rob}}(\mathcal U_\Gamma)+(h/\Gamma)\OPT^{\mathrm{rob}}(\mathcal U_\Gamma)$$. Since $h\leq(\Gamma-1)m$, we have $(h/\Gamma)\OPT^{\mathrm{rob}}(\mathcal U_\Gamma)\leq((\Gamma-1)m/\Gamma)\OPT^{\mathrm{rob}}(\mathcal U_\Gamma)\leq m\OPT^{\mathrm{rob}}(\mathcal U_\Gamma)$. Therefore, $\sum_{s=1}^{(\Gamma-1)m}(p_{j_s}^+-p_{j_s}^-)\leq2m\OPT^{\mathrm{rob}}(\mathcal U_\Gamma)$.

It follows that $\sum_{\ell=2}^{\Gamma}(p_{j_{(\ell-1)m+1}}^+-p_{j_{(\ell-1)m+1}}^-)\leq2\OPT^{\mathrm{rob}}(\mathcal U_\Gamma)$. Combining this inequality with $p_{j_1}^+-p_{j_1}^-\leq\OPT^{\mathrm{rob}}(\mathcal U_\Gamma)$ proves the lemma.
\end{proof}

\begin{lemma}
\label{lem:universal-cardinality-budget-profile}
There exists a schedule $\mathcal T^\star=(T^\star_1,\ldots,T^\star_m)$ such that $\Phi_\Gamma(T^\star_i)\leq5\OPT^{\mathrm{rob}}(\mathcal U_\Gamma)$ for every machine $i\in[m]$ and every budget $\Gamma\in[n]$. Equivalently, $\OPT^{\mathrm{rob}}(\mathcal U)\leq5$.
\end{lemma}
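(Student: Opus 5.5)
Construct $\mathcal T^\star$ to control the nominal part and the deviation part separately. The nominal load is handled by a balanced assignment of lower sizes, and the deviations are spread round-robin in rank order. The budget-oblivious target is exactly what \Cref{lem:rank-spread-cardinality-budget} is designed for.

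\emph{Construction.} Order the jobs $j_1,\ldots,j_n$ by nonincreasing deviation $p_j^+-p_j^-$. For the deviation structure, assign the jobs in this order to machines $1,2,\ldots,m,1,2,\ldots$ (round-robin). Then machine $i$ receives at most one job from each rank block $\{(\ell-1)m+1,\ldots,\ell m\}$. Consequently, for every $\Gamma$ the $\Gamma$ largest deviations on machine $i$ are bounded term by term by the first-of-block values:
\[
\sum_{\ell=1}^{\Gamma}(p^+-p^-)_{T^\star_i,(\ell)}\leq\sum_{\ell=1}^{\Gamma}\bigl(p^+_{j_{(\ell-1)m+1}}-p^-_{j_{(\ell-1)m+1}}\bigr)\leq 3\OPT^{\mathrm{rob}}(\mathcal U_\Gamma).
\]
The last inequality is \Cref{lem:rank-spread-cardinality-budget}.

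\emph{Nominal part.} Pure round-robin does not control $\sum_{j\in T^\star_i}p_j^-$, since lower sizes may be arbitrary. So I would instead take an assignment that simultaneously keeps the rank-block property and balances $p^-$. One option is to process the rank blocks in order and, within each block of $m$ jobs, assign them bijectively to machines: the largest $p^-$ in the block goes to the machine with currently smallest $p^-$-load (sorted matching). This is a standard argument for balancing sums of vectors, one item per machine per round. It gives every machine $p^-$-load at most $\frac1m\sum_j p^-_j+\max_j p^-_j$. Both terms are at most $\OPT^{\mathrm{rob}}(\mathcal U_\Gamma)$ for every $\Gamma$, because $\OPT^{\mathrm{rob}}(\mathcal U_\Gamma)\geq\OPT(p^-)\geq\max\{\frac1m\sum_j p_j^-,\max_j p_j^-\}$, as $p^-\in\mathcal U_\Gamma$. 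This yields nominal load at most $2\OPT^{\mathrm{rob}}(\mathcal U_\Gamma)$, and summing with the deviation bound gives $\Phi_\Gamma(T^\star_i)\leq5\OPT^{\mathrm{rob}}(\mathcal U_\Gamma)$.

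\emph{Main obstacle.} The key step is proving the $+\max_j p^-_j$ overshoot for the blockwise matching. Suppose that before round $r$ the loads satisfy $\max_i-\min_i\leq M:=\max_j p_j^-$. Pairing sorted loads with reversely sorted items keeps the spread at most $M$: the new maximum is at most $\max(\text{old min}+\text{largest item},\ \text{old max}+\text{smaller item})$, and the new minimum is at least the old minimum. Together with $\min_i\leq$ average, this gives $\max_i\leq\frac1m\sum_j p_j^-+M$.

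If this invariant turns out delicate, the fallback is a weaker constant. Take a greedy list-scheduling assignment under the constraint that each machine gets at most one job per block, and bound the overshoot by one job. Finally, the equivalence $\OPT^{\mathrm{rob}}(\mathcal U)\leq5$ follows directly from \Cref{lem:normalized-budget-profile} applied to $\mathcal T^\star$.
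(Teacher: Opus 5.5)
Your construction and accounting are the paper's. You use layers of $m$ jobs in deviation order, each assigned bijectively by pairing ascending current lower loads with descending lower sizes; the last layer should be padded with zero-size dummy jobs, which you should say. The nominal load is then at most $\frac1m\sum_j p_j^-+\max_j p_j^-\le 2\OPT^{\mathrm{rob}}(\mathcal U_\Gamma)$, and the deviation part is at most $3\OPT^{\mathrm{rob}}(\mathcal U_\Gamma)$ via \Cref{lem:rank-spread-cardinality-budget}. Bounding both nominal terms through the lower scenario is a valid alternative to the paper's use of $p_j^-\le p_j^+\le\OPT^{\mathrm{rob}}(\mathcal U_\Gamma)$.

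\textbf{The gap: the spread invariant.} The argument you give for the step you rightly call the crux does not work. First, your bound on the new maximum is false, because a middle machine is not controlled by the two extreme pairs. Take sorted old loads $a=(0,5,5)$, layer items $b=(5,5,0)$ and $M:=\max_j p_j^-=5$. The new loads are $(5,10,5)$, while $\max\{a_1+b_1,a_3+b_3\}=5$. Second, even granting that bound, combining it with ``new minimum $\ge$ old minimum'' only gives a new spread of at most $\max\{b_1,(a_m-a_1)+b_m\}$. This can reach $2M$, so the invariant would not propagate across layers. You need a pairwise comparison instead. For $s<t$, so that $a_s\le a_t$ and $b_s\ge b_t$, you have
\[
(a_s+b_s)-(a_t+b_t)\le b_s-b_t\le M
\qquad\text{and}\qquad
(a_t+b_t)-(a_s+b_s)\le a_t-a_s\le M,
\]
using $0\le b_t\le b_s\le M$ and the invariant for the old loads. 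This is exactly the paper's argument, and with it your proof is complete.

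\textbf{The fallback fails.} You should drop the fallback. Greedy list scheduling in an arbitrary order within a layer does not overshoot by only one job, because the last job of each layer is forced onto the single remaining machine. Suppose each of $m$ layers consists of $m-1$ jobs with $p_j^-=0$ followed by one job with $p_j^-=M$, and all deviations are negligible. Greedy then places all $m$ large jobs on the same machine, giving nominal load $mM$. Meanwhile $\OPT^{\mathrm{rob}}(\mathcal U_\Gamma)$ is close to $M$. The reverse-sorted pairing is precisely what prevents this.
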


\begin{proof}
Order the jobs in nonincreasing order of $p_j^+-p_j^-$ and partition this order into consecutive layers $L_1,L_2,\ldots$, each containing $m$ jobs except possibly the last. Thus, layer $L_r$ contains the jobs with ranks $(r-1)m+1,\ldots,\min\{rm,n\}$. We construct $\mathcal T^\star$ layer by layer, assigning every layer injectively to the machines.

Suppose that some layers have already been assigned, and order the current lower loads as $a_1\leq a_2\leq\cdots\leq a_m$. We maintain the invariant $\max_i a_i-\min_i a_i\leq\max_j p_j^-$. The invariant holds before the first layer is assigned. Add dummy jobs of lower size zero if the next layer contains fewer than $m$ jobs, and order the lower sizes in that layer as $b_1\geq b_2\geq\cdots\geq b_m\geq0$. Assign the job of lower size $b_t$ to the machine with current load $a_t$.

The new unsorted loads are $a_t+b_t$. For $s<t$, we have $(a_s+b_s)-(a_t+b_t)\leq b_s-b_t\leq\max_jp_j^-$ and $(a_t+b_t)-(a_s+b_s)\leq a_t-a_s\leq\max_jp_j^-$. Hence the invariant is preserved after assigning the layer.

Let $a_i$ denote the final lower loads. The invariant gives $\max_i a_i\leq\min_i a_i+\max_jp_j^-\leq\frac{1}{m}\sum_{j=1}^np_j^-+\max_jp_j^-$. For every budget $\Gamma\in[n]$, the lower instance belongs to $\mathcal U_\Gamma$, and therefore $\frac{1}{m}\sum_{j=1}^np_j^-\leq\OPT^{\mathrm{rob}}(\mathcal U_\Gamma)$. Moreover, for every job $j$, the instance in which only job $j$ takes its upper processing time belongs to $\mathcal U_\Gamma$. Thus, $p_j^+\leq\OPT^{\mathrm{rob}}(\mathcal U_\Gamma)$, and hence $\max_jp_j^-\leq\OPT^{\mathrm{rob}}(\mathcal U_\Gamma)$. It follows that $\sum_{j\in T^\star_i}p_j^-\leq2\OPT^{\mathrm{rob}}(\mathcal U_\Gamma)$ for every machine $i\in[m]$ and every budget $\Gamma\in[n]$.

It remains to bound the deviating part. Since every layer is assigned injectively, each machine receives at most one job from each layer. Therefore, the $\ell$-th largest value of $p_j^+-p_j^-$ on machine $i$ is at most $p_{j_{(\ell-1)m+1}}^+-p_{j_{(\ell-1)m+1}}^-$. Otherwise, machine $i$ would contain $\ell$ jobs with value strictly larger than $p_{j_{(\ell-1)m+1}}^+-p_{j_{(\ell-1)m+1}}^-$, all of which would belong to the first $\ell-1$ layers, contradicting the injective assignment of each layer. By \Cref{lem:rank-spread-cardinality-budget},
$$
\sum_{\ell=1}^{\Gamma}(p^+-p^-)_{T^\star_i,(\ell)}
\leq
\sum_{\ell=1}^{\Gamma}(p_{j_{(\ell-1)m+1}}^+-p_{j_{(\ell-1)m+1}}^-)
\leq
3\OPT^{\mathrm{rob}}(\mathcal U_\Gamma).
$$

Combining the lower and deviating parts gives $\Phi_\Gamma(T^\star_i)=\sum_{j\in T^\star_i}p_j^-+\sum_{\ell=1}^{\Gamma}(p^+-p^-)_{T^\star_i,(\ell)}\leq5\OPT^{\mathrm{rob}}(\mathcal U_\Gamma)$ for every machine $i\in[m]$ and every budget $\Gamma\in[n]$. By \Cref{lem:normalized-budget-profile}, $C_{\max}^{\mathrm{rob}}(\mathcal T^\star,\mathcal U)\leq5$, and therefore $\OPT^{\mathrm{rob}}(\mathcal U)\leq5$.
\end{proof}

The budget-oblivious schedule is obtained by applying \Cref{alg:general-uncertainty-identical} to the predicted instance $\hat q$ and the normalized uncertainty set $\mathcal U$.

\ThmObliviousCardinalityTradeoff*

\begin{proof}
Apply \Cref{thm:general-uncertainty-tradeoff} to the predicted instance $\hat q$ and the normalized uncertainty set $\mathcal U$. This gives a schedule $\mathcal S$ satisfying $C_{\text{max}}(\mathcal S,\hat q)\leq(1+2/\lambda)\OPT(\hat q)$ and $C_{\max}^{\mathrm{rob}}(\mathcal S,\mathcal U)\leq(2\lambda+4)\OPT^{\mathrm{rob}}(\mathcal U)$. By \Cref{lem:universal-cardinality-budget-profile}, $\OPT^{\mathrm{rob}}(\mathcal U)\leq5$, and therefore $C_{\max}^{\mathrm{rob}}(\mathcal S,\mathcal U)\leq10\lambda+20$. By \Cref{lem:normalized-budget-profile}, this is equivalent to $C_{\max}^{\mathrm{rob}}(\mathcal S,\mathcal U_\Gamma)/\OPT^{\mathrm{rob}}(\mathcal U_\Gamma)\leq10\lambda+20$ for every $\Gamma\in[n]$. The consistency guarantee follows directly from \Cref{thm:general-uncertainty-tradeoff}.
\end{proof}

The schedule is independent of $\Gamma$ and satisfies the stated robustness guarantee for every cardinality budget simultaneously. In contrast, the construction in \Cref{sec:budgeted-uncertainty} gives a sharper guarantee for a fixed budget but requires that budget as part of the input.

\section{A Simple Separation for Selection Problems}
\label{app:selection-separation}

The following observation shows that extending the framework beyond scheduling cannot be expected to work in a black-box way. Already for a problem with two feasible solutions and two scenarios, the predicted optimum and the robust optimum may be incompatible.

\begin{proposition}
\label{prop:two-solution-selection-separation}
For every constants $\alpha,\beta\ge 1$, there is a robust optimization problem with two feasible solutions and two scenarios such that no solution is both $\alpha$-consistent and $\beta$-robust.
\end{proposition}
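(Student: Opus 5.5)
The plan is to mirror the two-machine, one-job construction of \Cref{thm:interval-unrelated-no-constant-tradeoff}, stripped down to an abstract selection problem. We take a ground problem with two feasible solutions $x_1,x_2$ and an uncertainty set $\mathcal U=\{\hat c,c^+\}$ of two cost scenarios, where $c(x)$ denotes the cost of solution $x$ in scenario $c$. Consistency is measured against $\OPT(\hat c)=\min_x \hat c(x)$ and robustness against $\OPTR(\mathcal U)=\min_x\max_{c\in\mathcal U}c(x)$. Fix constants $\alpha,\beta\ge1$ and choose $A>\alpha$ and $B>\beta A$ (for instance $A=\alpha+1$ and $B=\beta A+1$).

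The scenarios are defined by
\[
\hat c(x_1)=1,\quad \hat c(x_2)=A,\quad c^+(x_1)=B,\quad c^+(x_2)=A .
\]
Since $B>A$, every scenario cost satisfies $\hat c\le c^+$ componentwise, so $\mathcal U$ is a legitimate two-scenario set and $c^+$ is the worst case. With these costs $\OPT(\hat c)=1$. The worst-case costs are $\max\{1,B\}=B$ for $x_1$ and $\max\{A,A\}=A$ for $x_2$, so $\OPTR(\mathcal U)=A$.

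We then check the two candidate solutions. Solution $x_2$ has predicted cost $A>\alpha=\alpha\OPT(\hat c)$, so it is not $\alpha$-consistent. Solution $x_1$ has worst-case cost $B>\beta A=\beta\OPTR(\mathcal U)$, so it is not $\beta$-robust. Hence no solution is both.

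There is no real obstacle in this argument. The only care needed is to make sure the instance qualifies as a robust optimization problem in the paper's sense: the prediction $\hat c$ must belong to $\mathcal U$, and the benchmarks must be the min--max ones. Both hold here.

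If a randomized version is also wanted, the argument of \Cref{thm:interval-unrelated-no-constant-tradeoff} applies verbatim. Take $\rho_0=(A-\alpha)/(A-1)$ and $B>\beta A/\rho_0$. Then expected consistency forces $x_1$ to be chosen with probability at least $\rho_0$, which makes the expected worst-case cost exceed $\beta A$.
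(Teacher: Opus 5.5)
Your proof is correct and is essentially the paper's argument: the paper uses the same two-solution, two-scenario cost pattern with a single parameter $M>\max\{\alpha,\beta\}$, which is your instance with $A=M$ and $B=M^2$, and it checks the same two ratios. The randomized extension goes beyond what the statement asks, but it is valid here because $c^+$ dominates $\hat c$ componentwise, so $c^+$ is the worst scenario for every solution.
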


\begin{proof}
Consider a minimization problem with feasible solutions $a$ and $b$, uncertainty set $\mathcal U=\{\hat q,q\}$, and predicted scenario $\hat q$. Let $M>\max\{\alpha,\beta\}$ and define the costs by
\[
c_{\hat q}(a)=1,\qquad c_{\hat q}(b)=M,
\qquad
c_q(a)=M^2,\qquad c_q(b)=M .
\]
The predicted optimum is $\OPT(\hat q)=1$, attained by $a$. The min--max robust optimum is $\OPT^{\mathrm{rob}}(\mathcal U)=M$, attained by $b$, since
\[
\max_{r\in\mathcal U} c_r(a)=M^2
\qquad\text{and}\qquad
\max_{r\in\mathcal U} c_r(b)=M .
\]
Thus, $a$ is not $\beta$-robust, because its robustness ratio is $M>\beta$, while $b$ is not $\alpha$-consistent, because its consistency ratio is $M>\alpha$. Hence, no feasible solution is both $\alpha$-consistent and $\beta$-robust.
\end{proof}

This two-choice example embeds into many selection-type problems whenever the feasible region can be restricted to choosing one of two alternatives. This includes simple variants of knapsack, matching, or path-selection problems by making all other feasible choices infeasible or prohibitively expensive. Thus, without additional structure, even very simple selection problems can fail to admit simultaneous constant consistency and robustness.

\end{document}